\newcounter{sarrow}
\newcommand{\cammino}[1]{\mathtt{path}(#1)}
\newcommand{\fuse}{\bowtie}
\newcommand{\names}{\mathtt{n}}
\newcommand{\bn}{\mathtt{bn}}
\newcommand{\aeq}{=_{\alpha}}
\newcommand{\lts}[1]{\xrightarrow{#1}}
\newcommand{\para}{\parallel}
\newcommand{\mem}[4]{\langle{#1},{#2},{#3}\rangle\cdot{#4}}
\newcommand{\proc}{\triangleright}
\newcommand{\con}{\equiv}
\newcommand{\precon}{\preceq}
\newcommand{\rpar}[1]{\langle 1 \rangle \cdot{#1}}
\newcommand{\lpar}[1]{\langle  0 \rangle \cdot{#1}}
\newcommand{\emp}{\langle\rangle}
\newcommand{\ltsk}[2]{\xrightarrow{{#1}:{#2}}}
\newcommand{\rccst}[1]{\xhookrightarrow{#1}}
\newcommand{\bk}{\rightsquigarrow}
\newcommand{\fw}{\rightarrow}
\newcommand{\sdef}{::=}
\newcommand{\sdefco}{::=^\mathtt{co}}
\newcommand{\ou}{\;\,|\;\,}
\newcommand{\nil}{\mathbf{0}}
\newcommand{\co}[1]{\overline{#1}}
\newcommand{\rev}{^\bullet}
\newcommand{\event}[3]{\langle #1,#2,#3\rangle }
\newcommand{\hide}{\backslash}
\newcommand{\entry}[1]{\langle #1 \rangle}
\newcommand{\st}{\,\mid\,}
\newcommand{\procs}{\mathcal{P}}
\newcommand{\procr}{\mathcal{P}_R}
\newcommand{\ctx}[2]{#1[#2]}
\newcommand{\fn}{\mathtt{fn}}
\newcommand{\trans}[1]{\ensuremath{\,[\/{#1}\/\rangle}\,}
\newcommand{\pre}[1]{\ensuremath{\!\!~^{\bullet}{#1}}}
\newcommand{\post}[1]{\ensuremath{{#1}{^{\bullet}}}}
\newcommand{\flt}[1]{\ensuremath{[\![{#1}]\!]}}
\newcommand{\subnet}[3]{\ensuremath{#1|_{#2}^{#3}}}
\newcommand{\setenum}[1]{\{#1\}}
\newcommand{\setcomp}[2]{\{{#1} \mid {#2}\}}
\newcommand{\card}[1]{\ensuremath{|#1|}}
\newcommand{\reachMark}[1]{\ensuremath{\mathcal{M}_{#1}}}
\newcommand{\firseq}[2]{\ensuremath{\mathcal{R}^{#1}_{#2}}}
\newcommand{\states}[1]{\ensuremath{\mathbb{E}(#1)}}
\newcommand{\lead}[1]{\ensuremath{\mathit{lead}(#1)}}
\newcommand{\start}[1]{\ensuremath{\mathit{start}(#1)}}
\newcommand{\fs}{\textsf{fs}}
\newcommand{\remains}[1]{\ensuremath{\mathit{tail}(#1)}}
\newcommand{\MC}[1]{\ensuremath{\sim}}
\newcommand{\marko}[1]{\ensuremath{\mathsf{#1}}}
\newcommand{\ca}{\textsf{CA}}
\newcommand{\unet}{\textsf{UN}}
\newcommand{\pastdec}[1]{\hat #1}
\newcommand{\ccstonet}[1]{\mathcal{N}(#1)}
\newcommand{\zeronetBC}{\ccstonet \nil}
\newcommand{\prefnetBC}[2]{\ccstonet {#1.#2}}
\newcommand{\netBCsimple}[1]{\ccstonet {#1}}
\newcommand{\parnetBC}[2]{\ccstonet {#1 \| #2}}
\newcommand{\choicenetBCm}[2]{+_{#1} #2}
\newcommand{\restnetBC}[2]{\ccstonet {#1\setminus#2}}
\newcommand{\pardec}[1]{{\sf\|_{#1}}}
\newcommand{\plusdec}[1]{{\sf +_{#1}}}
\newcommand{\restdec}[1]{{\sf \setminus_{#1}}}
\newcommand{\lbl}[1]{\ell(#1)}
\newcommand{\newinenc}[1]{\textcolor{magenta}{#1}}
\newcommand{\ancestor}[1]{{\rho}(#1)}
\newcommand{\anc}{\rightharpoondown}
\newcommand{\mrkof}[1]{\mathbf{\mu}({#1})}
\DeclareRobustCommand{\cev}[1]{%
  \mathpalette\do@cev{#1}%
}
\newcommand{\do@cev}[2]{%
  \fix@cev{#1}{+}%
  \reflectbox{$\m@th#1\overrightarrow{\reflectbox{$\fix@cev{#1}{-}\m@th#1#2\fix@cev{#1}{+}$}}$}%
  \fix@cev{#1}{-}%
}
\newcommand{\fix@cev}[2]{%
  \ifx#1\displaystyle%
    \mkern#23mu
  \else
    \ifx#1\textstyle%
      \mkern#23mu
    \else
      \ifx#1\scriptstyle%
        \mkern#22mu
      \else
        \mkern#22mu
      \fi
    \fi
  \fi
}
\newcommand{\revnet}[1]{\cev{#1}}
\newcommand{\CCS}{CCS\xspace}
\newcommand{\Haskell}{{\sf Haskell}\xspace}
\newcommand\revk[1]{%
\stepcounter{sarrow}%
\mathrel{\begin{tikzpicture}[baseline= {( $ (current bounding box.south) + (0,-0.5ex) $ )}]
\node[inner sep=.5ex] (\thesarrow) {$\scriptstyle #1$};
\path[draw,<-,decorate,
  decoration={zigzag,amplitude=1.2pt,segment length=1.5mm,pre=lineto,pre length=6pt}]
    (\thesarrow.south east) -- (\thesarrow.south west);
\end{tikzpicture}}%
}
\newcommand{\rltsk}[2]{\revk{{#1}:{#2}}}
\providecommand*\ltsc[2][]{%
  \ext@arrow 0055{\arrowfill@\relbar\relbar\rightarrowtriangle}{#1}{#2}}
\definecolor{mymagenta}{rgb}{0.5,0,0.5}
\definecolor{myred}{rgb}{0.5,0,0}
\definecolor{mygreen}{rgb}{0,0.4,0}
\definecolor{myblue}{rgb}{0,0,0.6}
\definecolor{lightblue}{rgb}{0,0.5,1}
\definecolor{mygray}{gray}{0.5}
\definecolor{lightgray}{gray}{0.95}
\newcommand{\macronet}[2]{\ensuremath{\langle S_{#1}{#2}, T_{#1}{#2}, F_{#1}{#2}, \marko{m}_{#1}{#2}\rangle}}
\newcommand{\macronetbis}[3]{\ensuremath{\langle S_{#1}{#2}, T_{#1}{#2}, F_{#1}{#2}, {#3}\rangle}}
\newcommand{\zero}{\ensuremath{\mathsf{0}}}
\newcommand{\Er}{U}
\newcommand{\er}{u}
\newcommand{\fe}{\mathtt{f}}
\newcommand{\re}{\mathtt{r}}
\newcommand{\macrorevnet}[2]{\ensuremath{\langle S_{#1}{#2}, T_{#1}{#2}, 
\Er_{#1}{#2}, F_{#1}{#2}, \marko{m}_{#1}{#2}\rangle}}
\newcommand{\rcn}{{\sf rUN}}
\newcommand{\bundle}{\ensuremath{\mapsto}}
\newcommand{\un}[1]{\underline{#1}}
\newcommand{\muset}[1]{\ensuremath{\partial #1}} 
\newcommand{\Comment}[1]{}
\newcommand{\revunr}[2]{\revnet{#1}^{#2}}
\newcommand{\keypl}[1]{\ensuremath{\mathcal{K}_{#1}}}
\lstdefinestyle{INLINE}{
}
\lstdefinestyle{DISPLAY}{
  numberstyle=\tiny\tt\color{mygray},
  numbersep=1em,
  xleftmargin=4mm, 
  backgroundcolor=\color{white},
}
\lstdefinestyle{FLOAT}{
  float,
  captionpos=b,
}
\bfseries\color{myblue},
\newcommand{\HI}{\lstinline[language=Haskell,style=INLINE]}
\newcommand{\InputHaskellD}[4][]{%
  \lstinputlisting[
    language=Haskell,
    linerange=#2-#3,
    style=DISPLAY,#1]{./src/#4}%
}
\crefname{defi}{Definition}{Definitions}
\Crefname{defi}{Definition}{Definitions}
\crefname{cor}{Corollary}{Corollaries}
\Crefname{cor}{Corollary}{Corollaries}
\crefname{lem}{Lemma}{Lemma}
\Crefname{lem}{Lemma}{Lemma}
\begin{document}
\title{A truly concurrent semantics for reversible CCS}
\thanks{This work has
  been partially supported by the BehAPI project funded by the EU H2020 RISE
  under the Marie Sklodowska-Curie action (No: 778233), by the Italian PRIN
  2020 project NiRvAna -- Noninterference and Reversibility Analysis in
  Private Blockchains, 
  the Italian Ministry of Education, University and Research through the PRIN 2022 project `` Developing Kleene Logics and their Application'' (DeKLA), project code: 2022SM4XC8, 
  the INdAM-GNCS E53C22001930001 project
  RISICO -- Reversibilit\`a in Sistemi Concorrenti: Analisi Quantitative e
  Funzionali, and the European Union - NextGenerationEU SEcurity and RIghts
  in the CyberSpace (SERICS) Research and Innovation Program PE00000014,
  projects STRIDE and SWOP}

\author[H. Melgratti]{Hern\'an Melgratti\lmcsorcid{0000-0003-0760-0618}}[a]

\author[C. A. Mezzina]{Claudio Antares
  Mezzina\lmcsorcid{0000-0003-1556-2623}}[b]

\author[G. M. Pinna]{G. Michele Pinna\lmcsorcid{0000-0001-8911-1580}}[c]

\address{ICC - Universidad de Buenos Aires - Conicet, Argentina}
\address{Dipartimento di Scienze Pure e Applicate, Universit\`a di Urbino, Italy}
\address{ Universit\`a di Cagliari, Italy}

%
%

\begin{abstract}
  Reversible CCS (RCCS) is a well-established, formal model for reversible
  communicating systems, which has been built on top of the classical Calculus
  of Communicating Systems (CCS). In its original formulation, each CCS
  process is equipped with a memory that records its performed actions, which
  is then used to reverse computations. More recently, abstract models for
  RCCS have been proposed in the literature, essentialy, by directly
  associating RCCS processes with (reversible versions of) event structures.
  In this paper
we propose a different abstract model: starting from one of the well-known
  encoding of CCS into Petri nets we apply a recently proposed approach to
  incorporate causally-consistent reversibility to Petri nets, obtaining as
  result the (reversible) net counterpart of every RCCS term.

  \keywords{Petri Nets \and Reversible CCS \and Concurrency.}
\end{abstract}

\maketitle

\section{Introduction}\label{sec:intro}
The calculus for concurrent systems (CCS)~\cite{Milbook} serves as one of the
foundational frameworks for concurrent systems. Typically, these systems are
described as the parallel composition of \emph{processes} (also referred to as
components), which interact by sending and receiving messages through named
channels.
Processes are defined in terms of communication actions performed over
specific channels. For example, we use $a$ and $\co a$ to respectively
represent a receive and a send action over the channel $a$.
Basic actions can be combined using prefixing ($\_ . \_$), choice ($\_ + \_$),
and parallel ($\_\parallel\_$) operators.
Initially, the semantics of CCS were based on the \emph{interleaved} approach,
which considers only executions that arise from a single processor.
Consequently, parallelism was reduced to non-deterministic choices and
prefixing.
For instance, under the interleaved approach, the CCS processes
$a \parallel b$ and $a.b\ +\ b.a$ are considered \emph{equivalent}. This means
that the framework does not distinguish between a process that can perform
actions $a$ and $b$ concurrently and one that sequentially executes these
actions in any possible order (interleaving/schedule).
To address this limitation, subsequent research aimed to equip CCS with
\emph{true concurrent} semantics, adopting styles similar to Petri
nets~\cite{ReiBook} and Event Structures~\cite{Win88,NPW:PNES}.
It has been shown that every CCS process can be associated with a
corresponding Petri net that can mimic its computations.
Various flavors of Petri nets have been explored in the literature, including
\emph{occurrence} nets~\cite{Goltz90}, a variant of \emph{Conditions/Events}
nets~\cite{DNMActa88}, and \emph{flow} nets~\cite{BCIC94}.
The works in~\cite{Win:ES} and~\cite{BCIC94} have additionally shown that the
computation of a CCS process can be represented by  event structures.

In the last decades, many efforts were made to endow computation models with
reversible semantics~\cite{wg1,wg2}. In particular, two different models have
been proposed for CCS: reversible CCS (RCCS)~\cite{rccs,rccsnew} and CCS with communication keys (CCSK)~\cite{ccsk}.
Both of them incorporate a logging mechanism in the operational semantics of
CCS that enables the undoing of computation steps. Moreover, it has been shown
that they are isomorphic~\cite{LaneseMM21} since they only differ on how they
log information about past computations: while RCCS relies on some form of
\emph{memory/monitor}, CCSK uses \emph{keys}.
Previous approaches have also developed true concurrent semantics for
reversible versions of CCS. For instance, it has been shown that CCSK can be
associated with reversible bundle event
structures~\cite{GraversenPY18,GraversenPY2021}.
Also configuration structures have been associated to RCCS \cite{AubertC20}.
Nonetheless, we still lack a Petri net model for reversible CCS processes.
We may exploit some recent results that connect reversible occurrence nets
with reversible event
structures~\cite{MelgrattiMU20,MelgrattiM0PU20,MelgrattiMP21,tocl} to indirectly
recover a Petri net model from the reversible bundle event structures defined
in~\cite{GraversenPY18}.
However, we follow a different approach, which is  more direct:
\begin{enumerate}
\item We encode CCS processes into a mild generalization of occurrence nets,
  namely \emph{unravel nets}, in the vein of Boudol and
  Castellani~\cite{BCIC94}.
\item We show that \emph{unravel nets} can be made \emph{causally-consistent}
  reversible by applying the approach in \cite{MelgrattiM0PU20}.
\item We finally show that the reversible unravel nets derived by our encoding
  are an interpretation of RCCS terms.
\end{enumerate}
An interesting aspect of the proposed encoding is that it highlights that all
the information needed for reversing an RCCS process is already \emph{encoded}
in the structure of the net corresponding to the original CCS process, i.e.,
RCCS memories are represented by the structure of the net.
Concretely, if an RCCS process $R$ is a reachable process from a  CCS process $P$ with empty memory,
then the encoding of $R$ is retrieved from the encoding of $P$, what changes
is the position of the markings.
Consider the CCS process $P = a.\zero$ that executes $a$ and then terminates.
It can be encoded as the Petri net on the left in \Cref{fig:intro} (the usage
of the apparently redundant coloured places in the postset of $a$ will be made clearer
in \Cref{sec:nets}).

\begin{figure}[t]
  \centering
  \scalebox{0.8}{\begin{tikzpicture}[scale=.8]
\tikzstyle{inhibitorred}=[o-, draw=red,thick]
\tikzstyle{unravel}=[->, draw=blue,thick]
\tikzstyle{pre}=[<-,thick]
\tikzstyle{post}=[->,thick]
\tikzstyle{readblue}=[-, draw=blue,thick]
\tikzstyle{transition}=[rectangle, draw=black,thick,minimum size=5mm]
\tikzstyle{revtransition}=[rectangle, draw=red!80,fill=red!30,thick,minimum size=5mm]
\tikzstyle{place}=[circle, draw=black,thick,minimum size=5mm]
\tikzstyle{placeblu}=[circle, draw=blue!80,fill=blue!20,thick,minimum size=5mm]

\node[place,tokens=1] (p1) at (0,0)  {};
\node[placeblu] (p2) at (3,-1) {};
\node[place,] (p3) at (3,1) {};

\node[transition] (b) at (1.5,0) {$a$}
edge[pre] (p1)
edge[post](p2)
edge[post](p3);

\end{tikzpicture}}\hspace{1.3cm}
  \scalebox{0.8}{\begin{tikzpicture}[scale=.8]
\tikzstyle{inhibitorred}=[o-, draw=red,thick]
\tikzstyle{unravel}=[->, draw=blue,thick]
\tikzstyle{pre}=[<-,thick]
\tikzstyle{post}=[->,thick]
\tikzstyle{readblue}=[-, draw=blue,thick]
\tikzstyle{transition}=[rectangle, draw=black,thick,minimum size=5mm]
\tikzstyle{revtransition}=[rectangle, draw=red!80,fill=red!30,thick,minimum size=5mm]
\tikzstyle{place}=[circle, draw=black,thick,minimum size=5mm]
\tikzstyle{placeblu}=[circle, draw=blue!80,fill=blue!20,thick,minimum size=5mm]

\node[place,tokens=1] (p1) at (0,0)  {};
\node[placeblu] (p2) at (3,-1) {};
\node[place] (p3) at (3,1) {};

\node[transition] (b) at (1.5,0) {$a$}
edge[pre] (p1)
edge[post](p2)
edge[post](p3);
\node[revtransition] (br) at (1.5,1) {$\un{a}$}
edge[post] (p1)
edge[pre](p2)
edge[pre](p3);

\end{tikzpicture}}\hspace{1.3cm}
  \scalebox{0.8}{\begin{tikzpicture}[scale=.8]
\tikzstyle{inhibitorred}=[o-, draw=red,thick]
\tikzstyle{unravel}=[->, draw=blue,thick]
\tikzstyle{pre}=[<-,thick]
\tikzstyle{post}=[->,thick]
\tikzstyle{readblue}=[-, draw=blue,thick]
\tikzstyle{transition}=[rectangle, draw=black,thick,minimum size=5mm]
\tikzstyle{revtransition}=[rectangle, draw=red!80,fill=red!30,thick,minimum size=5mm]
\tikzstyle{place}=[circle, draw=black,thick,minimum size=5mm]
\tikzstyle{placeblu}=[circle, draw=blue!80,fill=blue!20,thick,minimum size=5mm]

\node[place] (p1) at (0,0)  {};
\node[placeblu,tokens=1] (p2) at (3,-1) {};
\node[place,tokens=1] (p3) at (3,1) {};

\node[transition] (b) at (1.5,0) {$a$}
edge[pre] (p1)
edge[post](p2)
edge[post](p3);
\node[revtransition] (br) at (1.5,1) {$\un{a}$}
edge[post] (p1)
edge[pre](p2)
edge[pre](p3);

\end{tikzpicture}}
  \caption{Encoding of $R = \emp\proc a.\zero$}\label{fig:intro}
\end{figure}

The reversible version of $P$ is $R = \emp\proc a.\zero$, where $\emp$ denotes
an initially empty memory.
According to RCCS semantics, $R$ evolves to $R' = \mem{*}{a}{\zero}\emp\proc
\zero$ by executing $a$. The memory $\mem{*}{a}{\zero}\emp$ in $R'$ indicates
that it can go back to the initial process $R$ by undoing $a$.
Note that the net corresponding to $P$ (on the left) contains all the
necessary information to reverse the action $a$; intuitively, the action $a$
can be undone by firing it in the opposite direction (i.e., by consuming
tokens from the postset and producing them in its preset), or equivalently, by
executing a reversing transition $\un a$ as depicted in the net shown in the
middle of~\Cref{fig:intro}.
Furthermore, it is important to highlight that the net on the right of
\Cref{fig:intro} corresponds to the derivative $R'$.
The coloured place carries the information stored in the memory of the derivative $R'$.
Consequently, the encoding of a CCS term as a net already encompasses all the
information required for its reversal, which stands in contrast to the
additional memories needed in the case of RCCS.
This observation provides a straightforward and nearly immediate true
concurrent representation of RCCS processes, effectively capturing their
reversible behaviour.

\paragraph{Organization of the paper.} The paper is structured as follows:
after establishing essential notation, we provide a brief overview of CCS and
RCCS in \Cref{sec:sys}.
Next, in \Cref{sec:nets}, we present a concise summary of Petri nets and
introduce the concept of \emph{unravel} nets, followed by their reversible
counterpart. The encoding of CCS into unravel nets and the mapping of RCCS
terms into reversible unravel nets, along with correspondence results, are described in \Cref{sec:coding}.
In the final section, we draw insightful conclusions and discuss potential
avenues for future developments.
Additionally, we present a practical implementation of the encoding and
simulation of the execution in Haskell in \Cref{sec:impl}.
A preliminary version of this work has been published
as~\cite{MelgrattiMP21b}. In this version we have extended the scope and
applicability of the proposed approach. We move from finite processes to
infinite ones (i.e., recursive processes) by considering terms defined
coinductively. Secondly, in this version we provide full and rigorous proofs
of the key results. Finally, we provide a Haskell implementation of the
encoding that allows for the simulation of the execution of encoded CCS
processes. This practical implementation further exemplifies the feasibility
and effectiveness of the approach.


\subsection*{Preliminaries}\label{sec:prelimin}
We recall some notation that we will use in the paper.
We denote the set of natural numbers as $\mathbb{N}$.
Let $A$ be a set,  a \emph{multiset} of $A$ is defined as a function $m : A
\rightarrow \mathbb{N}$.
The set of multisets of $A$ is denoted by $\muset{A}$.
We assume the usual operations on multisets, such as union $+$ and difference
$-$.
For multisets $m, m' \in \muset{A}$, we write $m \subseteq m'$ to indicate
that $m(a) \leq m'(a)$ for all $a \in A$.
Additionally,
we define $\flt{m}$ as the multiset
where $\flt{m}(a) = 1$ if $m(a) > 0$, and $\flt{m}(a) = 0$ otherwise.
When a multiset $m$ of $A$ is a set, i.e., $m = \flt{m}$, we write $a \in m$
to denote that $m(a) \neq 0$. In this case, we often confuse the multiset $m$
with the set $\setcomp{a\in A}{m(a) \neq 0}$ or a subset $X\subseteq A$ with
the multiset $X(a) = 1$ if $a\in A$ and $X(a) = 0$ otherwise.
We also employ standard set operations such as $\cap$, $\cup$, or $\setminus$,
and, with a slight abuse of notation, write $\emptyset$ for the multiset $m$
such that $\flt{m} = \emptyset$.

Given a relation $\mathcal{R}$, we indicate with $\mathcal{R}^{*}$ its
reflexive and transitive closure.

\section{CCS and reversible CCS}\label{sec:sys}

Let $\mathcal{A}$ be a set of actions, denoted as $a, b, c, \ldots$, and let
$\co{\mathcal{A}} = \{ \co{a} \st a\in \mathcal{A}\}$ be the set of their
corresponding co-actions. The set containing all possible actions is denoted
by $\mathtt{Act}=\mathcal{A}\cup \co{\mathcal{A}}$.
We use $\alpha$ and $\beta$ to represent elements from
$\mathtt{Act}_{\tau} = \mathtt{Act} \cup \{\tau\}$, where $\tau$ is a symbol not
present in $\mathtt{Act}$, i.e., $\tau \notin \mathtt{Act}$, and denotes a
\textit{silent} action. We assume that for each $\alpha \in \mathtt{Act}$ we have that $\bar{\bar{\alpha}} = \alpha$.

\begin{figure}[th]
  \[
    \begin{array}{llll}
      (\text{Actions}) \quad
      & \alpha
      & \sdef a \ou \co{a} \ou \tau
      \\[10pt]
      (\text{CCS Processes}) \quad
      & P,Q
      &\sdefco  \sum_{i \in I} \alpha_i.P_i \ou (P\para Q) \ou P\backslash a
    \end{array}
  \]
  \caption{CCS Syntax}
  \label{fig:ccs_syn}
\end{figure}
\noindent 
The syntax of CCS is presented in Figure~\ref{fig:ccs_syn}.
A prefix (or action) in CCS can take one of three forms: an input $a$, an
output $\co{a}$, or the silent action $\tau$.
A term of the form $\sum_{i \in I} \alpha_i.P_i$ represents a process that
non-deterministically starts by selecting and performing some action
$\alpha_i$ and then continues as $P_i$.
We use $\nil$, the idle process, when $I=\emptyset$ in place of
$\sum_{i \in I} \alpha_i.P_i$. Similarly, we use $\alpha_i.P$ for a unitary
sum where $I$ is the singleton $\{i\}$.
The term $P\para Q$ represents the parallel composition of processes $P$ and
$Q$.
An action $a$ can be restricted to be visible only inside process $P$, denoted
as $P \backslash a$. Restriction is the only binder in CCS, where $a$ is bound
in $P \backslash a$.
We addressed the representation of infinite processes by adopting an approach
initiated by~\cite{castagna2009theory}. Instead of fixing a syntactic
representation of recursion, we simplified the treatment by employing infinite
regular trees. Throughout this paper, in Figure~\ref{fig:ccs_syn} and beyond,
we use the symbol $\sdefco$ to indicate that the productions should be
interpreted \emph{coinductively}. As a result, the set of processes is the
greatest fixed point of the (monotonic) functor over sets defined by the
grammar above~\cite{BarbaneraDd22}. Consequently, a process is a potentially
infinite, \emph{regular} term coinductively generated by the grammar in
Figure~\ref{fig:ccs_syn}. A term is considered regular if it consists of
finitely many \emph{distinct} subterms. The language generated by the
coinductive grammar is thus finitely representable either using the so-called
$\mu$ notation~\cite{Pierce02} or as solutions of finite sets of
equations~\cite{infiniteTree}. For a more comprehensive treatment, interested
readers are referred to~\cite{infiniteTree}.

We represent the set of all CCS processes as $\procs$.
We denote the set of names of a process $P$ as $\names(P)$, and we use
$\fn(P)$ and $\bn(P)$ to represent the sets of free and bound names in $P$,
respectively. (These functions can be straightforwardly defined by
coinduction.)

\begin{defi}[CCS Semantics]
  The operational semantics of CCS is defined as the LTS
  $(\procs,\mathtt{Act}_{\tau}, \fw )$ where the transition relation $\fw$ is
  the smallest relation induced by the rules in Figure~\ref{fig:ccs_sem}.
\end{defi}

\noindent 
Let us provide some comments on the rules presented in
Figure~\ref{fig:ccs_sem}. The \textsc{act} rule indicates that a
non-deterministic choice proceeds by executing one of its prefixes $\alpha_z$
and transitions to the corresponding continuation $P_z$. The \textsc{par-l}
and \textsc{par-r} rules allow the left and right processes of a parallel
composition to independently execute an action while the other remains
unchanged. The \textsc{syn} rule regulates synchronisation, allowing two
processes in parallel to perform a handshake. Lastly, the \textsc{hide} rule
restricts a certain action from being further propagated.

\begin{figure}[t] 
  \begin{mathpar}
    \inferrule*[Right=\raisebox{.1cm}{{\scriptsize(\textsc{act})}}]
    {z\in I}
    { \sum_{i\in I}\alpha_i.P_i\lts{\alpha_z} P_z }
    \and
    \inferrule*[Right={\scriptsize(\textsc{par-l})}]
    {P\lts{\alpha} P' }
    {\vphantom{ \sum_{i\in I}\alpha_i.P_i\lts{\alpha_z} P_z }P\para Q \lts{\alpha} P'\para Q}
    \and
    \inferrule*[Right={\scriptsize(\textsc{par-r})}]
    {Q\lts{\alpha} Q' }
    {\vphantom{ \sum_{i\in I}\alpha_i.P_i\lts{\alpha_z} P_z }P\para Q \lts{\alpha} P\para Q'}
    \\
    \inferrule*[Right={\scriptsize(\textsc{syn})}]
    {P\lts{\alpha} P' \and Q\lts{\bar{\alpha}} Q'}
    {P\para Q \lts{\tau} P'\para Q'}
    \and
    \inferrule*[Right={\scriptsize(\textsc{r-res})}]
    {P\lts{\alpha} P' \and \alpha\notin \{a,\bar{a}\}}
    {P\backslash a\lts{\alpha} P'\backslash a }
  \end{mathpar}
  \caption{CCS semantics}
  \label{fig:ccs_sem}
\end{figure}

\subsection{Reversible CCS}
Reversible CCS (RCCS)~\cite{rccs,rccsnew} is a reversible variant of CCS. In
RCCS, processes are equipped with a \textit{memory} that stores information
about their past actions.
The syntax of RCCS, shown in Figure~\ref{fig:rccs_syntax}, includes the same
constructs as the original CCS formulation, but with the addition of
reversible processes.
A reversible process in RCCS can take one of the following forms: a
\emph{monitored} process $m \proc P$ where $m$ represents the memory, and
$P$ is a CCS process; the parallel composition $R\para S$ of the reversible
processes $R$ and $S$; and the restriction $R\hide a$, where the action $a$ is
restricted to the process $R$.
A \textit{memory} is essentially a stack of events that encodes the history of
actions previously performed by a process. The left-most element in the memory
corresponds to the very last action executed by the monitored process.
Memories in RCCS can contain three different kinds of events\footnote{In this
  paper, we adopt the original RCCS semantics with partial synchronisation.
  Later versions, such as \cite{rccsnew}, employ communication keys to
  uniquely identify actions.}: \emph{partial} synchronisations
$\event{*}{\alpha}{Q}$, \emph{full} synchronisations $\event{m}{\alpha}{Q}$,
and memory \emph{splits} $\entry{0}$ and $\entry{1}$.
In a synchronisation, whether partial or full, the action $\alpha$ and the
process $Q$ serve specific purposes in recording the selected action $\alpha$
of a choice and the discarded branches $Q$. The technical distinction between
partial and full synchronisation will become evident when describing the
semantics of RCCS.
Events $\entry{0}$ and $\entry{1}$ represent the splitting of a monitored process into
two parallel ones, respectively the left one ($\entry{0}$) and the right one ($\entry{1}$).
The empty memory is represented by $\emp$.
Let us note that in RCCS, memories also serve as unique process identifiers,
 and this will be handy when undoing a full synchronisation.

\begin{figure}[t]
  \[
    \begin{array}{llll}
      (\text{CCS Processes}) \quad
      &P,Q
      &\sdefco \sum_{i \in I} \alpha_i.P_i
        \ou (P\para Q)
        \ou P\backslash a
      \\[10pt]
      (\text{RCCS Processes})\quad
      &R,S
      &\sdef m \proc P \ou (R\para S) \ou R\backslash a
      \\[10pt]
      (\text{Memories})\quad
      &m
      &\sdef\mem{*}{\alpha}{Q}{m}
        \ou \mem{m}{\alpha}{Q}{m'}
        \ou \lpar{m} \ou \rpar{m} \ou \emp
    \end{array}
  \]
  \caption{RCCS syntax}
  \label{fig:rccs_syntax}
\end{figure}

\begin{figure}[t] 
    \centering
    $\begin{array}{c}
    \inferrule*[Left=\raisebox{.1cm}{\scriptsize(\textsc{r-act})}]
    { }
    {m\proc \sum_{i\in I}\alpha_i.P_i\ltsk{m}{\alpha^{}_z} \mem{*}{\alpha^{z}_z}{\sum_{i\in I\setminus\{z\}}\alpha_i.P_i}{m}\proc P_z }
    \\[10pt]

    \inferrule*[Right=\raisebox{.1cm}{\scriptsize(\textsc{r-act$\rev$})}]
    { }
    {\mem{*}{\alpha^{z}_z}{\sum_{i\in I\setminus\{z\}}\alpha_i.P_i}{m}\proc P_z \rltsk{m}{\alpha_z}m\proc \sum_{i\in I}\alpha_i.P_i}
    \end{array}$
    
    \begin{align*}
    \inferrule*[Left={\scriptsize(\textsc{r-par-l})}]
    {R\ltsk{m}{\alpha} R'}
    {\vphantom{R\para S \rltsk{m}{\alpha} R'\para S}R\para S \ltsk{m}{\alpha} R'\para S}\:\ 
    &
    \ \:\inferrule*[Right={\scriptsize(\textsc{r-par-l}$\rev$)}]
    {R\rltsk{m}{\alpha} R'}
    {R\para S \rltsk{m}{\alpha} R'\para S}
    \\[8pt]
    \inferrule*[Left={\scriptsize(\textsc{r-par-r})}]
    {S\ltsk{m}{\alpha} S'}
    {\vphantom{R\para S \rltsk{m}{\alpha} R\para S'}R\para S \ltsk{m}{\alpha} R\para S'}\:\ 
    &
    \ \:\inferrule*[Right={\scriptsize(\textsc{r-par-r}$\rev$)}]
    {S\rltsk{m}{\alpha} S'}
    {R\para S \rltsk{m}{\alpha} R\para S'}
    \\[8pt]
    \inferrule*[Left={\scriptsize(\textsc{r-syn})}]
    {R\ltsk{m_1}{\alpha} R' \and S\ltsk{m_2}{\bar{\alpha}} S'}
    {\vphantom{R_{m_2@m_1}\para S_{m_1@m_2} \rltsk{m_1,m_2}{\tau} R'\para S'}R\para S \ltsk{m_1,m_2}{\tau} R'_{m_2@m_1}\para S'_{m_1@m_2}}\:\ 
    &
    \ \:\inferrule*[Right={\scriptsize(\textsc{r-syn}$\rev$)}]
    {R\rltsk{m_1}{\alpha} R' \and S\rltsk{m_2}{\bar{\alpha}} S'}
    {R_{m_2@m_1}\para S_{m_1@m_2} \rltsk{m_1,m_2}{\tau} R'\para S'}
    \\[8pt]
    \inferrule*[Left={\scriptsize(\textsc{r-res})}]
    {R\ltsk{m}{\alpha} R' \and \alpha\notin \{a,\bar{a}\}}
    {\vphantom{R\backslash a \rltsk{m}{\alpha} R'\backslash a}R\backslash a \ltsk{m}{\alpha} R'\backslash a}\:\ 
    &
    \ \:\inferrule*[Right={\scriptsize(\textsc{r-res}$\rev$)}]
    {R \rltsk{m}{\alpha} R' \and \alpha\notin \{a,\bar{a}\}}
    {R\backslash a \rltsk{m}{\alpha} R'\backslash a}
    \\[8pt]
    \inferrule*[Left={\scriptsize(\textsc{r-equiv})}]
    {R\con R' \quad R' \ltsk{m}{\alpha}S' \quad  S'\con S}
    {\vphantom{R\rltsk{m}{\alpha} S}R\ltsk{m}{\alpha} S}\ \:
    &
    \ \:\inferrule*[Right={\scriptsize(\textsc{r-equiv}$\rev$)}]
    {R\con R' \quad R' \rltsk{m}{\alpha} S' \quad S'\con S}
    {R\rltsk{m}{\alpha} S}
    \end{align*}
    \caption{RCCS semantics}
    \label{fig:rccs_sem}
\end{figure}

We define the following sets: the set $\procr$ of all RCCS processes, the set
$\mathcal{M}$ of all possible memories, and
$\hat{\mathcal{M}}=\mathcal{M}\cup\mathcal{M}^2$, which includes individual as
well as pairs of memories. We let $\hat{m}$ to range over the set
$\hat{\mathcal{M}}$.

As for CCS, the only binder in RCCS is restriction, which applies at the level
of both CCS and RCCS processes. Consequently, we extend the functions
$\names$, $\fn$, and $\bn$ to RCCS processes and memories accordingly.

\begin{defi}
  The operational semantics of RCCS is defined as a pair of LTSs sharing the
  same set of states and labels: a forward LTS
  $(\procr,\hat{\mathcal{M}} \times \mathtt{Act}_{\tau} , \fw )$ and a
  backward LTS $(\procr,\hat{\mathcal{M}} \times \mathtt{Act}_{\tau}, \bk )$.
  Elements of the set $\hat{\mathcal{M}} \times \mathtt{Act}_{\tau}$ will be denoted as
  $m:\alpha$.
  The transition relations $\fw$ and $\bk$ are the smallest relations induced
  by the rules in Figure~\ref{fig:rccs_sem} (left and right columns,
  respectively). Both relations make use of the structural congruence relation
  $\equiv$, which is the smallest congruence on RCCS processes containing the
  rules shown in Figure~\ref{fig:struct}.
  We define ${\rccst{\, } = {\fw \cup \bk}}$.
\end{defi}

Let us provide some comments on the forward rules in Figure \ref{fig:rccs_sem}
(left column).
Rule \textsc{r-act} allows a monitored process to perform a forward action
$\alpha_z$. Notably, the label of this transition pairs the executed action
$\alpha_z$ with the memory $m$ of the process.
At this point, we are uncertain whether the performed action will synchronise
with the context or not. Consequently, a partial synchronisation event of the
form $\event{*}{\alpha^z_z}{\sum_{i\in I\setminus{z}}\alpha_i.P_i}$ is added
on top of the memory.
The `*' in the partial synchronisation event will be replaced by a memory,
let's say $m_1$, if the process eventually synchronises with another process
monitored by $m_1$.
Additionally, it is essential to note that the discarded process 
$\sum_{i\in I\setminus{z}}\alpha_i.P_i$ is
recorded in the memory.
Moreover, along with the prefix, we store its position `$z$' within the sum.
While this piece of information may be redundant for RCCS itself and was not
present in the original semantics, it becomes useful when encoding an RCCS
process into a net and when proving operational correspondence. This
additional information enables a more straightforward representation of RCCS
processes in a net-based setting and supports the validation of operational
correspondence between the LTS and the net semantics.
Importantly, it is worth mentioning that this straightforward modification
does not alter the original semantics of RCCS.

Rules \textsc{r-par-l} and \textsc{r-par-r} allows for the independent
execution of an action in different components of a parallel composition.
Rule \textsc{r-syn} allows two parallel processes to synchronise. For
synchronisation to occur, the action $\alpha$ in one process must match the
co-action $\co{\alpha}$ in the other process.
Once this condition is met, the two partial synchronisations are updated to
two full synchronisations using the operator `@'.

\begin{defi}
Let $R$ be a monitored
process, and let $m_1$ and $m_2$ be two memories.
$R_{m_2@m_1}$ represents the
process obtained from $R$ by substituting all occurrences of
$\mem{*} {\alpha}{ Q} {m_1}$ with $\mem{m_2} {\alpha}{ Q} {m_1}$.
\end{defi}
\noindent 
Rule \textsc{r-res} propagates actions through restriction, provided that the
action is not on the restricted name.
\noindent 
Rule \textsc{r-equiv} allows one to exploit the structural congruence defined
in~\Cref{fig:struct}. The structural rule \textsc{split} enables a monitored
process with a top-level parallel composition to split into left and right
branches, resulting in the duplication of the memory. The structural rule
\textsc{res} permits pushing restrictions outside monitored processes. Lastly,
the structural rule $\alpha$ allows one to take advantage of
$\alpha$-conversion, denoted by $=_{\alpha}$.

\begin{figure}[t]
  \begin{center}
    \[
      \begin{array}{l@{\hspace{.5cm}}l}
        \textsc{(split)}
        & m\proc (P\para Q) \con \lpar{m}\proc P\para \rpar{m}\proc Q
        \\[10pt]
        \textsc{(res)}
        & m\proc P\backslash a \con (m\proc P)\backslash a
          \qquad \text{if }a\notin \fn(m)
        \\[10pt]
        \textsc{($\alpha$)}
        & R\con S \qquad\text{ if } R\aeq S
      \end{array}
    \]
  \end{center}
  \caption{RCCS Structural laws}
  \label{fig:struct}
\end{figure}

Backward rules are reported in the right column of the Figure
\ref{fig:rccs_sem}. As one can see, for each forward rule there exists a
symmetrical backward one. Rule \textsc{r-act$\rev$} allows a monitored process
to undo its last action, which coincides with the event on top of the memory
stack. As we can see, all the information is stored in the last performed
event, hence the rule pops out the last event on the memory, and restores back
the prefix corresponding to the event and the plus context. Rules
\textsc{r-par-l$\rev$} and \textsc{r-par-r$\rev$} allow for the independent
undoing of an action in different components of a parallel composition. Rule
\textsc{r-syn$\rev$} allows for a de-synchronisation: that is, two parallel
components which participated to a synchronisation, say, with labels $\alpha$
and $\co{\alpha}$ can undo this synchronisation. Let us stress out that two
processes, say $R$ and $S$ can undo a synchronisation along memories $m_1$ and
$m_2$ only if they are in the form $R_{m_2@m_1}$ and $S_{m_1@m_2}$. Rules
\textsc{r-res$\rev$} and \textsc{r-equiv$\rev$} acts like their forward
counterparts.

\begin{defi}
\label{lbl:initial_proc}
  An RCCS process of the form $\emp \proc P$ is referred to as \emph{initial}.
  Any process $R$ derived from an initial process using the rules in
  Figure \ref{fig:rccs_sem} is called \emph{coherent}.
\end{defi}

\begin{exa}
  Let $P = a.(b\parallel c) \parallel (\co{a} \parallel d)$. Via an
  application of the \textsc{Split}  rule we obtain the following process
  \[    \emp\proc P \equiv \,
      \big (\lpar{\emp}\proc a.(b\parallel c) \big ) \parallel \big(\rpar{\emp}\proc (\co{a} \parallel d)\big ) = R_1
 \]
 and we can further apply \textsc{Split} rule on the second monitored process of $R_1$ as follows:
\[         R_1 
    \equiv \, \lpar{\emp}\proc a.(b\parallel c) \parallel \lpar{\rpar{\emp}}\proc \co{a}
      \parallel \rpar{\rpar{\emp}}  \proc d = R_2
 \]
  Now, in the process $R_2$ there are two monitored processes which can
  synchronise on $a$. That is
\[    R_2\ltsk{m_1,m_2}{\tau}
     \event{m_1}{a}{\nil} \cdot \lpar{\emp}\proc (b \parallel c)  \parallel
      \big(\event{m_2}{\co a}{\nil} \cdot \lpar{\rpar{\emp}}\proc \nil \big) \parallel
      \big(\rpar{\rpar{\emp}}\proc d\big) = R_3
     \]
     and by applying the \textsc{Split} rule on the left-most monitored process of $R_3$ we obtain 
  \begin{align*}   
    R_3\equiv \,
    & \big(\lpar{\event{m_1}{a}{\nil} \cdot \lpar{\emp}} \proc b\big) \parallel
      \big(\rpar{\event{m_1}{a}{\nil} \cdot \lpar{\emp}} \proc c\big) \parallel
    \\
    &\big( \event{m_2}{\co a}{\nil} \cdot \lpar{\rpar{\emp}}\proc \nil\big) \parallel
      \big(\rpar{\rpar{\emp}}\proc d \big)
  \end{align*}
  where $m_1 = \lpar{\emp}$ and $m_2 = \rpar{\emp}$.
\end{exa}

An important property of a fully reversible calculus is the so called
  Loop Lemma, stating that any action can be undone. Formally:
\begin{lem}[Loop Lemma \cite{rccs}]
   Let $R$ be a coherent process. For any forward transition
    $R \ltsk{\hat{m}}{\alpha} S$ there exists a backward transition
    $S \rltsk{\hat{m}}{\alpha} R$, and conversely.
\end{lem}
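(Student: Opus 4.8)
The plan is to prove both directions by rule induction on the derivation of the given transition, exploiting the visible symmetry between the left and right columns of Figure~\ref{fig:rccs_sem}. Concretely, I would prove the statement ``for every forward transition $R \ltsk{\hat m}{\alpha} S$ there is a backward transition $S \rltsk{\hat m}{\alpha} R$'' by induction on the derivation of $R \ltsk{\hat m}{\alpha} S$, and then the converse ``for every backward transition $R \rltsk{\hat m}{\alpha} S$ there is a forward transition $S \ltsk{\hat m}{\alpha} R$'' by the completely analogous induction on the backward derivation. Since $\rccst{\,} = \fw \cup \bk$, these two statements together give the Loop Lemma as stated.

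For the forward-to-backward direction, the base case is \textsc{r-act}: a transition $m\proc \sum_{i\in I}\alpha_i.P_i \ltsk{m}{\alpha_z} \mem{*}{\alpha^z_z}{\sum_{i\in I\setminus\{z\}}\alpha_i.P_i}{m}\proc P_z$ is reversed precisely by an application of \textsc{r-act$\rev$}, whose conclusion has exactly the matching shape (the discarded branches $\sum_{i\in I\setminus\{z\}}\alpha_i.P_i$ and the position $z$ recorded in the event are what make the target of \textsc{r-act$\rev$} syntactically equal to the source $m\proc\sum_{i\in I}\alpha_i.P_i$). The inductive cases \textsc{r-par-l}, \textsc{r-par-r}, and \textsc{r-res} are immediate: apply the induction hypothesis to the premise and then close with \textsc{r-par-l$\rev$}, \textsc{r-par-r$\rev$}, or \textsc{r-res$\rev$} respectively, the side condition $\alpha\notin\{a,\bar a\}$ being carried over verbatim. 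The case \textsc{r-equiv} uses the induction hypothesis on the middle premise $R'\ltsk{m}{\alpha}S'$ to obtain $S'\rltsk{m}{\alpha}R'$, and then \textsc{r-equiv$\rev$} with the same congruences $R\con R'$ and $S'\con S$ yields $S\rltsk{m}{\alpha}R$.

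The case that requires care is \textsc{r-syn}. Here the forward transition has the form $R\para S \ltsk{m_1,m_2}{\tau} R'_{m_2@m_1}\para S'_{m_1@m_2}$, derived from $R\ltsk{m_1}{\alpha}R'$ and $S\ltsk{m_2}{\bar\alpha}S'$. Applying the induction hypothesis gives $R'\rltsk{m_1}{\alpha}R$ and $S'\rltsk{m_2}{\bar\alpha}S$; the rule \textsc{r-syn$\rev$} then needs its source to be of the form $R'_{m_2@m_1}\para S'_{m_1@m_2}$, which is exactly what we have, and produces $R'\para S' \rltsk{m_1,m_2}{\tau}\ldots$ — but we must verify that this recovers $R\para S$, i.e.\ that desynchronising $R'_{m_2@m_1}$ along $m_1$ yields back $R$ and similarly for the other component. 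The main obstacle is therefore purely bookkeeping about the substitution operator $R_{m_2@m_1}$: I would establish a small auxiliary observation that a forward \textsc{r-act} step followed by the $m_2@m_1$ substitution places the full-synchronisation event $\mem{m_2}{\alpha^z_z}{Q}{m_1}$ on top of the stack, and that \textsc{r-act$\rev$} through \textsc{r-syn$\rev$} pops exactly that event, restoring $R$. This also forces a strengthening of the induction statement to track that the memory $\hat m$ of a forward step is always `fresh' on top of the relevant stack(s), so that the substitution and the pop compose to the identity; this strengthened invariant is the real content of the proof, the rest being routine. The converse direction is symmetric, with \textsc{r-act$\rev$} as base case and the roles of the two columns exchanged.
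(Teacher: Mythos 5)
The paper does not actually prove this lemma: it is stated as a result imported from the original RCCS paper \cite{rccs}, so there is no in-paper proof to compare against. Your strategy --- rule induction on the derivation, pairing each forward rule with its backward mirror image, and the symmetric induction for the converse --- is the standard argument and is essentially the one in the cited source; the base case \textsc{r-act}/\textsc{r-act$\rev$}, the \textsc{r-par-l}, \textsc{r-par-r}, \textsc{r-res} cases, and the \textsc{r-equiv} case (using symmetry of $\con$ to flip the two side congruences) are all handled correctly.

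The one point to correct is the case you single out as delicate. For \textsc{r-syn} no auxiliary observation about the operator $\cdot_{m_2@m_1}$ and no strengthened ``freshness'' invariant are needed, because the rules are already written so that the case closes immediately: the induction hypothesis gives $R'\rltsk{m_1}{\alpha}R$ and $S'\rltsk{m_2}{\bar{\alpha}}S$, and instantiating \textsc{r-syn$\rev$} with exactly these two premises produces the conclusion $R'_{m_2@m_1}\para S'_{m_1@m_2} \rltsk{m_1,m_2}{\tau} R\para S$. Its source is \emph{by definition of the rule} the substituted form (which is syntactically the target of the forward \textsc{r-syn} step), and its target is $R\para S$, the pair of targets of the backward premises --- not, as you write, $R'\para S'$, and not something obtained by ``desynchronising'' that must then be checked equal to $R\para S$. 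The rule never asks you to invert the substitution; it only requires the current process to have the substituted shape, which it has by construction. The converse direction is equally direct. So your proposed auxiliary lemma and strengthened induction statement are superfluous (though harmless); the proof closes by plain rule induction on the visibly symmetric rule pairs.
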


\begin{cor} \label{cor:loop}
Let $R$ be a coherent process. If $R \rccst{\,}^*R_1$ then
$R_1\rccst{\,}^*R$.
\end{cor}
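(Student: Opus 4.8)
The plan is to derive \Cref{cor:loop} directly from the Loop Lemma by induction on the length of the reduction sequence $R \rccst{\,}^* R_1$. The key observation is that $\rccst{\,}$ is defined as $\fw \cup \bk$, so a single step $R \rccst{\,} R'$ is either a forward transition $R \ltsk{\hat{m}}{\alpha} R'$ or a backward transition $R \rltsk{\hat{m}}{\alpha} R'$; in either case the Loop Lemma (possibly in its converse direction) provides a matching step $R' \rccst{\,} R$. So each individual step can be reversed, and the whole sequence can be reversed by composing the reversed steps in the opposite order.

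Concretely, I would first record the one-step claim: if $R \rccst{\,} R'$ with $R$ coherent, then $R' \rccst{\,} R$ and $R'$ is coherent. Coherence of $R'$ follows because $R'$ is obtained from the initial process by one more application of the rules in \Cref{fig:rccs_sem}. For the reversal, if the step is forward, $R \ltsk{\hat{m}}{\alpha} R'$, the Loop Lemma gives $R' \rltsk{\hat{m}}{\alpha} R$, hence $R' \rccst{\,} R$; if the step is backward, $R \rltsk{\hat{m}}{\alpha} R'$, the converse part of the Loop Lemma gives $R' \ltsk{\hat{m}}{\alpha} R$, hence again $R' \rccst{\,} R$.

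Then I would proceed by induction on $n$ where $R \rccst{\,}^n R_1$. The base case $n = 0$ is trivial since $R_1 = R$ and $\rccst{\,}^*$ is reflexive. For the inductive step, write $R \rccst{\,}^n R' \rccst{\,} R_1$. Since $R$ is coherent and coherence is preserved along $\rccst{\,}$-steps, $R'$ is coherent; by the one-step claim applied to $R' \rccst{\,} R_1$ we get $R_1 \rccst{\,} R'$, and $R_1$ is coherent. By the induction hypothesis applied to the coherent process $R$ and the sequence $R \rccst{\,}^n R'$, we have $R' \rccst{\,}^* R$. Concatenating, $R_1 \rccst{\,} R' \rccst{\,}^* R$, i.e., $R_1 \rccst{\,}^* R$, as required.

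I do not expect any real obstacle here: the only point that needs a moment's care is making sure coherence is threaded through the induction (so that the Loop Lemma, which requires a coherent process, always applies), and observing that the Loop Lemma's ``and conversely'' clause is exactly what is needed to reverse backward steps as well as forward ones. Everything else is a routine induction on the length of the sequence.
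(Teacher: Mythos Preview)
Your proposal is correct and is exactly the intended derivation: the paper states this as an immediate corollary of the Loop Lemma without spelling out the proof, and the argument you give---induction on the length of the computation, reversing each step via the Loop Lemma (using the ``conversely'' clause for backward steps) while threading coherence through---is precisely the standard way to unpack it.
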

\noindent 
RCCS is shown to be causal consistent, that is any step can be undone provided that
its consequences are undone beforehand. A consequence of causal consistent reversibility,
it that any process reached by mixing computations (e.g., forward and backward transitions)
can be reached by only forward computations. That is:
\begin{pty}\label{prp:fw_only}
For any initial procces $P$, if $\emp\proc P \rccst{}^* R$ then
$\emp\proc P \rightarrow^*R$.
\end{pty}
\noindent 
The  notion of context below  will be useful in the following sections.
\begin{defi}
  RCCS process context $C$ and active contexts $E$ are reversible processes
  with a hole ``$\circ$'', defined by the following grammar:
  \begin{align*}
    &C\ \sdefco \circ\ou m\proc C \ou \alpha.C\ou 
    \sum_{i \in I} P^{C}_i
 \ou C\ou (P \parallel C) \ou (C \parallel P) \ou C\hide{A} \\
    &E\ \sdefco \circ\ou  (R \parallel E) \ou (E \parallel R) \ou E \hide{A}
  \end{align*}
where $P^C$ can be either $C$ or $P$.
\end{defi}

\section{Petri nets, Unravel Nets and Reversible Unravel Nets}\label{sec:nets}

\subsection{Petri nets}
We provide a brief overview of Petri nets, along with some related auxiliary
notions.

\begin{defi}
  A \emph{Petri net} is a tuple $N = \macronet{}{}$, where $S$ is a set of
  {\em places}, $T$ is a set of {\em transitions} (with
  $S \cap T = \emptyset$), $F \subseteq (S\times T)\cup (T\times S)$ is the
  \emph{flow} relation, and $\marko{m}\in \muset{S}$ is the \emph{initial
    marking}.
\end{defi}
\noindent 
Petri nets are conventionally represented with transitions depicted as boxes,
places as circles, and the flow relation indicated by directed arcs.
The marking $\marko{m}$, i.e., the state of the net, is depicted by drawing
in the place $s$ a number $\marko{m}(s)$ of `$\bullet$' symbols, also called \emph{tokens}.

Given a net $N = \macronet{}{}$ and $x\in S\cup T$, we define the following
multisets: $\pre{x} = \setcomp{y}{(y,x)\in F}$ and
$\post{x} = \setcomp{y}{(x,y)\in F}$. If $x$ is a place then $\pre{x}$ and
$\post{x}$ are (multisets) of transitions; analogously, if $x\in T$ then
$\pre{x}\in\muset{S}$ and $\post{x} \in \muset{S}$.
The sets $\pre{x}$ and
$\post{x}$ are respectively called the \emph{pre} and \emph{postset} of $x$.
A transition $t\in T$ is enabled at a marking $\marko{m}\in \muset{S}$,
denoted by $\marko{m}\trans{t}$, whenever $\pre{t} \subseteq \marko{m}$.
A transition $t$ enabled at a marking $\marko{m}$ can \emph{fire} and its
firing produces the marking $\marko{m}' = \marko{m} - \pre{t} + \post{t}$. The
firing of $t$ at a marking $\marko{m}$ producing $\marko{m}'$ is denoted by
$\marko{m}\trans{t}\marko{m}'$.
We assume that each transition $t$ of a net $N$ is such that
$\pre{t}\neq\emptyset$, meaning that no transition may fire
\emph{spontaneously}.
Given a generic marking $\marko{m}$ (not necessarily the initial one), the
\emph{firing sequence} ({shortened as} \fs) of
$N = \langle S, T, F, \marko{m}\rangle$ starting at $\marko{m}_0$ is defined
as:

\begin{itemize}
\item $\marko{m}_0$ is a firing sequence (of length 0), and
\item if $\marko{m}_0\trans{t_1}\marko{m}_1$ $\cdots$
  $\marko{m}_{n-1}\trans{t_n}\marko{m}_n$ is a firing sequence and
  $\marko{m}_n\trans{t}\marko{m}'$, then also
  $\marko{m}_0\trans{t_1}\marko{m}_1$ $\cdots$
  $\marko{m}_{n-1}\trans{t_n}\marko{m}_n\trans{t}\marko{m}'$ is a firing
  sequence.
\end{itemize}
\noindent 
The set of firing sequences of a net $N = \langle S, T, F, \marko{m}\rangle$
starting at a marking $\marko{m}$ is denoted by $\firseq{N}{\marko{m}}$
and it is ranged over by $\sigma$. Given a \fs\
$\sigma = \marko{m}_0\trans{t_1}\sigma'\trans{t_n}\marko{m}_n$,
$\start{\sigma}$ is the marking $\marko{m}_0$, $\lead{\sigma}$ is the marking
$\marko{m}_n$ and $\remains{\sigma}$ is the \fs\
$\sigma'\trans{t_n}\marko{m}_n$.
Given a net $N = \langle S, T, F, \marko{m}\rangle$, a marking $\marko{m}'$
is \emph{reachable} iff there exists a \fs\
$\sigma \in \firseq{N}{\marko{m}}$ such that $\lead{\sigma}$ is $\marko{m}'$.
The set of reachable markings of $N$ is
$\reachMark{N} = \setcomp{\lead{\sigma}}{\sigma\in\firseq{N}{\marko{m}}}$.
Given a \fs\
$\sigma = \marko{m}\trans{t_1}\marko{m}_1\cdots
\marko{m}_{n-1}\trans{t_n}\marko{m}'$, we write
$X_{\sigma} = \sum_{i=1}^{n} t_i$ for the multiset of transitions associated
to \fs, which we call an \emph{execution} of the net and we write
\( \states{N} = \setcomp{X_{\sigma}\in
  \muset{T}}{\sigma\in\firseq{N}{\marko{\marko{m}}}} \) for the set of the
executions of $N$.
Observe that an execution simply says which transitions (and the relative
number of occurrences of them) has been executed, not their (partial)
ordering.
Given a \fs\
$\sigma = \marko{m}\trans{t_1}\marko{m}_1\cdots
\marko{m}_{n-1}\trans{t_n}\marko{m}_n\cdots$, with $\rho_{\sigma}$ we denote
the sequence $t_{1}t_{2}\cdots t_{n}\cdots$.

\begin{defi}
  A net $N = \macronet{}{}$ is said to be \emph{safe} if each marking
  $\marko{m}\in \reachMark{N}$ is such that $\marko{m} = \flt{\marko{m}}$.
\end{defi}
\noindent 
The notion of subnet will be handy in the following.
A subnet is obtained by restricting places and transitions, and
correspondingly the flow relation and the initial marking.
\begin{defi}
  \label{de:subnet-tran}
  Let $N = \macronet{}{}$ be a Petri net and let $T'\subseteq T$ be a subset
  of transitions and $S' = \pre{T'}\cup\post{T'}$. Then, the subnet generated
  by $T'$ $\subnet{N}{T'}{} = \macronet{}{'}$, where $F'$ is the restriction
  of $F$ to $S'$ and $T'$, and $\marko{\marko{m}}'$ is the multiset on $S'$
  obtained by $\marko{\marko{m}}$ restricting to the places in $S'$.
\end{defi}

\subsection{Unravel nets.}
To define \emph{unravel nets} we need the notion of \emph{causal net}, i.e., a net representing
how the various transitions are related and all of them can be executed in a firing sequence.

\begin{defi}\label{de:occ-net}
  A safe Petri net $N = \macronet{}{}$ is a \emph{causal net} (\ca{} for
  short) when $\forall s\in S$. $|\pre{s}| \leq 1$ and $|\post{s}| \leq 1$,
  $F^{\ast}$ is acyclic, $T\in \states{N}$, and
  $\forall s\in S\ \pre{s} = \emptyset\ \Rightarrow\ \marko{m}(s) = 1$.
\end{defi}
\noindent 
Requiring that $T\in \states{N}$ implies that all the transition can be
executed whereas $F^{\ast}$ acyclic means that dependencies among transitions
are settled.
Observe that causal net has no isolated and unmarked places as
$\forall s\in S\ \pre{s} = \emptyset\ \Rightarrow\ \marko{m}(s) = 1$.

\begin{defi}\label{de:unravel-net}
  An \emph{unravel net} (\unet{} for short) $N = \macronet{}{}$ is a safe net
  such that
  \begin{enumerate}
  \item for each execution $X\in\states{N}$ the subnet $\subnet{N}{X}{}$ is a
    \ca{}, and
  \item $\forall t, t'\in T$.
    $\pre{t} = \pre{t'}\ \land\ \post{t} = \post{t'}\ \Rightarrow\ t = t'$.
  \end{enumerate}
\end{defi}
\noindent 
Unravel nets describe the dependencies among transitions in the executions of
a concurrent and distributed device and are similar to \emph{flow
  nets}~\cite{BouFN90,BCIC94}.
Flow nets are safe nets in which, for every possible firing sequence, each
place can be marked only once.
The first condition in Definition~\ref{de:unravel-net} implies that the subnet
consisting of the transitions executed by the firing sequence is a causal net.
The second condition, which states that when two transitions have identical
pre- and postsets they are the same transition, serves the purpose of ruling out
the possibility of having two different transitions that are indistinguishable
because they consume and produce the same tokens (places).
Similar to flow nets, \unet{} also adhere to the rule that each place can be marked only once. However, unlike flow nets, the requirement that two transitions with the same preset and postset are the same transition (economy efficiency) stipulated by the second condition is integral to its definition.
Moreover, flow nets were initially introduced alongside flow event structure~\cite{BouFN90}, a concept which we do not consider in this paper.
Lastly, as the process algebra we consider cannot have terms with unguarded choices, the requirement that outgoing arcs are denumerable is always fulfilled and therefore we do not require it explicitly.
In an \unet{}, two transitions $t$ and $t'$ are conflicting if they never
appear together in an execution, \emph{i.e.}, $\forall X\in\states{N}$.
$\setenum{t, t'}\not\subseteq X$, as formally stated below.
Given a place $s$ of an unravel net, if $\pre{s}$ contains two or more
transitions, then they are in conflict.

\begin{prop}
  Let $N = \macronet{}{}$ be an \unet{} and $s\in S$ be a place such that
  $\card{\!\pre{s}} > 1$. Then
  $\forall t, t'\in \pre{s}.\ \forall X\in\states{N}$, if $t\in X$ and
  $t'\in X$ then $t = t'$.
\end{prop}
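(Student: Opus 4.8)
The plan is to argue by contradiction, exploiting the first defining property of unravel nets — that every execution restricts to a causal net — together with the fact that in a causal net every place has at most one transition in its preset.

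First I would fix a place $s$ with $\card{\pre{s}} > 1$ and take two transitions $t, t' \in \pre{s}$, and suppose toward a contradiction that there is an execution $X \in \states{N}$ with $t \in X$ and $t' \in X$ but $t \neq t'$. By Definition~\ref{de:unravel-net}(1), the subnet $\subnet{N}{X}{}$ is a causal net. Since $s \in \post{t}$ and $t \in X$, the place $s$ belongs to $S' = \pre{X} \cup \post{X}$, so $s$ is a place of $\subnet{N}{X}{}$. Likewise, because $F'$ is the restriction of $F$ to $S'$ and $X$, both arcs $(t,s)$ and $(t',s)$ survive in $F'$, hence $t, t' \in \pre[\subnet{N}{X}{}]{s}$, the preset of $s$ computed in the subnet.

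Then I would invoke the causal net condition $\card{\pre{s}} \leq 1$ from Definition~\ref{de:occ-net}, applied to $s$ inside $\subnet{N}{X}{}$: since $t$ and $t'$ are both in that preset and are assumed distinct, we get $\card{\pre[\subnet{N}{X}{}]{s}} \geq 2$, contradicting $\card{\pre[\subnet{N}{X}{}]{s}} \leq 1$. Therefore $t = t'$, which is exactly the claim.

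I expect the only delicate point to be making precise that presets are taken \emph{within the subnet} $\subnet{N}{X}{}$ rather than in $N$, and checking that the relevant arcs are indeed retained when restricting $F$ to $S'$ and $X$ — but this is immediate from Definition~\ref{de:subnet-tran}, since $t, t' \in X$ and $s \in \post{t}\cup\post{t'} \subseteq \post{X} \subseteq S'$. No real obstacle here; the statement is essentially an unfolding of the definitions, and the argument is complete once the contradiction with $\card{\pre{s}} \leq 1$ in the causal net is drawn.
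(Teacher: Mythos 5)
Your proof is correct and takes essentially the same route as the paper's: restrict to the subnet generated by the execution $X$, note that by the first condition of the definition of unravel nets it is a causal net, and derive $t = t'$ from a causal-net property. If anything, your appeal to the condition $\card{\pre{s}} \leq 1$ of the causal-net definition is the more pertinent one — the paper's own proof cites acyclicity, but the bound on place presets is what actually forces the two transitions to coincide.
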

\begin{proof}
  Take a place $s\in S$ such that $\card{\!\pre{s}} > 1$ and take
  $t, t'\in \pre{s}$. Assume that there is an execution $X \in\states{N}$ such
  that $X$ contains both $t$ and $t'$. As $\subnet{N}{X}{}$ is a causal net we
  have that it is acyclic and therefore $t$ must be equal to $t'$ as both
  produce a token in $s$.
\end{proof}
\noindent 
It is worth noting that the classical notion of an \emph{occurrence
  net}~\cite{NPW:PNES,Win:ES} is, in fact, a specific type of \unet{}. In this
context, the conflict relation is \emph{inherited} throughout the transitive
closure of the flow relation and can be inferred directly from
the structure of the net itself.
A further evidence that unravel nets generalize occurrence nets is implied
also by the fact that flow nets generalize occurrence nets as
well~\cite{BouFN90}.

\begin{defi}
  An unravel net $N = \macronet{}{}$ is \emph{complete} whenever
  $\forall t\in T$. $\exists s_t\in S.$
  $\pre{s_t} = \setenum{t}\ \land\ \post{s_t} =\emptyset$, and
  $\card{\post{t}}>1$.
  We use $\keypl{T}$ to denote the subset of $S$ of such places and we call
  the places in $\keypl{T}$ \emph{key}-places. 
\end{defi}
\noindent 
We choose the term \emph{key-places} to denote the places within the
  set $\keypl{T}$, as they resemble the communication keys in
  CCKS~\cite{ccsk}. These keys serve as unique markers used to indicate
  partial or full synchronisation.
Thus, in a complete \unet{}, the execution of a transition $t$ is signalled by
the marked place $s_t$. Given an \unet{} $N$, it can be turned easily into a
complete one by adding for each transition the suitable place, without
changing the executions of the net, thus we consider complete \unet{s} only.
Completeness comes handy when defining the reversible counterpart of an \unet.
In a complete \unet{} $N = \macronet{}{}$, it is easy to see that $\card{\keypl{T}} = \card{T}$.

\begin{prop}
 Let $N = \macronet{}{}$ be a complete \unet{} and $\keypl{T}$ the key-places. Then,
 $\states{N} = \states{N'}$ where $N' = \langle S', T, F', \marko{m}\rangle$
 and $S' = S\setminus\keypl{T}$ and $F' = F \cap ((S'\times T) \cup (T\times S'))$.
\end{prop}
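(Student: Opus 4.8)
The plan is to show that adding or removing key-places has no effect on which firing sequences are possible, hence none on the set of executions. Key-places are, by the definition of completeness, exactly those places $s_t$ with $\pre{s_t} = \{t\}$ and $\post{s_t} = \emptyset$. The crucial structural observation is that such a place is a pure sink: it never appears in the preset of any transition, and it carries no token initially (since $\pre{s_t}\neq\emptyset$ and $N$ is safe with $\marko{m}$ putting tokens only where they are needed — more precisely, a key-place has a non-empty preset so it is not among the initially marked source places, and even if it were marked it still could not disable anything). Therefore the presence of $s_t$ in $N$ cannot influence the enabledness of any transition in $N$.

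First I would prove the inclusion $\firseq{N'}{\marko{m}'} \subseteq$ (the projection of) $\firseq{N}{\marko{m}}$: given a firing sequence of $N'$, I replay it step by step in $N$. Enabledness of a transition $t$ at a marking in $N$ depends only on $\pre{t}\subseteq\marko{m}$; since $\pre{t}$ contains no key-place ($\post{s_t}=\emptyset$ for every key-place), and the non-key part of the marking in $N$ agrees with the marking reached in $N'$ (an easy induction, using that firing $t$ in $N$ produces extra tokens only in key-places of $\post{t}$, which $N'$ simply does not have), the same transition is enabled in $N$. Conversely, for $\firseq{N}{\marko{m}}$ projected into $N'$: a firing sequence of $N$, when we forget the key-places, is still a valid firing sequence of $N'$, because removing the key-places from presets (there are none) and postsets only discards tokens that were never consumed. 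In both directions the induction hypothesis is simply "the two markings agree on $S' = S\setminus\keypl{T}$", which is preserved by every firing step.

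From the equality of firing-sequence sets (up to the irrelevant key-place components of markings) the equality of the sequences $\rho_\sigma$ of transitions fired follows immediately, and since an execution $X_\sigma$ is just the multiset $\sum_i t_i$ of transitions occurring along $\sigma$, we get $\states{N} = \states{N'}$. The only point requiring a little care — and the place I would be most careful about — is the bookkeeping that a key-place $s_t$ is never initially marked and never in any transition's preset, so that deleting it genuinely removes nothing that any transition ever relied on; this is exactly what the completeness conditions $\pre{s_t}=\{t\}$ and $\post{s_t}=\emptyset$ guarantee. Once that is noted, the rest is the routine mutual induction on the length of firing sequences sketched above.
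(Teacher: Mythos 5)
Your proof is correct and follows exactly the reasoning the paper intends: the paper states this proposition without a proof, offering only the one-line justification that ``the key-places do not play any role in the firing of transitions,'' which is precisely the observation ($\post{s_t}=\emptyset$, so key-places never occur in a preset and cannot affect enabledness) that you identify and then carry through the routine induction on firing sequences.
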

\noindent 
The key-places do not play any role in the firing of transitions in a \unet{}.

\begin{exa}
Consider the nets in \Cref{fig:unetvariants}. The net $N$ is an unravel net, that has the two maximal executions delineated by the following sequences: first $a$, followed by $c$, or first $b$, then $c$.
The net $N$ is not complete
due to the absence of key-places associated with transitions $a$, $b$, and $c$. Transitions $a$ and $b$ lack key-places because each place in their postsets possesses an outgoing arc. Transition $c$  also lacks a key-place because it has just one place in its postset.
The net $N'$ is derived from $N$ by  augmenting each transition with a key-place, rendering $N'$ a complete net. These key-places serve the purpose of recording executed transitions. Consequently, in executions such as first $a$ and then $c$, tokens are placed in the key-places corresponding to $a$ and $c$.
\end{exa}

In a complete net, certain key-places can be determined unambiguously, such as those associated with transitions $a$ and $b$ in \Cref{fig:unetcomp}. However, for others, the selection  is somewhat arbitrary, as one place is chosen among alternative options. This is exemplified by the key-place linked to transition $c$ in \Cref{fig:unetcomp}

\begin{figure}[thb]
\centering
\begin{subfigure}[b]{0.25\textwidth}
	\centering
      	\scalebox{0.8}{{\footnotesize
\begin{tikzpicture}[scale=.7]
\tikzstyle{inhibitorred}=[o-,draw=red,thick]
\tikzstyle{inhibitorblu}=[o-,draw=blue,thick]
\tikzstyle{pre}=[<-,thick]
\tikzstyle{post}=[->,thick]
\tikzstyle{prered}=[<-,thick,draw=red]
\tikzstyle{postred}=[->,thick,draw=red]
\tikzstyle{readblue}=[-, draw=blue,thick]
\tikzstyle{transition}=[rectangle, draw=black,thick,minimum size=5mm]
\tikzstyle{revtransition}=[rectangle, draw=red!80,fill=red!30,thick,minimum size=5mm]
\tikzstyle{place}=[circle, draw=black,thick,minimum size=5mm]
\tikzstyle{placeblu}=[circle, draw=blue!80,fill=blue!20,thick,minimum size=5mm]
\node[place,tokens=1] (p1) at (0,5)  {};
\node[place,tokens=1] (p2) at (1.5,5)  {};
\node[place,tokens=1] (p3) at (3,5)  {};
\node[place]          (p4) at (1.5,2.5) {};
\node[place]          (p5) at (1.5,0) {};
\node[transition] (a) at (.5,3.75)  {$a$}
edge[pre] (p1)
edge[pre] (p2)
edge[post](p4);
\node[transition] (b) at (2.5,3.75)  {$b$}
edge[post] (p4)
edge[pre](p3)
edge[pre](p2);
\node[transition] (c) at (1.5,1.25) {$c$}
edge[pre] (p4)
edge[post] (p5);
\end{tikzpicture}
}}
        \caption{$N$}
        \label{fig:unet}
\end{subfigure}
\hspace{.7cm}
\begin{subfigure}[b]{0.25\textwidth}
	\centering
      	\scalebox{0.8}{{\footnotesize
\begin{tikzpicture}[scale=.7]
\tikzstyle{inhibitorred}=[o-,draw=red,thick]
\tikzstyle{inhibitorblu}=[o-,draw=blue,thick]
\tikzstyle{pre}=[<-,thick]
\tikzstyle{post}=[->,thick]
\tikzstyle{prered}=[<-,thick,draw=red]
\tikzstyle{postred}=[->,thick,draw=red]
\tikzstyle{readblue}=[-, draw=blue,thick]
\tikzstyle{transition}=[rectangle, draw=black,thick,minimum size=5mm]
\tikzstyle{revtransition}=[rectangle, draw=red!80,fill=red!30,thick,minimum size=5mm]
\tikzstyle{place}=[circle, draw=black,thick,minimum size=5mm]
\tikzstyle{placeblu}=[circle, draw=blue!80,fill=blue!20,thick,minimum size=5mm]
\node[place,tokens=1] (p1) at (0,5)  {};
\node[place,tokens=1] (p2) at (1.5,5)  {};
\node[place,tokens=1] (p3) at (3,5)  {};
\node[place]          (p4) at (1.5,2.5) {};
\node[place]          (p5) at (1.5,0) {};
\node[placeblu]       (p6) at (0,2.5) {};
\node[placeblu]       (p7) at (3,2.5) {};
\node[placeblu]       (p8) at (0,0) {};
\node[transition] (a) at (.5,3.75)  {$a$}
edge[pre] (p1)
edge[pre] (p2)
edge[post] (p6)
edge[post](p4);
\node[transition] (b) at (2.5,3.75)  {$b$}
edge[post] (p4)
edge[post] (p7)
edge[pre](p3)
edge[pre](p2);
\node[transition] (c) at (1.5,1.25) {$c$}
edge[pre] (p4)
edge[post] (p8)
edge[post] (p5);
\end{tikzpicture}
}}
	\caption{$N'$}
         \label{fig:unetcomp}
\end{subfigure}
\hspace{.7cm}
\begin{subfigure}[b]{0.35\textwidth}
	\centering
      	\scalebox{0.8}{{\footnotesize
\begin{tikzpicture}[scale=.7]
\tikzstyle{inhibitorred}=[o-,draw=red,thick]
\tikzstyle{inhibitorblu}=[o-,draw=blue,thick]
\tikzstyle{pre}=[<-,thick]
\tikzstyle{post}=[->,thick]
\tikzstyle{prered}=[<-,thick,draw=red]
\tikzstyle{postred}=[->,thick,draw=red]
\tikzstyle{readblue}=[-, draw=blue,thick]
\tikzstyle{transition}=[rectangle, draw=black,thick,minimum size=5mm]
\tikzstyle{revtransition}=[rectangle, draw=red!80,fill=red!30,thick,minimum size=5mm]
\tikzstyle{place}=[circle, draw=black,thick,minimum size=5mm]
\tikzstyle{placeblu}=[circle, draw=blue!80,fill=blue!20,thick,minimum size=5mm]
\node[place] (p1) at (0,5)  {};
\node[place] (p2) at (1.5,5)  {};
\node[place,tokens=1] (p3) at (3,5)  {};
\node[place]          (p4) at (1.5,2.5) {};
\node[place,tokens=1]          (p5) at (1.5,0) {};
\node[placeblu,tokens=1]       (p6) at (0,2.5) {};
\node[placeblu]       (p7) at (3,2.5) {};
\node[placeblu,tokens=1]       (p8) at (0,0) {};
\node[transition] (a) at (.5,3.75)  {$a$}
edge[pre] (p1)
edge[pre] (p2)
edge[post] (p6)
edge[post](p4);
\node[transition] (b) at (2.5,3.75)  {$b$}
edge[post] (p4)
edge[post] (p7)
edge[pre](p3)
edge[pre](p2);
\node[transition] (c) at (1.5,1.25) {$c$}
edge[pre] (p4)
edge[post] (p8)
edge[post] (p5);
\end{tikzpicture}
}}
        \caption{$N'$}
        \label{fig:unetrev}
\end{subfigure}
\caption{An \unet{} $N$, its complete version $N'$ and the net $N'$ after the execution of $a$ and $b$ }\label{fig:unetvariants}
\end{figure}

\subsection{Reversible unravel nets.}
The definition of \emph{reversible unravel nets} builds upon that of the
\emph{reversible occurrence nets} of~\cite{MelgrattiM0PU20},
extending the notion just as unravel nets generalise upon occurrence nets.

\begin{defi}\label{def:rcn}
  A \emph{reversible unravel net} (\rcn{} for short) is a quintuple
  $N = \macrorevnet{}{}$ such that
  \begin{enumerate}
  \item $\Er\subseteq T$ and $\forall \er\in \Er$.
    $\exists!\ t\in T\setminus\Er$ such that $\pre{\er} = \post{t}$ and
    $\post{\er} = \pre{t}$, and
  \item $\subnet{N}{T\setminus \Er}{}$ is a complete unravel net and
    $\macronet{}{}$ is a safe one.
  \end{enumerate}
\end{defi}
\noindent 
The transitions in $\Er$ are the reversing ones; hence, we often say that a
reversible unravel net $N$ is \emph{reversible with respect to} $\Er$.
A reversing transition $\er$ is associated with a unique non-reversing
transition $t$ (condition 1) and its effects are intended to \emph{undo} $t$.
This fact ensures the existence of an injective mapping
$h : \Er \to T\setminus\Er$, which consequently implies that each reversible
transition is accompanied by precisely one corresponding reversing transition.
The second condition stipulates that when disregarding all reversing
transitions, the resulting subnet is indeed a complete unravel net and the net
itself is a safe net.

Along the lines of \cite{MelgrattiM0PU20}, we can prove that the set of
reachable markings of a reversible unravel net is not influenced by performing
a reversing transition.
\begin{prop}\label{pr:reachable-markings-of-rcn}
  Let $N = \macrorevnet{}{}$ be an \rcn. Then
  $\reachMark{N} = \reachMark{\subnet{N}{T\setminus\Er}{}}$.
\end{prop}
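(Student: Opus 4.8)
The plan is to prove the two inclusions separately. The inclusion $\reachMark{\subnet{N}{T\setminus\Er}{}} \subseteq \reachMark{N}$ is essentially immediate: every firing sequence of the subnet $\subnet{N}{T\setminus\Er}{}$ is also a firing sequence of $N$, because the two nets share the same places, the same initial marking $\marko{m}$, and the restricted flow relation agrees with $F$ on the transitions of $T\setminus\Er$; hence any marking reachable in the subnet is reachable in $N$. The substance of the proposition is therefore the converse inclusion $\reachMark{N} \subseteq \reachMark{\subnet{N}{T\setminus\Er}{}}$, i.e.\ showing that the reversing transitions of $\Er$ do not add any new reachable markings.

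For the hard direction I would argue by induction on the length of a firing sequence $\sigma \in \firseq{N}{\marko{m}}$, proving that $\lead{\sigma} \in \reachMark{\subnet{N}{T\setminus\Er}{}}$. The base case (length $0$) is trivial since $\marko{m}$ is the initial marking of both nets. For the inductive step, consider $\sigma$ ending with a firing $\marko{m}'\trans{t}\marko{m}''$ where, by the induction hypothesis, $\marko{m}'$ is reachable in $\subnet{N}{T\setminus\Er}{}$ via some forward-only firing sequence $\sigma'$. If $t \in T\setminus\Er$, then $t$ is enabled at $\marko{m}'$ in the subnet as well (same preset), and we simply extend $\sigma'$ by $t$. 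If $t \in \Er$, let $h(t)\in T\setminus\Er$ be the unique non-reversing transition with $\pre{t} = \post{h(t)}$ and $\post{t} = \pre{h(t)}$ guaranteed by condition~1 of \Cref{def:rcn}. The key claim is that $h(t)$ must already have fired in $\sigma'$: since $t$ is enabled at $\marko{m}'$, we have $\post{h(t)} = \pre{t} \subseteq \marko{m}'$, and by safeness of $\macronet{}{}$ together with the fact that $\subnet{N}{T\setminus\Er}{}$ is an unravel net (each place marked at most once along any firing sequence), the only way the places of $\post{h(t)}$ can carry tokens at $\marko{m}'$ is that $h(t)$ produced them. Firing $t$ then consumes exactly $\post{h(t)}$ and restores $\pre{h(t)}$, so $\marko{m}'' = \marko{m}' - \post{h(t)} + \pre{h(t)}$ coincides with the marking obtained from $\sigma'$ by \emph{deleting} the occurrence of $h(t)$ (and, by causal-consistency of unravel nets, everything causally after it — but since $\post{h(t)} = \pre{t}$ is still fully marked at $\marko{m}'$, nothing downstream of $h(t)$ has consumed those tokens, so in fact $h(t)$ is a maximal-in-the-dependency-order transition of $\sigma'$). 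Hence one can reorder $\sigma'$ so that $h(t)$ fires last and then observe that the prefix of that reordered sequence is a firing sequence of $\subnet{N}{T\setminus\Er}{}$ leading precisely to $\marko{m}''$.

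The main obstacle is the last step: justifying that after $t\in\Er$ fires we land back on a marking reachable by a \emph{forward-only} sequence of non-reversing transitions. This is where the unravel-net structure is indispensable — in a general safe net, undoing $h(t)$ from a marking need not yield a forward-reachable marking, but here conditions guarantee that $\subnet{N}{X}{}$ is a causal net for each execution $X$ (\Cref{de:unravel-net}), so the firing sequence $\sigma'$ induces a partial order on its transitions in which $h(t)$ is maximal once $\post{h(t)}$ is intact, and causal nets admit reordering of their firing sequences to respect any linearisation of that order. I would lift this reordering lemma either from the cited development \cite{MelgrattiM0PU20} or prove it inline as a short auxiliary fact about causal nets. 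Once the maximality of $h(t)$ is pinned down, the deletion of $h(t)$ from $\sigma'$ is routine and completes the induction.
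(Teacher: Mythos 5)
Your proposal follows essentially the same route as the paper: the easy inclusion is dismissed identically, and for the hard inclusion you identify $h(t)$ as having already fired, observe that its output tokens are untouched so it is maximal in the causal order of the executed transitions, reorder the sequence so that $h(t)$ fires last, and delete it. Your explicit induction on the length of the firing sequence is a slightly more careful packaging of the paper's ``w.l.o.g.\ assume all transitions in $\sigma$ are forward'', but the content is the same.

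There is, however, one step whose justification as written is wrong, even though the conclusion is right. You claim that $\post{h(t)}=\pre{t}\subseteq\marko{m}'$ forces $h(t)\in X_{\sigma'}$ ``by safeness \dots\ together with the fact that $\subnet{N}{T\setminus\Er}{}$ is an unravel net (each place marked at most once along any firing sequence)''. This does not follow: in an unravel net a place may have \emph{several} transitions in its preset (that is exactly how conflicts are represented), so a token sitting in a place of $\post{h(t)}$ may well have been produced by some other transition in conflict with $h(t)$. The net of \Cref{fig:revunetnoncompl} is a counterexample to your stated justification: there $\pre{\un{b}}=\setenum{p_4}$ becomes marked by firing $a$ alone, so $\un{b}$ is enabled although $b$ never fired (\Cref{fig:unetrevnoncomplbis}). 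What actually rescues the claim is condition~2 of \Cref{def:rcn}, i.e.\ \emph{completeness} of $\subnet{N}{T\setminus\Er}{}$: the key-place $s_{h(t)}\in\keypl{T\setminus\Er}$ satisfies $\pre{s_{h(t)}}=\setenum{h(t)}$ and belongs to $\post{h(t)}=\pre{t}$, so it can only be marked if $h(t)$ itself fired. Your argument never invokes completeness, yet it is the hypothesis doing the work at this point; with that substitution the rest of your proof (maximality of $h(t)$, reordering, deletion) goes through as in the paper.
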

\begin{proof}
  Clearly $\reachMark{\subnet{N}{T\setminus\Er}{}}\subseteq \reachMark{N}$.
  For the other inclusion, we first observe that if $\marko{m}\trans{t}$ then
  $t\in T\setminus\Er$ as none of the transitions in $\Er$ is enabled at the
  initial marking. Consider now an \fs{} $\sigma\trans{\er}m$, with
    $u\in \Er$, and w.l.o.g. assume that all the transitions in $\sigma$
  belong to $T\setminus\Er$, i.e., $X_{\sigma}\subseteq T\setminus\Er$. We
  construct an \fs{} leading to $m$ which does not contain any transition in
  $\Er$. As $\sigma\trans{\er}$ we have that
  $\pre{\er}\subseteq \lead{\sigma}$ and this implies that the transition
  $h(\er)\in X_{\sigma}$. We can then write $\sigma$ as
  $\sigma'\trans{h(\er)}\sigma''$ and none of the transitions in $\sigma''$
  uses the tokens produced by $h(\er)$ as $\subnet{N}{X_{\sigma}}{}$ is a
  subnet of $\subnet{N}{T\setminus\Er}{}$, which is a complete \unet{}.
  Therefore we have that the transitions in the \fs{}
  $\lead{\sigma'}\trans{h(\er)}\sigma''$ can be rearranged in a \fs{}
  $\sigma'''\trans{h(\er)}\lead{\sigma}$. Observing that the effects of firing
  $\er$ at $\lead{\sigma}$ are producing the tokens in places $\pre{h(\er)}$
  we have that the \fs{} we are looking for is obtained executing the
  transitions in $\sigma'$ followed by the ones in $\sigma'''$ and the reached
  marking is precisely $\lead{\sigma}$. Hence also
  $\reachMark{N} \subseteq \reachMark{\subnet{N}{T\setminus\Er}{}}$ holds.
\end{proof}
\noindent 
A consequence of this fact is that each marking can be reached by using just
\emph{forward transition}.

Given an unravel net and a subset of transitions to be reversed, it is
straightforward to obtain a reversible unravel net.
\begin{prop}\label{pr:constructing-a-reversible-unravel-net}
  Let $N = \macronet{}{}$ be a complete unravel
  net
  and let $\Er \subseteq T$ be the set of transitions to be reversed. Define
  $\revunr N U = \macrorevnet{}{\,'}$ where $S = S\,'$,
  $\Er\,' = \Er\times\setenum{\re}$, $T\,' = (T\times\setenum{\fe}) \ \cup\ \Er\,'$,
  \[
    \begin{array}{ll}
      F\,' =
      & \setcomp{(s, (t,\fe))}{(s,t)\in F}\ \cup\ \setcomp{((t,\fe),s)}{(t,s)\in F}\ \cup\
      \\
      &\setcomp{(s, (t,\re))}{(t,s)\in F}\ \cup\ \setcomp{ ((t,\re),s)}{(s,t)\in F}
    \end{array}
  \]
  and $\marko{\marko{m}}' = \marko{\marko{m}}$. Then $\revunr N U$ is a
  reversible unravel net.
\end{prop}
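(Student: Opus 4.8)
The plan is to verify the two defining conditions of Definition~\ref{def:rcn} for the constructed net $\revunr N U$. Throughout, I will freely use the natural bijections $t\mapsto(t,\fe)$ between $T$ and $T\times\setenum{\fe}$ and $u\mapsto(u,\re)$ between $\Er$ and $\Er'$, and I will repeatedly observe the key identities that follow straight from the definition of $F\,'$: for every $t\in T$ we have $\pre{(t,\fe)}=\pre{t}$ and $\post{(t,\fe)}=\post{t}$ (preset/postset computed in the respective nets), while for every $u\in\Er$ the third and fourth clauses of $F\,'$ give $\pre{(u,\re)}=\post{u}$ and $\post{(u,\re)}=\pre{u}$. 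In other words, the ``forward copy'' $(t,\fe)$ behaves exactly like $t$, and the reversing copy $(u,\re)$ has its pre- and postset swapped with respect to $u$.

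Condition~1 is then almost immediate. Given $(u,\re)\in\Er'$, take $t:=(u,\fe)\in T'\setminus\Er'$. By the identities above, $\pre{(u,\re)}=\post{u}=\post{(u,\fe)}$ and $\post{(u,\re)}=\pre{u}=\pre{(u,\fe)}$, so $(u,\fe)$ is a witness. For uniqueness, suppose some $(t,\fe)\in T'\setminus\Er'$ satisfies $\pre{(t,\fe)}=\post{(u,\re)}=\post{u}$ and $\post{(t,\fe)}=\pre{(u,\re)}=\pre{u}$; then $\pre{t}=\post{u}$ and $\post{t}=\pre{u}$, and also $\pre{(u,\fe)}=\pre{u}$ coincides with $\post{t}$ while $\post{(u,\fe)}$ coincides with $\pre{t}$, hence $t$ and $u$ have equal presets and equal postsets in $N$; since $N$ is an unravel net, condition~2 of Definition~\ref{de:unravel-net} forces $t=u$, so $(t,\fe)=(u,\fe)$. (Here I use that $\pre u\neq\emptyset$, which holds by our global assumption that transitions have nonempty presets, so the conflict-style degeneracies are not an issue.)

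For Condition~2, the subnet $\subnet{(\revunr N U)}{T'\setminus\Er'}{}$ is generated by the transition set $T\times\setenum{\fe}$, whose pre/postsets and connecting places are in flow-preserving bijection with those of $N$ via $t\mapsto(t,\fe)$; the initial markings agree since $\marko{\marko{m}}'=\marko{\marko{m}}$ and $S'=S$. Hence this subnet is isomorphic (as a marked net) to $N$, and therefore it is a complete unravel net because $N$ is. The remaining and, I expect, slightly more delicate part is to show that the full net $\macronet{}{'}$ with $\marko{\marko{m}}'$ is \emph{safe}. The argument I would give goes through Proposition~\ref{pr:reachable-markings-of-rcn}'s style of reasoning but must be carried out directly here since that proposition presupposes we already have an \rcn; concretely, I would show that every reachable marking of $\revunr N U$ is already reachable in the forward subnet (isomorphic to $N$), by the same rearrangement argument: from the bijection $h:(u,\re)\mapsto(u,\fe)$ one checks that whenever $(u,\re)$ fires at a reachable marking $\marko m$, the forward transition $h((u,\re))=(u,\fe)$ must have fired earlier in the firing sequence and, since the forward subnet is a complete unravel net, no intermediate transition consumed the tokens it produced, so one can permute the sequence to end with $h((u,\re))$, and then firing $(u,\re)$ merely undoes that last step, returning to an already-reached (forward-reachable) marking. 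Thus $\reachMark{\revunr N U}\subseteq\reachMark{N}$, and since $N$ is safe so is $\revunr N U$.

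The main obstacle is precisely this safety claim: one has to see that adding the reversing transitions cannot create a marking in which some place holds two tokens, and the cleanest way is the firing-sequence rearrangement above, which hinges on the completeness of the forward unravel subnet (key-places guarantee that a forward transition's ``signature'' token $s_t$ is never re-consumed, pinning down where $h((u,\re))$ sits and making the permutation valid). Everything else — Condition~1 and the identification of the forward subnet with $N$ — is routine bookkeeping with the four clauses defining $F\,'$.
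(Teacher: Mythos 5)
Your proof is correct and follows the same two-step skeleton as the paper's: condition~1 of \cref{def:rcn} is checked by pairing each $(u,\re)$ with $(u,\fe)$, and condition~2 by observing that the forward subnet is (a renaming of) $N$ and hence a complete \unet{}. Where you differ is in how much you actually justify, and your additions are worthwhile. The paper merely asserts uniqueness in condition~1 and dismisses safety with ``trivially safe as $N$ is safe''; you derive uniqueness from condition~2 of \cref{de:unravel-net} (two transitions of $N$ with identical pre- and postsets coincide), and you give a real argument for safety by showing $\reachMark{\revunr N U}\subseteq\reachMark{N}$ through the firing-sequence rearrangement, correctly observing that \cref{pr:reachable-markings-of-rcn} cannot simply be cited because it presupposes the net is already an \rcn{}. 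Since the whole point of completeness (cf.\ the discussion around \Cref{fig:unetvariantstris}) is that without key-places a reversing transition can fire spuriously, safety of the augmented net genuinely rests on that rearrangement, so your version supplies a justification the paper leaves implicit. One cosmetic slip in the uniqueness step: you write $\post{(u,\re)}=\post{u}$ and $\pre{(u,\re)}=\pre{u}$ mid-chain, contradicting the swap identities you set up two sentences earlier; unwinding condition~1 correctly gives $\post{t}=\post{u}$ and $\pre{t}=\pre{u}$ directly, which is exactly what your final appeal to condition~2 of \cref{de:unravel-net} needs (your intermediate ``$\pre{t}=\post{u}$ and $\post{t}=\pre{u}$'' would not by itself yield equal pre- and postsets). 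The conclusion is unaffected.
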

\begin{proof}
  We check the conditions of \cref{def:rcn}. The first condition is satisfied
  as we observe that for each transition in $(t,\re)\in \Er\,'$, there exists a
  unique corresponding transition $(t,\fe)\in T\times\setenum{\fe}$; moreover,
  $\pre{(t,\re)} = \post{(t,\fe)}$ and $\post{(t,\re)} = \pre{(t,\fe)}$. The
  second one depends on the fact that $N$ is a complete \unet. Finally $N$ is,
  up to the renaming of transitions, equal to $\subnet{\revunr N U}{\Er'}{}$,
  which is a complete unravel net. Finally, $\revunr N U$ is trivially safe as
  $N$ is safe.
\end{proof}
\noindent 
The construction above simply adds as many events (transitions) as transitions
to be reversed in $\Er$. The preset of each added event is the postset of the
corresponding event to be reversed, and its postset is the preset of the event
to be reversed.
We write $\revnet{N}$ instead of $\revunr N T$ when $N = \macronet{}{}$, i.e.,
when every transition is reversible.
\noindent 
The reversible unravel net obtained by reversing
every transition is depicted in \Cref{fig:unetrev}.

\begin{figure}[!t]
\centering
\begin{subfigure}[b]{0.30\textwidth}
	\centering
      	\scalebox{0.8}{{\footnotesize
\begin{tikzpicture}[scale=.7]
\tikzstyle{inhibitorred}=[o-,draw=red,thick]
\tikzstyle{inhibitorblu}=[o-,draw=blue,thick]
\tikzstyle{pre}=[<-,thick]
\tikzstyle{post}=[->,thick]
\tikzstyle{prered}=[<-,thick,draw=red]
\tikzstyle{postred}=[->,thick,draw=red]
\tikzstyle{readblue}=[-, draw=blue,thick]
\tikzstyle{transition}=[rectangle, draw=black,thick,minimum size=5mm]
\tikzstyle{revtransition}=[rectangle, draw=red!80,fill=red!30,thick,minimum size=5mm]
\tikzstyle{place}=[circle, draw=black,thick,minimum size=5mm]
\tikzstyle{placeblu}=[circle, draw=blue!80,fill=blue!20,thick,minimum size=5mm]
\node[place,tokens=1] (p1) at (0,5)  {};
\node[place,tokens=1] (p2) at (1.5,5)  {};
\node[place,tokens=1] (p3) at (3,5)  {};
\node[place]          (p4) at (1.5,2.5) {};
\node[place]          (p5) at (.5,0) {};
\node[placeblu]       (p6) at (0,2.5) {};
\node[placeblu]       (p7) at (3,2.5) {};
\node[placeblu]       (p8) at (-1,0) {};
\node[place]          (p9) at (2.5,0) {};
\node[placeblu]       (p10) at (4,0) {};
\node[transition] (a) at (.5,3.75)  {$a$}
edge[pre] (p1)
edge[pre] (p2)
edge[post] (p6)
edge[post](p4);
\node[revtransition] (ar) at (-1,3.75)  {$\un{a}$}
edge[postred] (p1)
edge[postred] (p2)
edge[prered] (p6)
edge[prered](p4);
\node[transition] (b) at (2.5,3.75)  {$b$}
edge[post] (p4)
edge[post] (p7)
edge[pre](p3)
edge[pre](p2);
\node[transition] (c) at (0.5,1.25) {$c$}
edge[pre] (p4)
edge[post] (p8)
edge[post] (p5);
\node[revtransition] (cr) at (-1,1.25) {$\un{c}$}
edge[postred] (p4)
edge[prered] (p8)
edge[prered] (p5);
\node[transition] (d) at (2.5,1.25) {$d$}
edge[pre] (p4)
edge[post] (p9)
edge[post] (p10);
\end{tikzpicture}
}}
	\caption{$N$}
         \label{fig:unetcompuno}
\end{subfigure}
\hspace{.7cm}
\begin{subfigure}[b]{0.25\textwidth}
	\centering
      	\scalebox{0.8}{{\footnotesize
\begin{tikzpicture}[scale=.7]
\tikzstyle{inhibitorred}=[o-,draw=red,thick]
\tikzstyle{inhibitorblu}=[o-,draw=blue,thick]
\tikzstyle{pre}=[<-,thick]
\tikzstyle{post}=[->,thick]
\tikzstyle{prered}=[<-,thick,draw=red]
\tikzstyle{postred}=[->,thick,draw=red]
\tikzstyle{readblue}=[-, draw=blue,thick]
\tikzstyle{transition}=[rectangle, draw=black,thick,minimum size=5mm]
\tikzstyle{revtransition}=[rectangle, draw=red!80,fill=red!30,thick,minimum size=5mm]
\tikzstyle{place}=[circle, draw=black,thick,minimum size=5mm]
\tikzstyle{placeblu}=[circle, draw=blue!80,fill=blue!20,thick,minimum size=5mm]
\node[place,tokens=1] (p1) at (0,5)  {};
\node[place,tokens=1] (p2) at (1.5,5)  {};
\node[place,tokens=1] (p3) at (3,5)  {};
\node[place]          (p4) at (1.5,2.5) {};
\node[place]          (p5) at (.5,0) {};
\node[placeblu]       (p6) at (0,2.5) {};
\node[placeblu]       (p7) at (3,2.5) {};
\node[placeblu]       (p8) at (-1,0) {};
\node[place]          (p9) at (2.5,0) {};
\node[placeblu]       (p10) at (4,0) {};
\node[transition] (a) at (.5,3.75)  {$a$}
edge[pre] (p1)
edge[pre] (p2)
edge[post] (p6)
edge[post](p4);
\node[transition] (b) at (2.5,3.75)  {$b$}
edge[post] (p4)
edge[post] (p7)
edge[pre](p3)
edge[pre](p2);
\node[transition] (c) at (0.5,1.25) {$c$}
edge[pre] (p4)
edge[post] (p8)
edge[post] (p5);
\node[transition] (d) at (2.5,1.25) {$d$}
edge[pre] (p4)
edge[post] (p9)
edge[post] (p10);
\end{tikzpicture}
}}
	\caption{$\subnet{N}{T\setminus \Er}{}$}
         \label{fig:unetdentruno}
\end{subfigure}
\hspace{.7cm}
\begin{subfigure}[b]{0.30\textwidth}
	\centering
      	\scalebox{0.8}{{\footnotesize
\begin{tikzpicture}[scale=.7]
\tikzstyle{inhibitorred}=[o-,draw=red,thick]
\tikzstyle{inhibitorblu}=[o-,draw=blue,thick]
\tikzstyle{pre}=[<-,thick]
\tikzstyle{post}=[->,thick]
\tikzstyle{prered}=[<-,thick,draw=red]
\tikzstyle{postred}=[->,thick,draw=red]
\tikzstyle{readblue}=[-, draw=blue,thick]
\tikzstyle{transition}=[rectangle, draw=black,thick,minimum size=5mm]
\tikzstyle{revtransition}=[rectangle, draw=red!80,fill=red!30,thick,minimum size=5mm]
\tikzstyle{place}=[circle, draw=black,thick,minimum size=5mm]
\tikzstyle{placeblu}=[circle, draw=blue!80,fill=blue!20,thick,minimum size=5mm]
\node[place,tokens=1] (p1) at (0,5)  {};
\node[place,tokens=1] (p2) at (1.5,5)  {};
\node[place,tokens=1] (p3) at (3,5)  {};
\node[place]          (p4) at (1.5,2.5) {};
\node[place]          (p5) at (1.5,0) {};
\node[placeblu]       (p6) at (0,2.5) {};
\node[placeblu]       (p7) at (3,2.5) {};
\node[placeblu]       (p8) at (0,0) {};
\node[transition] (a) at (.5,3.75)  {$a$}
edge[pre] (p1)
edge[pre] (p2)
edge[post] (p6)
edge[post](p4);
\node[revtransition] (ar) at (-1,3.75)  {$\un{a}$}
edge[postred] (p1)
edge[postred] (p2)
edge[prered] (p6)
edge[prered](p4);
\node[transition] (b) at (2.5,3.75)  {$b$}
edge[post] (p4)
edge[post] (p7)
edge[pre](p3)
edge[pre](p2);
\node[revtransition] (br) at (4,3.75)  {$\un{b}$}
edge[prered] (p4)
edge[prered] (p7)
edge[postred](p3)
edge[postred](p2);
\node[transition] (c) at (1.5,1.25) {$c$}
edge[pre] (p4)
edge[post] (p8)
edge[post] (p5);
\node[revtransition] (c) at (0,1.25) {$c$}
edge[postred] (p4)
edge[prered] (p8)
edge[prered] (p5);
\end{tikzpicture}
}}
        \caption{$\revnet{N'}$}
        \label{fig:unetrev}
\end{subfigure}
\caption{An \rcn{} $N$ with reversing transitions $\Er = \setenum{\un{a},\un{c}}$, the \unet{}
$\subnet{N}{T\setminus \Er}{}$ and the complete
$\revnet{N'}$ associated to the net $N'$ in \Cref{fig:unetcomp}}\label{fig:unetvariantsbis}
\end{figure}

\begin{figure}[thb]
\centering
\begin{subfigure}[b]{0.35\textwidth}
	\centering
      	\scalebox{0.8}{{\footnotesize
\begin{tikzpicture}[scale=.7]
\tikzstyle{inhibitorred}=[o-,draw=red,thick]
\tikzstyle{inhibitorblu}=[o-,draw=blue,thick]
\tikzstyle{pre}=[<-,thick]
\tikzstyle{post}=[->,thick]
\tikzstyle{prered}=[<-,thick,draw=red]
\tikzstyle{postred}=[->,thick,draw=red]
\tikzstyle{readblue}=[-, draw=blue,thick]
\tikzstyle{transition}=[rectangle, draw=black,thick,minimum size=5mm]
\tikzstyle{revtransition}=[rectangle, draw=red!80,fill=red!30,thick,minimum size=5mm]
\tikzstyle{place}=[circle, draw=black,thick,minimum size=5mm]
\tikzstyle{placeblu}=[circle, draw=blue!80,fill=blue!20,thick,minimum size=5mm]
\node[place,tokens=1] (p1) at (0,5)  {};
\node[place,tokens=1] (p2) at (1.5,5)  {};
\node[place,tokens=1] (p3) at (3,5)  {};
\node[place]          (p4) at (1.5,2.5) {};
\node[place]          (p5) at (1.5,0) {};
\node[transition] (a) at (.5,3.75)  {$a$}
edge[pre] (p1)
edge[pre] (p2)
edge[post](p4);
\node[revtransition] (ar) at (-1,3.75)  {$\un{a}$}
edge[postred] (p1)
edge[postred] (p2)
edge[prered](p4);
\node[transition] (b) at (2.5,3.75)  {$b$}
edge[post] (p4)
edge[pre](p3)
edge[pre](p2);
\node[revtransition] (br) at (4,3.75)  {$\un{b}$}
edge[prered] (p4)
edge[postred](p3)
edge[postred](p2);
\node[transition] (c) at (1.5,1.25) {$c$}
edge[pre] (p4)
edge[post] (p5);
\node[revtransition] (c) at (0,1.25) {$c$}
edge[postred] (p4)
edge[prered] (p5);
\end{tikzpicture}
}}
	\caption{$N$}
         \label{fig:revunetnoncompl}
\end{subfigure}
\hspace{.7cm}
\begin{subfigure}[b]{0.35\textwidth}
	\centering
      	\scalebox{0.8}{{\footnotesize
\begin{tikzpicture}[scale=.7]
\tikzstyle{inhibitorred}=[o-,draw=red,thick]
\tikzstyle{inhibitorblu}=[o-,draw=blue,thick]
\tikzstyle{pre}=[<-,thick]
\tikzstyle{post}=[->,thick]
\tikzstyle{prered}=[<-,thick,draw=red]
\tikzstyle{postred}=[->,thick,draw=red]
\tikzstyle{readblue}=[-, draw=blue,thick]
\tikzstyle{transition}=[rectangle, draw=black,thick,minimum size=5mm]
\tikzstyle{revtransition}=[rectangle, draw=red!80,fill=red!30,thick,minimum size=5mm]
\tikzstyle{place}=[circle, draw=black,thick,minimum size=5mm]
\tikzstyle{placeblu}=[circle, draw=blue!80,fill=blue!20,thick,minimum size=5mm]
\node[place] (p1) at (0,5)  {};
\node[place] (p2) at (1.5,5)  {};
\node[place,tokens=1] (p3) at (3,5)  {};
\node[place,tokens=1]          (p4) at (1.5,2.5) {};
\node[place]          (p5) at (1.5,0) {};
\node[transition] (a) at (.5,3.75)  {$a$}
edge[pre] (p1)
edge[pre] (p2)
edge[post](p4);
\node[revtransition] (ar) at (-1,3.75)  {$\un{a}$}
edge[postred] (p1)
edge[postred] (p2)
edge[prered](p4);
\node[transition] (b) at (2.5,3.75)  {$b$}
edge[post] (p4)
edge[pre](p3)
edge[pre](p2);
\node[revtransition] (br) at (4,3.75)  {$\un{b}$}
edge[prered] (p4)
edge[postred](p3)
edge[postred](p2);
\node[transition] (c) at (1.5,1.25) {$c$}
edge[pre] (p4)
edge[post] (p5);
\node[revtransition] (c) at (0,1.25) {$c$}
edge[postred] (p4)
edge[prered] (p5);
\end{tikzpicture}
}}
	\caption{$N$ after executing $a$}
         \label{fig:unetrevnoncomplbis}
\end{subfigure}
\caption{An \rcn{} $N$ with reversing transitions $\Er = \setenum{\un{a},\un{c}}$ and the net $N$ after the firing of the transition $a$.}\label{fig:unetvariantstris}
\end{figure}

To clarify the crucial role played by  key-places, consider the \unet{} $N$ depicted \Cref{fig:unet}.
Simply adding the reversing transitions in accordance with Proposition~\ref{pr:constructing-a-reversible-unravel-net} would yield the net shown in \Cref{fig:revunetnoncompl}. However, this net is not an \rcn{}, as the net obtained by removing the reversing transitions is not complete. Now, consider the marking after firing $a$, as depicted in
\Cref{fig:unetrevnoncomplbis}. With this marking, the reversing transition $\un{b}$ is enabled and can be executed, contradicting the expectation that a transition can only be reversed  if it has been  previously executed.
The inclusion of the key-places for transitions $a, b$ and $c$ resolves  this problem.

\section{CCS processes as unravel nets}\label{sec:coding}
\subsection{Encoding of CCS processes}

\begin{figure*}[t]\label{fig:prefixing}
\begin{subfigure}[b]{0.25\textwidth}
	\centering
      	\scalebox{0.8}{\begin{tikzpicture}
\tikzstyle{inhibitorred}=[o-, draw=red,thick]
\tikzstyle{inhibitorblu}=[o-, draw=blue,thick]
\tikzstyle{pre}=[<-,thick]
\tikzstyle{post}=[->,thick]
\tikzstyle{readblue}=[-, draw=blue,thick]
\tikzstyle{transition}=[rectangle, draw=black,thick,minimum size=5mm]
\tikzstyle{place}=[circle, draw=black,thick,minimum size=5mm]
\node[place,tokens=1] (p1) at (0,5) [label=left:$\nil$] {};
\end{tikzpicture}}
         \caption{$\ccstonet{\nil}$}
         \label{fig:nil}
\end{subfigure}
\quad
\begin{subfigure}[b]{0.25\textwidth}
	\centering
      	\scalebox{0.8}{\begin{tikzpicture}
\tikzstyle{inhibitorred}=[o-, draw=red,thick]
\tikzstyle{unravel}=[->, draw=blue,thick]
\tikzstyle{pre}=[<-,thick]
\tikzstyle{post}=[->,thick]
\tikzstyle{readblue}=[-, draw=blue,thick]
\tikzstyle{transition}=[rectangle, draw=black,thick,minimum size=5mm]
\tikzstyle{place}=[circle, draw=black,thick,minimum size=5mm]
\tikzstyle{placeblu}=[circle, draw=blue!80,fill=blue!20,thick,minimum size=5mm]

\node[place,tokens=1] (p1) at (0,0) [label=above:$b.\nil$] {};
\node[place] (p2) at (2,0) [label=right:$\pastdec{ b}.\nil$] {};
\node[placeblu] (p3) at (2,1) [label=right:$\pastdec{ b}.\underline{b}$] {};

\node[transition] (b) at (1,0) {$b$}
edge[pre] (p1)
edge[post](p2)
edge[post](p3)

;

\end{tikzpicture}}
         \caption{$\ccstonet{b.\nil}$}
         \label{fig:b}
\end{subfigure}
\quad
\begin{subfigure}[b]{0.3\textwidth}
	\centering
      	\scalebox{0.8}{\begin{tikzpicture}
\tikzstyle{inhibitorred}=[o-, draw=red,thick]
\tikzstyle{unravel}=[->, draw=blue,thick]
\tikzstyle{pre}=[<-,thick]
\tikzstyle{post}=[->,thick]
\tikzstyle{readblue}=[-, draw=blue,thick]
\tikzstyle{transition}=[rectangle, draw=black,thick,minimum size=5mm]
\tikzstyle{place}=[circle, draw=black,thick,minimum size=5mm]
\tikzstyle{placeblu}=[circle, draw=blue!80,fill=blue!20,thick,minimum size=5mm]

\node[place,tokens=1] (p1) at (0,0) [label=above:$a.b.\nil$] {};
\node[place,] (p2) at (2,0) [label=above:$ \pastdec{a}.b.\nil$] {};
\node[placeblu] (p3) at (2,1) [label=right:$\hat{a}.\underline{a}$] {};
\node[place,] (p4) at (4,0) [label=right:$\pastdec{a}.\pastdec{ b}.\nil$] {};
\node[placeblu] (p5) at (4,1) [label=right:$\pastdec{ a}.\pastdec{ b}.\underline{b}$] {};

\node[transition] (a) at (1,0) {$a$}
edge[pre] (p1)
edge[post](p2)
edge[post](p3)
;

\node[transition] (b) at (3,0) {$\hat{a}.b$}
edge[pre] (p2)
edge[post] (p4)
edge[post] (p5)
;
\end{tikzpicture}}
         \caption{$\ccstonet{a.b.\nil}$}
         \label{fig:ab}
\end{subfigure}
     \caption{Example of nets corresponding to CCS processes}
\end{figure*}
We now recall the encoding of CCS terms into Petri nets due to Boudol and
Castellani~\cite{BCIC94}. It is worth noting that the original encoding was on
\textit{proved terms} instead of plain CCS. The difference between proved
terms and CCS is that  in a proved term the labels carry the position
of the process which did the action. Hence, we will use \emph{decorated}
versions of labels. For instance, $\pastdec a.b$ denotes an event $b$ that has
been preceded by the occurrence of $a$  in the term $a.b$. Analogously,
labels carry also information about the syntactical structure of a term,
actions corresponding to subterms of a choice and of a parallel composition
are also decorated with an index $i$ that indicates the subterm that performs
the action. An interesting aspect of this encoding is that these information
is reflected in the name of the places and the transitions of the nets, which
simplifies the formulation of the behavioural correspondence of a term and its
associated net. We write $\lbl{\_}$ for the function that removes decorations
for a name, e.g., $\lbl{\pastdec{a}.\pastdec{b}.c} = c$.

We are now in place to define and comment the encoding of a CCS term into a
net. The encoding is inductively defined on the structure of the CCS process.
For a CCS process $P$, its encoded net is
$\netBCsimple P = \langle S_P, T_P, F_P, {\sf m}_P\rangle$. The net
corresponding to the inactive process $\nil$, is just a net with just one
marked place and with no transition, that is:
\begin{defi}\label{bc:zeronet}
  The net
  $\zeronetBC = \langle \setenum{\nil}, \emptyset, \emptyset,
  \setenum{\nil}\rangle$ is the net associated to $\nil$ and it is called
  \emph{zero}.
\end{defi}
\noindent 
To ease notation in the constructions we are going to present, we adopt
the following conventions: let $X \subseteq S\cup T$ be a set of places and
transitions, we write $\pastdec{\alpha}.X$ for the set
$\{\pastdec{\alpha}. x\ |\ x \in X\}$ containing the {\em decorated} versions of
places and transitions in $X$. Analogously we lift this notation to relations:
if $R$ is a binary relation on $(S\cup T)$, then
$\pastdec \alpha.R = \{(\pastdec \alpha.x, \pastdec \alpha.y)\ |\ (x,y)\in
R\}$ is a binary relation on $(\alpha.S\cup \alpha.T)$.

The net $\netBCsimple{\alpha.P}$ corresponding to a process $\alpha.P$ extends
$\netBCsimple{P}$ with two extra places $\alpha.P$ and
$\newinenc{\hat{a}.\underline{\alpha}}$ and one transition $\alpha$. The place
$\alpha.P$ stands for the process that executes the prefix $\alpha$ and
follows as $P$. The place $\newinenc{\hat{a}.\underline{\alpha}}$ is not in
the original encoding of~\cite{BCIC94}; we have add it to ensure that the
obtained net is complete, which is essential for the definition of the
reversible net.
This will become clearer when commenting the encoding of the parallel composition.
It should be noted that this addition does not interfere with the behaviour of
the net, since all added places are final. Also a new transition, named
$\alpha$ is created and added to the net, and the flow relation is updated
accordingly. 
We use colours to indicate the name of places that serve
  as key places. The input and output places of the added transitions vary
  depending on the CCS operator under consideration.

Figures \ref{fig:nil}, \ref{fig:b} and \ref{fig:ab} report
respectively the encodings of the inactive process, of the process $b.\nil$
and $a.b.\nil$. Moreover the aforementioned figures systematically show how
the prefixing operator is rendered into Petri nets. As a matter of fact, the
net $a.b.\nil$ is built starting from the net corresponding to $b.\nil$ by
adding the prefix $a$.
We note that also the label of transitions is affected by appending the label
of the new prefix at the beginning. This is rendered in \Cref{fig:ab} where
the transition mimicking the action $b$ is labeled as $\hat{a}.b$ indicating
that an $a$ was done before $b$. In what follows we will often omit such
representation from figures.

\begin{defi}\label{bc:prefixnet}
  Let $P$ a CCS process and $\netBCsimple{P} = \macronet{P}{}$ be the
  associated net. Then $\prefnetBC \alpha P$ is the net
  $\macronet{{\alpha.P}}{}$ where
  \[
    \begin{array}{rcl}
      S_{\alpha.P} & = & \{\alpha.P, \newinenc{\pastdec{\alpha}.\underline{\alpha}}\} \cup  \pastdec \alpha.S_P \\
      T_{\alpha.P} & = & \{\alpha\} \cup \pastdec \alpha. T_p \\
      F_{\alpha.P} & = &\{(\alpha.P, \alpha),\newinenc{(\alpha,\pastdec{\alpha}.\underline{\alpha}) } \} \cup
                         \{ (\alpha,\pastdec \alpha.b) \ |\ b\in \marko{m}_{P} \}  \cup \pastdec \alpha.F_P\\
      \marko{m}_{\alpha.P} & = & \setenum{\alpha.P} \\
    \end{array}
  \]
\end{defi}

\noindent 
The set of \emph{key}-places of $\prefnetBC \alpha P$ is
$\pastdec
\alpha.\keypl{T_{P}}\cup\setenum{\newinenc{\pastdec{\alpha}.\underline{\alpha}}}$,
where $\keypl{T_{P}}$ are the \emph{key}-places of $\netBCsimple{P}$.

As we have done for prefixes, for a set $X$ of transitions and places we write $\pardec i X$ for
$\{\pardec i x \ |\ x\in X\}$, which straightforwardly lifts to relations.
We do the same with  $\plusdec i$ and $\restdec a$, which are the decorations for the sum and
the restriction.

The encoding of parallel goes along the line of the prefixing one. Also in
this case we have to decorate the places (and transitions) with the position
of the term in the syntax tree. To this end, each branch of the parallel is
decorated with $\pardec i$ with {\sf i} being the $i$-th position. Regarding the
transitions, we have to add all the possible synchronisations among the
processes in parallel. This is why, along with the transitions of the branches
(properly decorated with $\pardec i$) we have to add extra transitions to
indicate the possible synchronisation. Naturally a synchronisation is possible
when one label is the co-label of the other transition. Figure \ref{fig:par}
shows the net corresponding to the process $a.b \parallel \co{a}.c$. As we can
see, the encoding builds upon the encoding of $a.b$ and $\co{a}.c$, by (i)
adding to all the places and transitions whether the branch is the left one or
the right one and (ii) adding an extra transition and place for the only
possible synchronisation.
We add an extra place (in line with the prefixes) to mark the fact that a
synchronisation has taken place. Let us note that the extra places
$\underline{a}$, $\underline{\overline{a}}$ and $\underline{\tau}$ are used to
understand whether the two prefixes have done a partial synchronisation
 or they contributed to do a synchronisation. Suppose, for
example, that the net had not such places, and suppose that we have two tokens
in the places $\pardec 0\hat{a}.b$ and $\pardec 1\hat{\bar{ a}}.b$. Now,
how can we understand whether these two tokens are the result of the firing
sequence $a$,$\overline{a}$ or they are the result of the $\tau$ transition?
It is impossible, but by using the aforementioned extra-places, which are
instrumental to tell if a single prefix has executed, we can distinguish the
$\tau$ from the sequence $a\overline{a}$ and then reverse
accordingly.

\begin{defi}\label{bc:parnet}
  Let $\netBCsimple {P_1}$ and $\netBCsimple {P_2}$ be the nets associated to
  the processes $P_1$ and $P_2$. Then $\parnetBC {P_1} {P_2}$ is the net
  $\macronet{{P_1 \| P_2}}{}$ where
  \[
    \begin{array}{rcl}
      S_{P_1 \| P_2}
      & =
      & \pardec 0 {S_{P_1}}  \cup  \pardec 1 {S_{P_2}} \cup
        \setcomp{s_{\setenum{t,t'}}}{t\in T_{P_1} \wedge t'\in T_{P_2} \wedge
        \co{\lbl t} = \lbl {t'}}
      \\ [5pt]
      T_{P_1 \| P_2}
      & =
      & \pardec 0 {T_{P_1}}  \cup  \pardec 1 {T_{P_2}} \cup
        \{\setenum{t,t'}\ |\ t\in T_{P_1} \wedge t'\in T_{P_2} \wedge
        \co{\lbl t} = \lbl {t'}\}
      \\  [5pt]
      F_{P_1 \| P_2}
      & =
      & \pardec 0 F_{P_1} \cup \pardec 1 F_{P_2}        \\  [3pt]
&&
       \cup \setcomp{(\setenum{t,t'},s_{\setenum{t,t'}})}{t\in T_{P_1} \wedge \, t'\in T_{P_2}
        \, \wedge\,
        \co{\lbl t} = \lbl {t'}  }
      \\  [3pt]
      &
      &
        \cup \setcomp{(\pardec i s,\setenum{t_1,t_2})}{(s,t_i)\in F_{P_i}} \cup
        \setcomp{(\setenum{t_1,t_2},\pardec i s)}{(t_i,s)\in F_{P_i}\
        \land\ s\not\in\keypl{T_{P_{i}}}}
      \\  [5pt]
      \marko{m}_{P_1 \| P_2}
      & =
      &\pardec 0 \marko{m}_{P_1} \cup \pardec 1 \marko{m}_{P_2}
    \end{array}
  \]
\end{defi}
\noindent 
The \emph{key}-places of the resulting net are the following.
\[
  \pardec 0 \keypl{T_{P_1}} \cup \pardec 1 \keypl{T_{P_2}} \cup
  \{s_{\setenum{t,t'}}\ |\ t\in T_{P_1} \wedge t'\in T_{P_2} \wedge \co{\lbl
    t} = \lbl {t'}\}
\]
They are obtained by properly renaming the ones arising from the encoding of
the branches and those corresponding to the synchronisations of the
components.

The encoding of the choice operator is similar to the parallel one. The only
difference is that we do not have to deal with possible synchronisations since
the branches of a choice are mutually exclusive. \Cref{fig:plus}
illustrates the net corresponding to the process $a.b + \bar{a}.c$. As in the
previous examples, the net is built upon the subnets representing $a.b$ and
$\bar{a}.c$.

\begin{defi}\label{bc:choicenet}
  Let $\netBCsimple {P_i}$ be the net associated to the processes $P_i$ for
  $i\in I$. Then $\choicenetBCm{i\in I}{P_i}$ is the net
  $\macronet{+_{i\in I}P_i}{}$ where:
  \[
    \begin{array}{rcl}
      S_{+_{i\in I}P_i} & = & \cup_{i\in I}\plusdec i {S_{P_i}} \\
      T_{+_{i\in I}P_i} & = & \cup_{i\in I}\plusdec i {T_{P_i}} \\
      F_{+_{i\in I}P_i} & = & \setcomp{(\plusdec i x,\plusdec i y)}{(x,y)\in F_{P_i}}\cup
                              \setcomp{(\plusdec j s,\plusdec i t) }{s \in \marko{m}_{P_j}
                              \land \pre{t}\in \marko{m}_{P_i} \land i\neq j}\\
      \marko{m}_{+_{i\in I}P_i} & = & \cup_{i\in I}\plusdec i {\marko{m}_{P_i}}. \\
    \end{array}
  \]
\end{defi}
\noindent 
In this case the \emph{key}-places of $\choicenetBCm{i\in I}{P_i}$ are just
the union of all \emph{key}-places after the suitable renaming, i.e.,
$\cup_{i\in I}\plusdec i {\keypl{T_{P_i}}}$.

\begin{figure}[t]
  \begin{subfigure}[b]{0.45\textwidth}
    \centering
    \scalebox{0.8}{\begin{tikzpicture}[scale=.8]
\tikzstyle{inhibitorred}=[o-, draw=red,thick]
\tikzstyle{inhibitorblu}=[o-, draw=blue,thick]
\tikzstyle{pre}=[<-,thick]
\tikzstyle{post}=[->,thick]
\tikzstyle{readblue}=[-, draw=blue,thick]
\tikzstyle{transition}=[rectangle, draw=black,thick,minimum size=5mm]
\tikzstyle{place}=[circle, draw=black,thick,minimum size=5mm]
\tikzstyle{placeblu}=[circle, draw=blue!80,fill=blue!20,thick,minimum size=5mm]

\node[place,tokens=1] (p1) at (0,5) [label=left:$\parallel_0a.b$] {};
\node[place,tokens=1] (p2) at (3,5) [label=right:$\parallel_1\overline{a}.c$] {};
\node[place] (p3) at (0,2.5) [label=left:$\parallel_0\pastdec{a}.b$] {};
\node[placeblu] (p7) at (-1.5,3.75) [label=left:$\underline{a}$] {};
\node[placeblu] (p8) at (4.5,3.75) [label=right:$\underline{\overline{a}}$] {};
\node[placeblu](p11) at (1.5,2.5) [label=below:$\underline{\tau}$] {};

\node[placeblu] (p9) at (-1.5,1.25) [label=left:$\underline{b}$] {};
\node[placeblu] (p10) at (4.5,1.25) [label=right:$\underline{c}$] {};

\node[place] (p4) at (3,2.5) [label=right:$\parallel_1\hat{\bar{ a}}.c$] {};
\node[place] (p5) at (0,0) [label=left:$\parallel_0\pastdec{a}.\pastdec b$] {};
\node[place] (p6) at (3,0) [label=right:$\parallel_1\hat{\bar a}.\pastdec c$] {};
\node[transition] (a) at (0,3.75)  {$a$}
edge[pre] (p1)
edge[post](p3)
edge[post](p7)
;

\node[transition] (b) at (0,1.25) {$b$}
edge[pre] (p3)
edge[post] (p5)
edge[post] (p9)

;

\node[transition] (bara) at (1.5,3.75)  {$\tau$}
edge[pre] (p1)
edge[pre] (p2)
edge[post](p3)
edge[post](p4)
edge[post](p11);

\node[transition] (bara) at (3,3.75)  {$\bar{a}$}
edge[pre] (p2)
edge[post](p4)
edge[post] (p8)
;

\node[transition] (d) at (3,1.25)  {$c$}
edge[pre] (p4)
edge[post](p6)
edge[post](p10);
\end{tikzpicture}}
    \caption{$\ccstonet{a.b \para \bar{a}.c}$}
    \label{fig:par}
  \end{subfigure}
  \begin{subfigure}[b]{0.45\textwidth}
    \centering
    \scalebox{0.8}{\begin{tikzpicture}[scale=.8]
\tikzstyle{inhibitorred}=[o-, draw=red,thick]
\tikzstyle{inhibitorblu}=[o-, draw=blue,thick]
\tikzstyle{pre}=[<-,thick]
\tikzstyle{post}=[->,thick]
\tikzstyle{readblue}=[-, draw=blue,thick]
\tikzstyle{transition}=[rectangle, draw=black,thick,minimum size=5mm]
\tikzstyle{place}=[circle, draw=black,thick,minimum size=5mm]
\tikzstyle{placeblu}=[circle, draw=blue!80,fill=blue!20,thick,minimum size=5mm]

\node[place,tokens=1] (p1) at (0,5) [label=left:$+_0a.b$] {};
\node[place,tokens=1] (p2) at (3,5) [label=right:$+_1\co{a}.c$] {};
\node[place] (p3) at (0,2.5) [label=left:$+_0\pastdec{a}.b$] {};
\node[place] (p4) at (3,2.5) [label=right:$+_1\hat{\bar{ a}}.c$] {};
\node[place] (p5) at (0,0) [label=left:$+_0\pastdec{a}.\pastdec{b}$] {};
\node[place] (p6) at (3,0) [label=right:$+_1\hat{\bar{ a}}.\pastdec c$] {};

\node[placeblu] (p9) at (-1.5,1.25) [label=left:$\underline{b}$] {};
\node[placeblu] (p10) at (4.5,1.25) [label=right:$\underline{c}$] {};

\node[placeblu] (p7) at (-1.5,3.75) [label=left:$\underline{a}$] {};
\node[placeblu] (p8) at (4.5,3.75) [label=right:$\underline{\overline{a}}$] {};

\node[transition] (a) at (0,3.75)  {$a$}
edge[pre] (p1)
edge[pre] (p2)
edge[post] (p7)
edge[post](p3);

\node[transition] (b) at (0,1.25) {$b$}
edge[pre] (p3)

edge[pre] (p3)
edge[post] (p9)

edge[post] (p5)
;

\node[transition] (bara) at (3,3.75)  {$\bar{a}$}
edge[pre] (p1)
edge[pre] (p2)
edge[post](p8)
edge[post](p4);

\node[transition] (d) at (3,1.25)  {$c$}
edge[pre] (p4)
edge[post](p10)
edge[post](p6);
\end{tikzpicture}}
    \caption{$\ccstonet{a.b + \bar{a}.c}$}
    \label{fig:plus}
  \end{subfigure}
  \caption{Example of nets corresponding to CCS parallel and choice operator. We omit the trailing $\nil$}
\end{figure}

We write $T^{a}$ for the set all transitions in $T$ labelled by $a$ or $\bar{a}$, i.e.,
$\{t\in T\ |\ {\lbl t} = a\ \lor\ {\lbl t} = \bar{a}\}$.
The encoding of the hiding operator simply removes all transitions whose
labels corresponds to actions performed over the restricted name and the key-places associated to these
transitions.

\begin{defi}\label{bc:restnet}
  Let $P$ a CCS process and $\netBCsimple{P} = \macronet{P}{}$ be the
  associated net. Then $\restnetBC P a$ is the net $\macronet{P\setminus a}{}$
  where
  \[
    \begin{array}{rcl}
      S_{P\setminus a}
      & =
      & \restdec a (S_P \setminus \keypl{T^{a}})
      \\
      T_{P\setminus a}
      & =
      & \restdec a (T_P \setminus T^{a})
      \\
      F_{P\setminus a}
      & =
      & \setcomp{(\restdec a s,\restdec a t)}{s\in S_P \setminus \keypl{T^{a}}\ \land\ (s,t)\in F_{P}
      \ \land\  t\not\in T^{a}}\ \cup\ 
      \\
      &
      & \setcomp{(\restdec a t,\restdec a s)}{s\in S_P \setminus \keypl{T^{a}}\ \land\ (t,s)\in F_{P}
      \ \land\ t\not\in T^{a}}
      \\
      \marko{m}_{P\setminus a}
      & =
      &   \restdec a \marko{m}_{P}
      \\
    \end{array}
  \]
\end{defi}
\noindent 
In this case, as the number of \emph{firable} transitions decreases, a
corresponding decrease is observed in the number of \emph{key}-places. Hence,
$\keypl{\restdec a (T_P \setminus T^{a})} = \restdec a (\keypl{T_P} \setminus
\keypl{T^{a}})$.
\Cref{fig:restr} shows the net corresponding to the CCS process
  $(a.b \parallel \co{a}.c)\setminus a$. Observe that certain transitions are
  removed (those labeled with the restricted action), along with their
  associated key-places, although this is not strictly necessary. In fact,
  after the removal of the transitions, the respective places remain isolated because the only connected transitions have been removed.

\begin{figure}[t]
  \centering
  \scalebox{0.9}{\begin{tikzpicture}[scale=.8]
\tikzstyle{inhibitorred}=[o-, draw=red,thick]
\tikzstyle{inhibitorblu}=[o-, draw=blue,thick]
\tikzstyle{pre}=[<-,thick]
\tikzstyle{post}=[->,thick]
\tikzstyle{readblue}=[-, draw=blue,thick]
\tikzstyle{transition}=[rectangle, draw=black,thick,minimum size=5mm]
\tikzstyle{place}=[circle, draw=black,thick,minimum size=5mm]
\tikzstyle{placeblu}=[circle, draw=blue!80,fill=blue!20,thick,minimum size=5mm]

\node[place,tokens=1] (p1) at (0,5) [label=left:$\setminus_a\parallel_0a.b$] {};
\node[place,tokens=1] (p2) at (3,5) [label=right:$\setminus_a\parallel_1\overline{a}.c$] {};
\node[place] (p3) at (0,2.5) [label=left:$\setminus_a\parallel_0\pastdec{a}.b$] {};
\node[placeblu](p11) at (1.5,2.5) [label=below:$\setminus_a \underline{\tau}$] {};

\node[placeblu] (p9) at (-1.5,1.25) [label=left:$\setminus_a\underline{b}$] {};
\node[placeblu] (p10) at (4.5,1.25) [label=right:$\setminus_a\underline{c}$] {};

\node[place] (p4) at (3,2.5) [label=right:$\setminus_a\parallel_1\hat{\bar{ a}}.c$] {};
\node[place] (p5) at (0,0) [label=left:$\setminus_a\parallel_0\pastdec{a}.\pastdec b$] {};
\node[place] (p6) at (3,0) [label=right:$\setminus_a\parallel_1\hat{\bar a}.\pastdec c$] {};

\node[transition] (b) at (0,1.25) {$b$}
edge[pre] (p3)
edge[post] (p5)
edge[post] (p9)

;

\node[transition] (bara) at (1.5,3.75)  {$\tau$}
edge[pre] (p1)
edge[pre] (p2)
edge[post](p3)
edge[post](p4)
edge[post](p11);


\node[transition] (d) at (3,1.25)  {$c$}
edge[pre] (p4)
edge[post](p6)
edge[post](p10);
\end{tikzpicture}}
  \caption{The net $\restnetBC{(a.b \parallel \co{a}.c)}{a}$}
  \label{fig:restr}
\end{figure}

\begin{figure}[b]
  \centering
  \scalebox{0.9}{\begin{tikzpicture}[scale=.9]
\tikzstyle{inhibitorred}=[o-, draw=red,thick]
\tikzstyle{inhibitorblu}=[o-, draw=blue,thick]
\tikzstyle{pre}=[<-,thick]
\tikzstyle{post}=[->,thick]
\tikzstyle{epost}=[->, draw=magenta,thick]
\tikzstyle{placeblu}=[circle, draw=blue!80,fill=blue!20,thick,minimum size=5mm]

\tikzstyle{readblue}=[-, draw=blue,thick]
\tikzstyle{transition}=[rectangle, draw=black,thick,minimum size=5mm]
\tikzstyle{place}=[circle, draw=black,thick,minimum size=5mm]
\tikzstyle{eplace}=[circle, draw=magenta,thick,minimum size=5mm]

\node[place,tokens=1] (p1) at (-1,5) [label=left:$\parallel_0a.a$] {};
\node[place,tokens=1] (p2) at (2,3.75) [label=above:$\parallel_1 +_0 \overline{a}$] {};
\node[place] (p3) at (-1,2.5) [label=left:$\parallel_0\pastdec a.a$] {};
\node[placeblu] (p8) at (-2.5,3.75) [label=left:$\underline{a}$] {};

\node[place] (p4) at (2,1.25) [label=below:$\parallel_1 +_0 \hat{\bar{ a}}$] {};
\node[placeblu] (p10) at (3,1.25) [label=below:$\underline{a}$] {};
\node[placeblu] (p11) at (5,2.5) [label=right:$\underline{b}$] {};
\node[placeblu] (p12) at (.5,5) [label=right:$\underline{\tau}$] {};
\node[placeblu] (p13) at (.5,0) [label=right:$\underline{\tau}$] {};

\node[place] (p5) at (-1,0) [label=below:$\parallel_0\pastdec a.\pastdec a$] {};
\node[placeblu] (p9) at (-2.5,1.25) [label=left:$\underline{a}$] {};

\node[place, tokens=1] (p6) at (4,3.75) [label=right:$\parallel_1 +_1 b$] {};
\node[place] (p7) at (4,1.25) [label=below:$\parallel_1 +_1 \pastdec b$] {};

\node[transition] (a1) at (-1,3.75)  {$a$}
edge[pre] (p1)
edge[post](p3)
edge[post](p8)
;

\node[transition] (a2) at (-1,1.25) {$a$}
edge[pre] (p3)
edge[post] (p5)
edge[post](p9)

;

\node[transition] (a3) at (.5,1.25) {$\tau$}
edge[pre](p3)
edge[pre] (p2)
edge[post] (p5)
edge[post] (p4)
edge[post] (p13)
;

\node[transition] (tau) at (.5,3.75)  {$\tau$}
edge[pre] (p1)
edge[pre] (p2)
edge[post] (p3)
edge[post] (p4)
edge[post] (p12);

\node[transition] (bara) at (2,2.5)  {$\bar{a}$}
edge[pre] (p2)
edge[pre] (p2)
edge[pre](p6)
edge[post](p4)
edge[post](p10)

;

\node[transition] (b) at (4,2.5)  {$b$}
edge[pre] (p6)
edge[pre] (p2)
edge[post](p7)
edge[post](p11)
;

\end{tikzpicture}}
  \caption{A complex example: $\ccstonet{a.a \parallel \co{a} + b}$}
  \label{fig:complex}
\end{figure}

In Figure \ref{fig:complex}, a more complex example is depicted, illustrating
the net corresponding to the process $a.a \parallel \co{a} + b$. In this case,
the process on the right of the parallel composition can synchronise with the
one on the left one in two different occasions.
This is why there are two different transitions representing the
synchronisations.
However, due to the nature of the process on the right-hand side being a
choice, there is a possibility that the right branch of that choice gets
executed, thereby preventing the synchronisation from occurring. As the right
branch of the parallel constitutes a choice between two options, the encoding
designates these branches as `$\pardec{1}\plusdec{0}$' and `$\pardec{1}\plusdec{0}$'
respectively. These labels serve to identify the left and right branches of
the choice, which is situated within the right branch of the parallel
operator.

The following proposition is instrumental for the main correspondence result.

\begin{prop}\label{pr:allareun}
  The nets defined in
  \Cref{bc:zeronet,bc:prefixnet,bc:parnet,bc:choicenet,bc:restnet} are {\em
    complete} unravel nets.
\end{prop}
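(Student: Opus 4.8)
The plan is to prove the statement by induction on the way $\netBCsimple{P}$ is built, showing that each of \Cref{bc:prefixnet,bc:parnet,bc:choicenet,bc:restnet} turns complete unravel nets into a complete unravel net; the base case is $\zeronetBC$, which qualifies for trivial reasons (no transitions, one marked place, so every clause of \Cref{de:unravel-net} and completeness holds by inspection). Since the conditions at stake — safety, the two clauses of \Cref{de:unravel-net}, and completeness — quantify only over finite firing sequences, finite executions and pairs of transitions, they are preserved under directed unions of nets, so the case of a regular infinite $P$ follows by approximating it with its finite truncations (equivalently, by reading the same argument coinductively). Along the induction I would carry the strengthened hypothesis that the key-places listed after each definition form a \emph{witnessing family}: the map $t\mapsto s_t$ is a bijection of $T$ onto $\keypl{T}$ with $\pre{s_t}=\setenum t$, $\post{s_t}=\emptyset$, and $s_t$ the unique element of $\keypl{T}\cap\post t$; and that the initial marking of every encoded net is non-empty.

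With this invariant, clause~2 of \Cref{de:unravel-net} is automatic: distinct $t,t'$ have distinct key-places, so $s_t\in\post t\setminus\post{t'}$ and hence $\post t\neq\post{t'}$. Thus in each step it remains to check safety, clause~1 (that $\subnet{N}{X}{}$ is a \ca{} for all $X\in\states N$), and completeness, and the last is read off the construction: the only fresh transitions introduced are $\alpha$ (for prefixing) and the synchronisations $\setenum{t_1,t_2}$ (for parallel), each given a fresh final key-place ($\pastdec\alpha.\un\alpha$, resp.\ $s_{\setenum{t_1,t_2}}$), with $\card{\post\alpha}=\card{\marko{m}_P}+1\geq2$ and $\card{\post{\setenum{t_1,t_2}}}=1+(\card{\post{t_1}}-1)+(\card{\post{t_2}}-1)\geq3$ by the invariant for $\netBCsimple{P_1},\netBCsimple{P_2}$. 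For prefixing, $\alpha.P$ is a source, so $\alpha$ fires at most once and afterwards the net reduces to the $\pastdec\alpha$-decorated $\netBCsimple P$ plus an extra final place; an execution of $\prefnetBC\alpha P$ is $\emptyset$ or $\setenum\alpha\cup\pastdec\alpha.X'$ with $X'\in\states{\netBCsimple P}$, and its subnet is the decorated \ca{} $\subnet{\netBCsimple P}{X'}{}$ with a single fresh source prepended and a final place appended — still a \ca{}. Restriction is the mildest case: it merely deletes the transitions in $T^a$ together with their private key-places $\keypl{T^a}$; since key-places are private, no pre- or postset of a surviving transition changes, executions can only shrink and reachable markings stay sets, so every property descends from $\netBCsimple P$.

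For the choice $\choicenetBCm{i\in I}{P_i}$ I would use that the construction is applied only to \emph{guarded} summands, each component being $\prefnetBC{\alpha_i}{P_i}$ with singleton initial marking $\setenum{\alpha_i.P_i}$ and unique initially enabled transition $\alpha_i$; the added arcs make $\plusdec i{\alpha_i}$ consume, besides its own source, the unique initial token of every other branch. Hence the first transition fired selects a branch $i$ and permanently starves the others, so from then on the reachable behaviour is that of the $\plusdec i$-decorated $\netBCsimple{P_i}$; safety and clause~1 follow from the induction hypothesis — the extra sources $\plusdec j{\marko{m}_{P_j}}$ ($j\neq i$) appear in $\subnet{N}{X}{}$ exactly when $\plusdec i{\alpha_i}\in X$, as marked places with a single outgoing arc, which is harmless for \ca{}-ness — and completeness is inherited because the new arcs only \emph{enter} $\plusdec i{\alpha_i}$ and leave every postset untouched.

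The parallel case $\parnetBC{P_1}{P_2}$ is the crux. The key structural fact is that a synchronisation $\setenum{t_1,t_2}$ consumes exactly $\pardec 0\pre{t_1}+\pardec 1\pre{t_2}$ — the same tokens $\pardec 0 t_1$ and $\pardec 1 t_2$ consume separately — and produces the $\pardec i$-images of $\post{t_i}$ with the key-place of $t_i$ removed, plus its own fresh key-place $s_{\setenum{t_1,t_2}}$. I would make this precise by proving that every firing sequence of $\parnetBC{P_1}{P_2}$ projects onto firing sequences of $\netBCsimple{P_1}$ and of $\netBCsimple{P_2}$, reading each occurrence of $\setenum{t_1,t_2}$ as one of $t_1$ on the left and one of $t_2$ on the right, and conversely that interleavings of firing sequences of the two components with matched synchronisations fused are firing sequences of the parallel net. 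Safety then follows: a token in $\pardec i s$ can only be produced by a transition projecting to a producer of $s$ in $\netBCsimple{P_i}$, where $s$ is marked at most once (unravel nets), and $s_{\setenum{t_1,t_2}}$ only by $\setenum{t_1,t_2}$, which fires at most once. For $X\in\states N$ the subnet $\subnet{N}{X}{}$ is a \ca{}: side places inherit the "at most one producer, at most one consumer" property via the projection from the decorated \ca{} $\subnet{\netBCsimple{P_i}}{X_i}{}$, the fusion places are fresh and final, $T$ is executed by the realising firing sequence, and that sequence linearises the flow relation of $\subnet{N}{X}{}$ (each place there having a unique producer), ruling out cycles. The step I expect to cost the most effort is precisely this projection/fusion bookkeeping, together with checking that removing $t_i$'s key-place inside $\setenum{t_1,t_2}$ and adding the three flavours of extra key-places (the $\un a$, $\un{\co a}$, $\un\tau$ places of \Cref{fig:par}) preserve the witnessing family — hence clause~2 and completeness.
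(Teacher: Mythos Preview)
Your proposal is correct and follows the same structural induction on CCS terms as the paper's own proof. Your treatment is considerably more detailed---carrying an explicit invariant on key-places to dispatch clause~2 of \Cref{de:unravel-net}, spelling out the projection/fusion argument for parallel composition, and addressing the coinductive case via directed unions---whereas the paper's proof is a short sketch that names the relevant observation in each inductive case without elaborating.
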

\begin{proof}
  By induction on the structure of a CCS process. Clearly the net $\zeronetBC$
  is an unravel net and it is trivially complete because it has no transition.
  Assume now that $\netBCsimple{P} = \macronet{P}{}$ associated with the CCS
  process $P$ is a complete \unet. Also
  $\prefnetBC \alpha P = \macronet{{\alpha.P}}{}$ is an \unet{} as it is
  obtained by adding a new transition $\alpha$ that precedes all transitions
  in $T_P$. Moreover, a new key-place
  $\newinenc{\pastdec{\alpha}.\underline{\alpha}}$ is added for such
  transition. Assuming now that $\netBCsimple{P_1}$ and $\netBCsimple{P_2}$
  are the two complete \unet{s} associated with $P_1$ and $P_2$. The net
  $\parnetBC {P_1} {P_2}$ is an \unet{} as the two components, when
  \emph{synchronise}, have the effect of the local changes beside the
  key-places.
  For each synchronising transition $\setenum{t, t'}$, a corresponding
  key-place $s_{\setenum{t, t'}}$ exists, rendering the net complete.
  Similarly, $\choicenetBCm{i\in I}{P_i}$ is a complete unravel net, as each
  $\netBCsimple{P_i}$ is a complete unravel net. The additional flow arcs
  ensure that only transitions of a specific component are executed. Lastly,
  $\restnetBC P a$ is complete because the elimination of transitions does not
  add any new behaviour.
\end{proof}

\subsection{Encoding of RCCS processes}

We are now at the point where we can define the network that corresponds to an
RCCS process. So far, our focus has been on encoding CCS processes into nets.
Since RCCS is built upon CCS processes, our encoding of RCCS naturally builds
upon the encoding of CCS. To do so, we first introduce the concept of
ancestor, i.e., the initial process from which an RCCS process is derived.
Notably, in the context of our discussion involving coherent RCCS processes
(as defined in \cref{lbl:initial_proc}), an RCCS process invariably possesses
an ancestor.

The ancestor $\ancestor{R}$ of an RCCS process $R$ can be calculated through
syntactical analysis of $R$, as all information about its past is stored
within memories. The sole instance in which a process must wait for its
counterpart is during a memory fork, denoted as $\entry{1}$ or $\entry{2}$.

\begin{defi}\label{lbl:ancestor}
  Given a coherent RCCS process $R$, its ancestor $\ancestor{R}$ is derived by
  using the inference rules of \Cref{fig:ancestor}. The rules use the
  pre-congruence relation $\precon$ defined as $\con$ (see Figure~\ref{fig:struct})
  with the exception that rule \textsc{Split} can be only applied from right
  to left.
\end{defi}

\begin{figure}[t]
  \begin{mathpar}
    \inferrule*[Right={\scriptsize(\textsc{Act})}]
    {}
    {\mem{\_}
      {\alpha^{\newinenc z}_z}
      {\sum_{i\in I \setminus \{z\}}\alpha_i.P_i}{m}
      \proc P \anc m \proc\sum_{i\in I}\alpha_i.P_i}
    \\
    \inferrule*[Right={\scriptsize(\textsc{Pre})}]
    {R \precon R' \and R' \anc S' \and S' \precon S}{R \anc S}
    \and
    \inferrule*[Right={\scriptsize(\textsc{Par})}]
    {R\anc R' }
    {R \parallel S \anc R' \parallel S}
    \\
    \inferrule*[Right={\scriptsize(\textsc{Res})}]
    {R\anc R' }{R\hide{a} \anc R' \hide{a}}
    \and
    \inferrule*[Right={\scriptsize(\textsc{Init})}]
    {R\anc^* \emp \proc P}{\ancestor{R} = P}
  \end{mathpar}
  \caption{Ancestor inference rules}
  \label{fig:ancestor}
\end{figure}

\begin{exa}
Consider the RCCS term $R$ below:
  \begin{align*}
    R = \ \
    & \lpar{\event{m_2}{a^1}{\nil} \cdot \lpar{\emp}}\proc b \ \parallel\
      \rpar{\event{m_2}{a^1}{\nil} \cdot \lpar{\emp}}\proc c \ \parallel\
    \\
    & \event{m_1}{\co a^1}{\nil} \cdot \lpar{\rpar{\emp}}\proc \nil
      \ \parallel\ \rpar{\rpar{\emp}}\proc d
  \end{align*}
  with $m_1 = \lpar{\emp}$ and $m_2 = \lpar{\rpar{\emp}}$
  By applying the inference rules in \Cref{fig:ancestor}, we compute its ancestor as follows:
  \begin{align*}
  	R \anc\, & \event{m_2}{a^1}{\nil}\cdot \lpar{\emp}\proc (b \parallel c ) \parallel  & (\textsc{Act}) \\
	& \event{m_1}{\co a^1}{\nil} \cdot \lpar{\rpar{\emp}}\proc \nil
      \ \parallel\ \rpar{\rpar{\emp}}\proc d  \\
      \anc\, & \lpar{\emp}\proc a.(b\parallel c) \parallel \event{m_1}{\co a^1}{\nil} \cdot \lpar{\rpar{\emp}}\proc \nil \parallel\ \rpar{\rpar{\emp}}\proc d & (\textsc{Act})\\
            \anc\, & \lpar{\emp}\proc a.(b\parallel c) \parallel \lpar{\rpar{\emp}}\proc  \co{a} \parallel\ \rpar{\rpar{\emp}}\proc d & (\textsc{Pre})\\
                        \anc\, & \lpar{\emp}\proc a.(b\parallel c) \parallel \rpar{\emp}\proc ( \co{a} \parallel d)  & (\textsc{Pre})\\
             \anc \,& \emp \proc ( a.(b\parallel c) \parallel \co{a} \parallel d)
  \end{align*}
  that is $\ancestor{R} = a.(b\parallel c) \parallel \co{a} \parallel d$. Let
  us note that reversing a synchronisation is achieved by applying the
  \textsc{Act} rule twice---each monitored process can undo its respective
  part of the synchronisation. This is possible due to the coherence of
  processes. Essentially, upon encountering a synchronisation event, a process
  possesses adequate information to revert to its previous local state.
  Conversely, when encountering a split event, the process must await its
  siblings (as per the \textsc{Pre} rule) to reconstruct the parallel process.
  \end{exa}

\begin{lem}\label{lem:ancestor}
  For any coherent RCCS process $R$ its ancestor $\ancestor{R}$ exists and it is
  unique.
\end{lem}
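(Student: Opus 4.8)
The plan is to establish existence and uniqueness separately, exploiting the characterisation of coherent processes as forward-reachable ones.

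\emph{Existence.} Since $R$ is coherent there is an initial process $\emp\proc P$ with $\emp\proc P \rccst{\,}^* R$, and by \Cref{prp:fw_only} we may take this derivation to be purely forward, say $\emp\proc P = R_0 \fw R_1 \fw \cdots \fw R_n = R$. I would show by induction on $n$ that $R_n \anc^* \emp\proc P$; rule \textsc{Init} then gives $\ancestor{R}=P$, so the ancestor exists. The base case is the reflexivity of $\anc^*$, and for the step it suffices to prove that one forward transition $R_{n-1}\xrightarrow{m:\alpha}R_n$ is mirrored by $R_n \anc^* R_{n-1}$, which I would do by induction on its derivation: \textsc{r-act} is undone by a single use of \textsc{Act}; \textsc{r-syn} by two uses of \textsc{Act} through \textsc{Par}, one per component, peeling the two full-synchronisation events (which \textsc{Act} accepts, as its first field is a wildcard), exactly as in the worked example above; \textsc{r-par-l}, \textsc{r-par-r} and \textsc{r-res} follow from the inner hypothesis together with the contextual rules \textsc{Par} and \textsc{Res}; and \textsc{r-equiv} is handled with \textsc{Pre}, after remarking that any use of \textsc{Split} in the left-to-right (splitting) direction becomes, read backwards, a merging step, which is the direction that $\precon$ permits.

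\emph{Uniqueness.} I would argue that $\anc$ is terminating and confluent on coherent processes, so its normal form is unique (up to $\aeq$). Termination: each \textsc{Act} removes one memory event and each \textsc{Pre} step using \textsc{Split} right-to-left strictly shrinks the total size of all memories, while the remaining structural steps (the \textsc{res} law, $\alpha$-conversion) leave it unchanged; after normalising the placement of restrictions these cannot be iterated indefinitely, so total memory size is a well-founded measure. Local confluence: the critical pairs---two \textsc{Act} steps at different monitored subterms, an \textsc{Act} step and a \textsc{Pre}-merge at disjoint positions, and the pure rearrangements---all close, so by Newman's Lemma normal forms are unique. Finally, for coherent $R$ the normal form must be an initial process $\emp\proc Q$: if some memory were non-empty, either a monitored subterm would expose a non-fork event on top of its memory (enabling \textsc{Act}), or two siblings would expose matching fork events $\entry{0}$, $\entry{1}$ (enabling a \textsc{Pre}-merge), and coherence is precisely what guarantees that fork events occur only in such matched pairs. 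Hence $\ancestor{R}$ is well defined. An alternative route to uniqueness is to define a memory-erasing, un-forking map $\lceil\cdot\rceil$ from coherent processes to $\procs$, check that every $\anc$ rule preserves it and that $\lceil\emp\proc Q\rceil=Q$, whence $R\anc^*\emp\proc Q$ forces $Q=\lceil R\rceil$.

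\emph{Main obstacle.} The delicate issue throughout is the interplay with structural congruence. In the existence proof it surfaces in the \textsc{r-equiv} case: one must check that restricting \textsc{Split} to the merging direction in $\precon$ still lets $\anc^*$ undo every forward step, including spurious ``merge, then act, then split'' round-trips that \textsc{r-equiv} allows. In the uniqueness proof it is either the confluence-modulo-$\equiv$ bookkeeping or, in the alternative, the well-definedness of $\lceil\cdot\rceil$, which must locate the sibling of each forked monitored process inside a parallel context. Both hinge on coherence, which ensures that the forest of memories is consistent and that forks are always completable; isolating and proving that structural invariant of coherent processes is the real work behind the lemma.
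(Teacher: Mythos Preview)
Your existence argument is essentially the paper's, with one cosmetic difference: the paper first invokes \Cref{cor:loop} to turn the forward derivation $\emp\proc P\to^{*}R$ into a backward one $R\bk^{*}\emp\proc P$ and then observes that each $\bk$-step has a counterpart among the $\anc$-rules, whereas you skip the Loop-Lemma detour and argue directly that each forward step can be inverted by $\anc^{*}$. Your route is arguably cleaner, since a single $\bk$-step via \textsc{r-syn$\rev$} in fact corresponds to \emph{two} uses of \textsc{Act} in $\anc$ (one per partner), so the paper's ``one $\bk$, one $\anc$'' phrasing is already a slight oversimplification that you unpack explicitly.

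On uniqueness the paper is laconic to the point of not really addressing it: the entire proof text is the existence argument above. Your confluence-plus-termination analysis (and the alternative invariant map $\lceil\cdot\rceil$) is therefore genuinely additional content; it is correct in outline and supplies what the paper leaves implicit. You also correctly isolate the $\equiv$-versus-$\precon$ mismatch in the \textsc{r-equiv} case as the delicate point; the paper's argument glosses over exactly the same issue when mapping \textsc{r-equiv$\rev$} (which uses full $\equiv$) to \textsc{Pre} (which uses only $\precon$).
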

\begin{proof}
  Since $R$ is a coherent process then there exists a CCS process $P$ such
  that $\emp \proc P \rccst{\,}^* R$. By Property \ref{prp:fw_only}
  we have that $\emp \proc P \rightarrow^*R$, and by
  applying Corollary \ref{cor:loop} we obtain that $ R \bk^{*} \emp \proc P$.
  The proof is then by induction on the number $n$ of reductions contained in $\bk^{*}$
  and by noticing that for each application of $\bk$ there exists a corresponding
  rule of $\anc$.
\end{proof}

\noindent 
There is a tight correspondence between RCCS memories and transitions/places
names. That is, a memory contains all the information to recover the path from
the root to the process itself. To this end, we introduce the function
$\cammino{\cdot}$, which is inductively defined as follows
\begin{align*}
  & \cammino{m\cdot\event{m'}{\alpha^{ i}}{\nil}}
    = \cammino{m\cdot\event{*}{\alpha^{ i}}{\nil}}
    = \pastdec{\alpha}.\cammino{m}
  \\
  & \cammino{m\cdot\event{m'}{\alpha^{ i}}{Q}}
    = \cammino{m\cdot\event{*}{\alpha^{ i}}{Q}}
    = \plusdec{i}\pastdec{\alpha}.\cammino{m}
  \\
  &\cammino{m\cdot\entry{i}}
    = \pardec {i}\,\cammino{m}
  \\
  & \cammino{\emp} = \epsilon
\end{align*}

\begin{exa}\label{ex:rccs_processess}
  Let us consider the RCCS processes $R_1$ and $R_2$ defined below
    \begin{align*}
      R_1 = &
       \mem{*}{a^1}{\nil}{\lpar\emp} \proc b \para \rpar\emp\proc
        \bar{a}.c
      \\ R_2  =
      & \mem{*}{b^1}{\nil}{\mem{m_2}{a^1}{\nil}{\lpar\emp}} \proc
        \nil \para \mem{m_1}{\co{a}^1}{\nil}{\rpar\emp}\proc c
    \end{align*}
    with $m_i = \entry{i}\cdot \emp$. Their corresponding nets are shown in
    Figure~\ref{fig:ex_para}.

\begin{figure}[t]
  \begin{subfigure}[b]{0.5\textwidth}
    \centering
    \scalebox{0.9}{{\scriptsize
\begin{tikzpicture}[scale=.8]
\tikzstyle{inhibitorred}=[o-,draw=red,thick]
\tikzstyle{inhibitorblu}=[o-,draw=blue,thick]
\tikzstyle{pre}=[<-,thick]
\tikzstyle{post}=[->,thick]
\tikzstyle{prered}=[<-,thick,draw=red]
\tikzstyle{postred}=[->,thick,draw=red]
\tikzstyle{readblue}=[-, draw=blue,thick]
\tikzstyle{transition}=[rectangle, draw=black,thick,minimum size=5mm]
\tikzstyle{revtransition}=[rectangle, draw=red!80,fill=red!30,thick,minimum size=5mm]
\tikzstyle{place}=[circle, draw=black,thick,minimum size=5mm]
\tikzstyle{placeblu}=[circle, draw=blue!80,fill=blue!20,thick,minimum size=5mm]
\node[place] (p1) at (-0.75,5) [label=above:$\parallel_0a.b$] {};
\node[place,tokens=1] (p2) at (3,5) [label=above:$\parallel_1\overline{a}.c$] {};
\node[place,tokens=1]          (p3) at (-0.75,2.5) [label={[xshift=.6cm, 
                                           yshift=-0.8cm]$\parallel_0\pastdec{a}.b$}] {};
\node[place]          (p4) at (3,2.5) [label={[xshift=-.6cm, 
                                        yshift=-0.8cm]$\parallel_1\hat{\bar{ a}}.c$}] {};
\node[place]          (p5) at (-0.75,0) [label=right:$\parallel_0\pastdec{a}.\pastdec b$] {};
\node[place]          (p6) at (3,0) [label=left:$\parallel_1\hat{\bar a}.\pastdec c$] {};
\node[placeblu,tokens=1]       (p7) at (-2,2.5) [label=below:$\underline{a}$] {};
\node[placeblu]       (p8) at (4.25,2.5) [label=below:$\underline{\overline{a}}$] {};
\node[placeblu]       (p9) at (-2,0) [label=right:$\underline{\overline{b}}$] {};
\node[placeblu]       (p10) at (4.25,0) [label=left:$\underline{c}$] {};
\node[placeblu]       (p11) at (1.15,2) [label=below:$\underline{\tau}$] {};
\node[transition] (a) at (-0.75,3.75)  {$a$}
edge[pre] (p1)
edge[post](p3)
edge[post](p7);
\node[revtransition] (ar) at (-2,3.75)  {$\underline{a}$}
edge[postred, bend left] (p1)
edge[prered](p3)
edge[prered](p7);
\node[transition] (b) at (-0.75,1.25) {$b$}
edge[pre] (p3)
edge[post] (p5)
edge[post] (p9);
\node[revtransition] (br) at (-2,1.25) {$\underline{b}$}
edge[postred] (p3)
edge[prered] (p5)
edge[prered] (p9);
\node[transition] (tau) at (.5,3.75)  {$\tau$}
edge[pre] (p1)
edge[pre, bend left] (p2)
edge[post](p3)
edge[post](p11)
edge[post, bend right](p4);
\node[revtransition] (taur) at (1.8,3.75)  {$\tau$}
edge[postred, bend right] (p1)
edge[postred] (p2)
edge[prered, bend left](p3)
edge[prered](p11)
edge[prered](p4);
\node[transition] (aa) at (3,3.75)  {$\bar{a}$}
edge[pre] (p2)
edge[post](p4)
edge[post] (p8);
\node[revtransition] (aar) at (4.25,3.75)  {$\underline{\bar{a}}$}
edge[postred, bend right] (p2)
edge[prered](p4)
edge[prered] (p8);
\node[transition] (c) at (3,1.25)  {$c$}
edge[pre] (p4)
edge[post](p6)
edge[post](p10);
\node[revtransition] (cr) at (4.25,1.25)  {$\underline{c}$}
edge[postred] (p4)
edge[prered](p6)
edge[prered](p10);
\end{tikzpicture}
}}
    \caption{$\revnet{\ccstonet{R_1}}$}
    \label{fig:reva}
  \end{subfigure}
  \begin{subfigure}[b]{0.5\textwidth}
    \centering
    \scalebox{0.9}{{\scriptsize
\begin{tikzpicture}[scale=.8]
\tikzstyle{inhibitorred}=[o-,draw=red,thick]
\tikzstyle{inhibitorblu}=[o-,draw=blue,thick]
\tikzstyle{pre}=[<-,thick]
\tikzstyle{post}=[->,thick]
\tikzstyle{prered}=[<-,thick,draw=red]
\tikzstyle{postred}=[->,thick,draw=red]
\tikzstyle{readblue}=[-, draw=blue,thick]
\tikzstyle{transition}=[rectangle, draw=black,thick,minimum size=5mm]
\tikzstyle{revtransition}=[rectangle, draw=red!80,fill=red!30,thick,minimum size=5mm]
\tikzstyle{place}=[circle, draw=black,thick,minimum size=5mm]
\tikzstyle{placeblu}=[circle, draw=blue!80,fill=blue!20,thick,minimum size=5mm]
\node[place] (p1) at (-0.75,5) [label=above:$\parallel_0a.b$] {};
\node[place] (p2) at (3,5) [label=above:$\parallel_1\overline{a}.c$] {};
\node[place]          (p3) at (-0.75,2.5) [label={[xshift=.6cm, 
                                           yshift=-0.8cm]$\parallel_0\pastdec{a}.b$}] {};
\node[place,tokens=1]          (p4) at (3,2.5) [label={[xshift=-.6cm, 
                                        yshift=-0.8cm]$\parallel_1\hat{\co{ a}}.c$}] {};
\node[place,tokens=1]          (p5) at (-0.75,0) [label=right:$\parallel_0\pastdec{a}.\pastdec b$] {};
\node[place]          (p6) at (3,0) [label=left:$\parallel_1\hat{\bar a}.\pastdec c$] {};
\node[placeblu]       (p7) at (-2,2.5) [label=below:$\underline{a}$] {};
\node[placeblu]       (p8) at (4.25,2.5) [label=below:$\underline{\overline{a}}$] {};
\node[placeblu,tokens=1]       (p9) at (-2,0) [label=right:$\underline{\overline{b}}$] {};
\node[placeblu]       (p10) at (4.25,0) [label=left:$\underline{c}$] {};
\node[placeblu,tokens=1]       (p11) at (1.15,2) [label=below:$\underline{\tau}$] {};
\node[transition] (a) at (-0.75,3.75)  {$a$}
edge[pre] (p1)
edge[post](p3)
edge[post](p7);
\node[revtransition] (ar) at (-2,3.75)  {$\underline{a}$}
edge[postred, bend left] (p1)
edge[prered](p3)
edge[prered](p7);
\node[transition] (b) at (-0.75,1.25) {$b$}
edge[pre] (p3)
edge[post] (p5)
edge[post] (p9);
\node[revtransition] (br) at (-2,1.25) {$\underline{b}$}
edge[postred] (p3)
edge[prered] (p5)
edge[prered] (p9);
\node[transition] (tau) at (.5,3.75)  {$\tau$}
edge[pre] (p1)
edge[pre, bend left] (p2)
edge[post](p3)
edge[post](p11)
edge[post, bend right](p4);
\node[revtransition] (taur) at (1.8,3.75)  {$\tau$}
edge[postred, bend right] (p1)
edge[postred] (p2)
edge[prered, bend left](p3)
edge[prered](p11)
edge[prered](p4);
\node[transition] (aa) at (3,3.75)  {$\bar{a}$}
edge[pre] (p2)
edge[post](p4)
edge[post] (p8);
\node[revtransition] (aar) at (4.25,3.75)  {$\underline{\bar{a}}$}
edge[postred, bend right] (p2)
edge[prered](p4)
edge[prered] (p8);
\node[transition] (c) at (3,1.25)  {$c$}
edge[pre] (p4)
edge[post](p6)
edge[post](p10);
\node[revtransition] (cr) at (4.25,1.25)  {$\underline{c}$}
edge[postred] (p4)
edge[prered](p6)
edge[prered](p10);
\end{tikzpicture}
}}
    \caption{$\revnet{\ccstonet{R_2}}$}
    \label{fig:revtaub}
  \end{subfigure}
  \caption{Example of nets corresponding to RCCS process $R_1$ and $R_2$}
  \label{fig:ex_para}
\end{figure}

  We have that the path of the left process is
  $\cammino{ \mem{*}{a^1}{\nil}{\lpar\emp}} = \pardec{0}\pastdec{a}$, while
  the path of the right process is $\cammino{\rpar\emp} = \pardec{1}$.
\end{exa}
The encoding of an RCCS process should yield an equivalent net to that of its
ancestor, with the only potential distinction being the marking – indicating
the specific locations where tokens are placed.
And such positions are inferred from the information stored in memories.
Following the intuitions in \cref{sec:coding}, we will treat names of places
and transitions as strings. When we write $\phi X$, where $X$ is a set of
strings and $\phi \in \{\pardec{i},\plusdec{i},\pastdec{\alpha},\restdec{a}\}$,
we are indicating the set $\{\phi x \st x\in X\}$.
Also, we will indicate with $\tilde{\phi}$ the sequence $\phi_1 \cdots \phi_n$
with $\phi_i \in \{\pardec{j},\plusdec{j },\pastdec{\alpha},\restdec{a}\}$.
Then the \emph{marking} function $\mrkof{\cdot}$ is inductively defined as follows:
\begin{align*}
  \mrkof{R \parallel S} =\
  & \mrkof{R} \fuse \mrkof{S}
  \\
  \mrkof{R \backslash a} =\
  & \restdec a\mrkof{R}
  \\
  \mrkof{m\cdot\event{m_1}{\alpha^{ i}}{\nil}\cdot \emp \proc P} =\
  &
    \{\alpha, m_1\} \cup \pastdec{\alpha}.\mrkof{m\cdot \emp \proc P}
  \\
  \mrkof{m\cdot\event{m_1}{\alpha^{ i}}{Q}\cdot \emp \proc P} =\
  &
    \{ \plusdec{i}\alpha, m_1\} \cup \plusdec{i}\pastdec{\alpha}.\mrkof{m\cdot \emp \proc P}
  \\
  \mrkof{m\cdot\event{*}{\alpha^{ i}}{\nil}\cdot \emp \proc P} =\
  & \{ \pastdec{\alpha}.\underline{\alpha}\} \cup \pastdec{\alpha}.\mrkof{m\cdot \emp \proc P}
  \\
  \mrkof{m\cdot\event{*}{\alpha^{ i}}{Q}\cdot \emp \proc P} =\
  &
    \{ \plusdec{i}\pastdec{\alpha}.\underline{\alpha}\} \cup
    \plusdec{i}\pastdec{\alpha}.\mrkof{m\cdot \emp \proc P}
  \\
  \mrkof{ m\cdot \entry{i}\cdot \emp \proc P} =\
  &
    \pardec{i}\,\mrkof{m\cdot \emp \proc P}
  \\
  \mrkof{\emp \proc P} =\
  &
    \{P\}
\end{align*}
where $\fuse$ is defined as the usual set union on single element, and as the
merge on pairs of the form $\{t_1,m_2\}$ $\{t_2,m_1\}$ where
$\{t_1,m_2\}\fuse\{t_2,m_1\} = s_{\{t_1,t_2\}}$ if
${\lbl {t_1}} = \co{\lbl {t_2}}$ and $t_i = \cammino{m_i}\alpha_i$ with $\alpha_i = \lbl {t_i}$,
where $s_{\{t_1,t_2\}}$ is the key place of the synchronisation between  transitions
$t_1$ and $t_2$.

\begin{exa}
  Let us consider the RCCS processes $R_1$ and $R_2$ of Example~\ref{ex:rccs_processess}.
  The marking of the process $R_1$ is
  \begin{align*}
    & \mrkof{\mem{*}{a^1}{\nil}{\lpar\emp} \proc b \para \rpar\emp\proc \bar{a}.c}
    \\
    =\
    &
      (\mrkof{\mem{*}{a^1}{\nil}{\lpar\emp} \proc b}) \fuse
      (\pardec{1}\mrkof{ \emp \proc \bar{a}.c} )
    \\
    =\
    &
      (\pardec{0}\mrkof{\mem{*}{a^1}{\nil}  {\emp}\proc b} ) \fuse (\{\pardec{1}\bar{a}.c\} )
    \\
    =\
    &
      (\{\pardec{0}\pastdec{a}.\underline{a}\}
      \cup \pardec{0}\pastdec{a}.\mrkof{\emp\proc b}) \fuse \{\pardec{1}\bar{a}.c\}
    \\
    =\
    &
      \{\pardec{0}\pastdec{a}.\underline{a}, \pardec{0}\pastdec{a}.b\}
      \fuse \{\pardec{1}\bar{a}.c\}
    \\
    =\
    &
      \{\pardec{0}\pastdec{a}.\underline{a}, \pardec{0}\pastdec{a}.b,\pardec{1}\bar{a}.c\}
  \end{align*}
  and the marking of the  process $R_2$ is
  \begin{align*}
    &
      \mrkof{\mem{*}{b^1}{\nil}{\mem{m_2}{a^1}{\nil}{\lpar\emp}} \proc \nil
    \para \mem{m_1}{\co{a}^1}{\nil}{\rpar\emp}\proc c}
    \\
    = \
    &
      (\mrkof{\mem{*}{b^1}{\nil}{\mem{m_2}{a^1}{\nil}{\lpar\emp}} \proc \nil} )
      \fuse (\mrkof{\mem{m_1}{\co{a}^1}{\nil}{\rpar\emp}\proc c})
    \\
    = \
    &
      (\pardec{0}\mrkof{\mem{*}{b^1}{\nil}{\mem{m_2}{a^1}{\nil}{\emp}} \proc \nil} )
      \fuse
      (\pardec{1}\mrkof{\mem{m_1}{\co{a}^1}{\nil}{\emp}\proc c})
    \\
    =\
    &
      \pardec{0} (\{a,m_2\},\pastdec{a}.\mrkof{\mem{*}{b^1}{\nil}{\emp}\proc \nil})
      \fuse (\pardec{1}\{ \{\co{a},m_1\}, \hat{\co{a}}.\mrkof{{\emp}\proc c}\})
    \\
    =\
    & \pardec{0} (\{a,m_2\},\pastdec{a}.\{\underline{b},b\} )
      \fuse (\pardec{1}\{ \{\co{a},m_1\}, \hat{\co a}.c\})
    \\
    =\
    &
      \{\{\pardec{0}a,m_2\}, \pardec{0}\pastdec{a}.\underline{b},
      \pardec{0}\pastdec{a}.\pastdec{b}\}
      \fuse
      \{\{\pardec{1}\co{a},m_1\}, \pardec{1}\hat{\co{a}}.c\}
    \\
    =\
    &
      \{\{\pardec{0}a, \pardec{1}\co{a}\},\pardec{0}\pastdec{a}.\underline{b},
      \pardec{0}\pastdec{a}.\pastdec{b}, \pardec{1}\hat{\co a}.c \}
  \end{align*}
\end{exa}
\noindent 
We are now in place to define a property that relates the definitions of
$\mrkof{\cdot}$ and $\cammino{\cdot}$ with RCCS processes.

\begin{pty}\label{prp:memory_dec}
Let $R = m\proc \sum_{i\in I}\alpha_i.P_i$ be a RCCS process. For any $z\in I$
such that $R\ltsk{m}{\alpha_z} \mem{*}{\alpha^{z}_z}{\sum_{i\in I\setminus\{z\}}\alpha_i.P_i}{m}\proc P_z$ we have that

\begin{align*}
  \mrkof{\mem{*}{\alpha^{z}_z}{\sum_{i\in I\setminus\{z\}}\alpha_i.P_i}{m}\proc P_z} =\
  &
    \mrkof{R} \setminus \{ \cammino{m}\plusdec{z}\alpha_z.P_z\}
  \\
  &
    \cup\  \{\cammino{m}\plusdec{z}\pastdec{\alpha_z}.\underline{\alpha_z}, \cammino{m}\plusdec{z}\alpha_z.P_z \}
\end{align*}

\end{pty}
\begin{proof}
The proof is by induction on the size of $m$. The base case with $m = \emp$  trivially holds.
In the inductive case we have $m = m_1 \cdot e\cdot \emp$ where $e$ can be
$\entry{i}$, $\event{*}{\beta^{ i}}{Q}$ or $\event{m_2}{\beta^{ i}}{Q}$.
We will show the first two cases, with the third being similar to the second one.
We have that
$$ S_0 = m_1\cdot \emp\proc \sum_{i\in I}\alpha_i.P_i\ltsk{m_1\cdot \emp}{\alpha_z} \mem{*}{\alpha^{z}_z}{\sum_{i\in I\setminus\{z\}}\alpha_i.P_i}{m_1\cdot \emp}\proc P_z = R_0$$
and by applying inductive hypothesis (on a shorter memory) we have that
\begin{align}\label{lbl:induction}
\mrkof{S_0} = \mrkof{R_0} \setminus \{ \cammino{m_1}\plusdec{z}\alpha_z.P_z\} \cup \{\cammino{m_1}\plusdec{z}\pastdec{\alpha_z}.\underline{\alpha_z}, \cammino{m_1}\plusdec{z}\alpha_z.P_z \}
\end{align}
We proceed by case analysis.
\begin{description}
\item[$e = \entry{i}$] let us note that $\cammino{m} = \pardec{i}\cammino{m_1}$, and that
$\mrkof{R} = \pardec{i}\mrkof{R_0}$ and $\mrkof{S} = \pardec{i}\mrkof{S_0}$. Thanks to \cref{lbl:induction} we know the form of $\mrkof{S_0}$, hence
\begin{align*}
  \mrkof{S}  =\
  & \pardec{i}\mrkof{S_0}
  \\
  =\
  & \pardec{i}\mrkof{R_0} \setminus \{ \pardec{i}\cammino{m_1}\plusdec{z}\alpha_z.P_z\} \cup \{\pardec{i}\cammino{m_1}\plusdec{z}\pastdec{\alpha_z}.\underline{\alpha_z}, \pardec{i}\cammino{m_1}\plusdec{z}\alpha_z.P_z \}  \\
  =\
  &\mrkof{R} \setminus \{ \cammino{m}\plusdec{z}\alpha_z.P_z\} \cup \{\cammino{m}\plusdec{z}\pastdec{\alpha_z}.\underline{\alpha_z}, \cammino{m}\plusdec{z}\alpha_z.P_z \}
\end{align*}
as desired.
\item[$e = \event{*}{\beta^{ i}}{Q}$] let us note that
$\cammino{m} = \plusdec{i}\pastdec{\beta}.\cammino{m_1}$, and that
$\mrkof{R} = \plusdec{i}\pastdec{\beta}.\mrkof{R_0} \cup \{\plusdec{i}\pastdec{\beta}.\underline{\beta}\}$,
and $\mrkof{S} = \plusdec{i}\pastdec{\beta}.\mrkof{S_0} \cup \{\plusdec{i}\pastdec{\beta}.\underline{\beta}\}$.
  Thanks to \cref{lbl:induction} we know the form of $\mrkof{S_0}$, hence
\begin{align*}
  \mrkof{S} = \
  &
    \plusdec{i}\pastdec{\beta}.\mrkof{S_0}
    \cup \{\plusdec{i}\pastdec{\beta}.\underline{\beta}\}
  \\
  = \
  & \plusdec{i}\pastdec{\beta}.\{\mrkof{R_0} \setminus \{ \cammino{m_1}\plusdec{z}\alpha_z.P_z\}
  \\
  &
    \cup\ \{\cammino{m_1}\plusdec{z}\pastdec{\alpha_z}.\underline{\alpha_z},
    \cammino{m_1}\plusdec{z}\alpha_z.P_z \} \}
    \ \cup\  \{\plusdec{i}\pastdec{\beta}.\underline{\beta}\}
  \\
  = \
  & \plusdec{i}\pastdec{\beta}.\mrkof{R_0}
    \setminus \{ \plusdec{i}\pastdec{\beta}.\cammino{m_1}\plusdec{z}\alpha_z.P_z\}
  \\
  & \cup \{
    \plusdec{i}\pastdec{\beta}.\cammino{m_1}\plusdec{z}
    \pastdec{\alpha_z}.\underline{\alpha_z},
    \plusdec{i}\pastdec{\beta}.\cammino{m_1}\plusdec{z}\alpha_z.P_z \} \}
   \cup \{\plusdec{i}\pastdec{\beta}.\underline{\beta}\}
  \\
  =\
  &
    \plusdec{i}\pastdec{\beta}.\mrkof{R_0} \setminus \{ \cammino{m}\plusdec{z}\alpha_z.P_z\}
  \\
  &
    \cup
    \{\cammino{m}\plusdec{z}\pastdec{\alpha_z}.\underline{\alpha_z},
    \cammino{m}\plusdec{z}\alpha_z.P_z \} \}
    \cup\{\plusdec{i}\pastdec{\beta}.\underline{\beta}\}
  \\
  = \
  &\mrkof{R} \setminus \{ \cammino{m}\plusdec{z}\alpha_z.P_z\} \cup \{\cammino{m}\plusdec{z}\pastdec{\alpha_z}.\underline{\alpha_z},\cammino{m}\plusdec{z}\alpha_z.P_z \} \}
\end{align*}
as desired.\qedhere
\end{description}
\end{proof}
\noindent 
As a consequence, we have the following corollary.
\begin{cor}\label{cor:mem_dec}
Let $R = m\proc \alpha.P$ be a RCCS process. For any $z\in I$
such that $R\ltsk{m}{\alpha_z} \mem{*}{\alpha}{\nil}{m}\proc P$ we have that
$$\mrkof{\mem{*}{\alpha}{\nil}{m}\proc P} =
\mrkof{R} \setminus \{ \cammino{m}\alpha.P\} \cup \{\cammino{m}\pastdec{\alpha}.\underline{\alpha}, \cammino{m}\alpha.P \}  $$
\end{cor}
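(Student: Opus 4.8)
The plan is to read \Cref{cor:mem_dec} off Property~\ref{prp:memory_dec} by specialising it to a unary sum. Write $R = m\proc\alpha.P$ and regard $\alpha.P$ as $\sum_{i\in I}\alpha_i.P_i$ with $I = \{z\}$, $\alpha_z = \alpha$ and $P_z = P$. Then the discarded branches $\sum_{i\in I\setminus\{z\}}\alpha_i.P_i$ form the empty sum $\nil$, so the target of $R\ltsk{m}{\alpha_z}\mem{*}{\alpha}{\nil}{m}\proc P$ is exactly the instance of the target appearing in Property~\ref{prp:memory_dec}, and the latter applies. The only thing to notice is that, by the convention fixed right after \Cref{bc:choicenet}, a unitary sum is encoded as a bare prefix and hence carries no $\plusdec{}$ index: concretely, the clauses of $\cammino{\cdot}$ and of $\mrkof{\cdot}$ that are triggered here are those for a partial-synchronisation event of the form $\event{*}{\alpha}{\nil}$ with $\nil$ as discarded process, which contain no decoration of the form $\plusdec{i}$. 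Thus the decorations $\plusdec{z}$ occurring in the statement of Property~\ref{prp:memory_dec} are vacuous in this instance, and its equation collapses to the one claimed in \Cref{cor:mem_dec}.

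Equivalently, the statement admits a direct proof by the same induction on $|m|$ used to prove Property~\ref{prp:memory_dec}. In the base case $m = \emp$ one has $\cammino{\emp} = \epsilon$, $\mrkof{\emp\proc\alpha.P} = \{\alpha.P\}$ and $\mrkof{\mem{*}{\alpha}{\nil}{\emp}\proc P} = \{\pastdec{\alpha}.\underline{\alpha}\}\cup\pastdec{\alpha}.\mrkof{\emp\proc P}$, so both sides of the claimed identity agree after unfolding the definitions. In the inductive step one writes $m = m_1\cdot e\cdot\emp$ and distinguishes the three possible shapes of the top event $e$, namely $\entry{i}$, $\event{*}{\beta^i}{Q}$ and $\event{m_2}{\beta^i}{Q}$; in each case one applies the induction hypothesis to the shorter memory $m_1$ and then prepends uniformly to both sides the decoration associated with $e$ (respectively $\pardec{i}$, $\plusdec{i}\pastdec{\beta}$ and $\pastdec{\beta}$), exactly as in the two cases already worked out in the proof of Property~\ref{prp:memory_dec}.

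No step here is genuinely an obstacle; the only thing demanding some attention is keeping the decoration bookkeeping straight, in particular remembering that a unitary sum $\alpha.P$ contributes no $\plusdec{i}$-style index, which is precisely what accounts for the difference in shape between the equation of \Cref{cor:mem_dec} and that of Property~\ref{prp:memory_dec}.
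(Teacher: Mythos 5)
Your proposal is correct and matches the paper's treatment: the corollary is stated there as an immediate consequence of Property~\ref{prp:memory_dec}, obtained exactly by the specialisation to a unitary sum that you describe, with the $\plusdec{z}$ decorations disappearing because the clauses of $\cammino{\cdot}$ and $\mrkof{\cdot}$ for events with discarded process $\nil$ carry no such index. The alternative direct induction you sketch is not needed but is consistent with the proof of the Property.
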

\noindent 
We are now ready to formalise the reversible net corresponding to an RCCS process.

\begin{defi}\label{de:rccsasun}
  Let $R$ be an RCCS term with $\ancestor{R} = P$. Then
  $\revnet{\ccstonet{R}}$ is the net $\langle S,T, F, \mrkof R\rangle$ where
  ${\ccstonet{P}} = \macronet{}{}$.
\end{defi}
\noindent 
Note that the reversible net corresponding to a coherent RCCS process $R$
retains identical places, transitions, and flow relationships as the ancestor
of R. The sole divergence lies in the marking, which is derived through the
utilisation of the computational history stored within the memories of $R$.
The following Proposition is a consequence of the Proposition~\ref{pr:allareun},
Lemma~\ref{lem:ancestor} and of the definition of $\mrkof{\cdot}$.

\begin{prop}\label{pr:rccsasun}
  Let $R$ be an RCCS term with $\ancestor{R} = P$. Then
  $\revnet{\ccstonet{R}}$ is a reversible unravel net.
\end{prop}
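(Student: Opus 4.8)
The plan is to reduce the statement to facts already established about the encoding of the \emph{ancestor} and then to control the only genuinely new ingredient, the marking $\mrkof{R}$. By Lemma~\ref{lem:ancestor} the ancestor $P=\ancestor{R}$ exists and is unique, so by Definition~\ref{de:rccsasun} the net $\revnet{\ccstonet{R}}$ is well defined. By Proposition~\ref{pr:allareun} the net $\ccstonet{P}$ is a \emph{complete} unravel net, so by Proposition~\ref{pr:constructing-a-reversible-unravel-net} (reversing every transition) $\revnet{\ccstonet{P}}$ is a reversible unravel net whose initial marking is that of $\ccstonet{P}$. Now $\revnet{\ccstonet{R}}$ coincides with $\revnet{\ccstonet{P}}$ on places, transitions and flow relation, and differs only by taking $\mrkof{R}$ as its initial marking. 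Condition~(1) of Definition~\ref{def:rcn} --- each reversing transition paired with a unique forward one --- is purely structural and is inherited verbatim; the clause forbidding distinct transitions with equal pre- and postsets, and completeness of $\subnet{\revnet{\ccstonet{R}}}{T\setminus\Er}{}$, depend only on $S$, $T$, $F$ and the key-places, and likewise carry over. Hence for condition~(2) it remains to show that, with initial marking $\mrkof{R}$, the forward subnet is still an unravel net and the whole net is safe.

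Both of these reduce to the key statement that \emph{$\mrkof{R}$ is a reachable marking of $\ccstonet{P}$} (equivalently, by Proposition~\ref{pr:reachable-markings-of-rcn}, of $\revnet{\ccstonet{P}}$). To prove this I would use coherence of $R$: by Property~\ref{prp:fw_only} we have $\emp\proc P \fw^{*} R$, and I would induct on the length of such a purely forward run. The base case $\mrkof{\emp\proc P}=\{P\}$ is by definition the initial marking of $\ccstonet{P}$. For the inductive step one checks that every forward RCCS step is mirrored in $\ccstonet{P}$ by firing a transition (for \textsc{r-syn}, a synchronisation transition), with the marking evolving exactly as $\mrkof{\cdot}$ prescribes: the \textsc{r-act} case is governed by Property~\ref{prp:memory_dec} and Corollary~\ref{cor:mem_dec}, which record the effect on $\mrkof{\cdot}$ of pushing a new memory event and coincide with firing the associated transition; the \textsc{r-syn} case is two such steps glued together, and here one must verify that the two matching partial-synchronisation events of the branches are combined by $\fuse$ into exactly the key-place $s_{\{t,t'\}}$ of the synchronising transition --- this is where coherence is used, guaranteeing that the branch memories $m_1$, $m_2$ reference one another; finally the structural-congruence steps (\textsc{split}, \textsc{res}, $\alpha$) leave $\mrkof{\cdot}$ unchanged, a routine lemma that falls out of the defining clauses of $\mrkof{\cdot}$, $\fuse$ and $\cammino{\cdot}$.

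Granting reachability of $\mrkof{R}$, the argument closes: replacing the initial marking of an unravel net by a reachable marking again yields an unravel net --- executions issued from $\mrkof{R}$ are suffixes, along firing sequences of $\ccstonet{P}$ passing through $\mrkof{R}$, of executions of $\ccstonet{P}$, and the requirement that each generated subnet be a causal net descends to them --- while completeness is structural and unaffected; this settles the forward-subnet part of condition~(2). Safety of $\revnet{\ccstonet{R}}$ follows because any of its markings reachable from $\mrkof{R}$ is, by Proposition~\ref{pr:reachable-markings-of-rcn}, reachable in the safe net $\revnet{\ccstonet{P}}$ from its initial marking. Hence $\revnet{\ccstonet{R}}$ satisfies Definition~\ref{def:rcn} and is a reversible unravel net. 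The main obstacle is precisely the reachability claim: $\mrkof{\cdot}$ is defined by structural recursion on $R$, whereas what is needed is the \emph{operational} fact that the memories stored in a coherent $R$ spell out an executable forward run of $\ccstonet{P}$; keeping the bookkeeping of full synchronisations and of $\fuse$ exactly in step with the net is the delicate point, and is what Property~\ref{prp:memory_dec}, Corollary~\ref{cor:mem_dec} and coherence are designed to supply.
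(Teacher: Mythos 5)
Your proof is correct and follows essentially the route the paper intends: the paper gives no explicit proof, only the remark that the proposition ``is a consequence of Proposition~\ref{pr:allareun}, Lemma~\ref{lem:ancestor} and of the definition of $\mrkof{\cdot}$,'' and your argument is precisely a fleshed-out version of that, via Proposition~\ref{pr:constructing-a-reversible-unravel-net} applied to the ancestor's net. The one substantive point you add --- that $\mrkof{R}$ is a \emph{reachable} marking of $\ccstonet{P}$, proved by induction on a purely forward run obtained from coherence and Property~\ref{prp:fw_only} --- is exactly the gap the paper leaves implicit (it is effectively discharged only later, inside the Soundness Lemma~\ref{lem:sound}), and your treatment of it is sound.
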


\subsection{Correctness result}

We prove the correctness of our encoding in terms of a behavioural
equivalence. To this aim we reformulate the definition of \emph{forward and
  reverse bisimilarity}~\cite{ccsk}, initially stated for CCSK, to cope with
RCCS terms and Petri nets.

\begin{defi}[Forward and reverse bisimulation]\label{def:bisim}
  Let $R$ a coherent RCCS process and $N = \langle S,T, F, \marko m\rangle$ an
  \rcn{}. The relation $\mathcal{R}$ is a forward reverse bisimulation if
  whenever $(R,N) \in \mathcal{R}$:
    \begin{enumerate}
    \item if $R \ltsk{m}{\alpha} R'$ then there exist $t \in T$ and
      $\marko{m '}$ such that $\marko{m} \trans{t} \marko{m'}$,
      $t = (\cammino{m}\alpha,\fe)$ and
      $(R',\langle S,T, F, \marko m'\rangle)\in \mathcal{R}$;
    \item if $R \rltsk{m}{\alpha} R'$ then there exist $t \in T$ and
      $\marko{m '}$ such that $\marko m \trans{t} \marko{m'}$,
      $t = (\cammino{m}\alpha,\re)$ and
      $(R',\langle S,T, F, \marko m'\rangle)\in \mathcal{R}$;
    \item if $R \ltsk{m_1,m_2}{\tau} R'$ then there exist
      $(t_1,\fe), (t_2,\fe) \in T$ and $\marko{m '}$ such that
      $\marko m \trans{(\{t_1,t_2\},\fe)} \marko{m'}$,
      $\co{\lbl {t_1}} = \lbl {t_2}$, $\cammino{m_i} < t_i$ for $i\in \{1,2\}$
      and $(R',\langle S,T, F, \marko m'\rangle)\in \mathcal{R}$;
    \item if $R \rltsk{m_1,m_2}{\tau} R'$ then there exist
      $(t_1,\re), (t_2,\re) \in T$ and $\marko{m '}$ such that
      $\marko m \trans{(\{t_1,t_2\},\re)} \marko{m'}$,
      $\co{\lbl {t_1}} = \lbl {t_2}$, $\cammino{m_i} < t_i$ for $i\in \{1,2\}$
      and $(R',\langle S,T, F, \marko m'\rangle)\in \mathcal{R}$;
    \item if $\marko{m} \trans{t} \marko{m'}$ with
      $t = (\cammino{m}\alpha,\fe)$ then there exists $R, R'$ such that
      $\mrkof{R} = \marko{m}$, $\mrkof{R'} = \marko{m'}$,
      $R \ltsk{m}{\alpha} R'$ and
      $(R',\langle S,T, F, \marko m'\rangle)\in \mathcal{R}$;
    \item if $\marko{m} \trans{t} \marko{m'}$ with
      $t = (\cammino{m}\alpha,\re)$ then there exists $R, R'$ such that
      $\mrkof{R} = \marko{m}$, $\mrkof{R'} = \marko{m'}$,
      $R \rltsk{m}{\alpha} R'$ and
      $(R',\langle S,T, F, \marko m'\rangle)\in \mathcal{R}$;
    \item if $\marko m \trans{(\{t_1,t_2\},\fe)} \marko{m'}$ with
      $\co{\lbl {t_1}} = \lbl {t_2}$ and $\cammino{m_i}\alpha_i = t_i$ with $\lbl {t_i} = \alpha_i$ for $i\in \{1,2\}$
       then there exists $R, R'$ such that
      $\mrkof{R} = \marko{m}$, $\mrkof{R'} = \marko{m'}$,
      $R \ltsk{m_1,m_2}{\tau} R'$ and
      $(R',\langle S,T, F, \marko m'\rangle)\in \mathcal{R}$;
    \item if $\marko m \trans{(\{t_1,t_2\},\re)} \marko{m'}$ with
      $\co{\lbl {t_1}} = \lbl {t_2}$ and $\cammino{m_i}\alpha_i = t_i$ with $\lbl {t_i} = \alpha_i$ for $i\in \{1,2\}$ 
      then there exists $R, R'$ such that
      $\mrkof{R} = \marko{m}$, $\mrkof{R'} = \marko{m'}$,
      $R \rltsk{m_1,m_2}{\tau} R'$ and
      $(R',\langle S,T, F, \marko m'\rangle)\in \mathcal{R}$.
    \end{enumerate}
  The largest forward reverse bisimulation is called forward reverse
  bisimilarity, denoted with $\sim_{FR}$.
\end{defi}

We first prove that two  coherent RCCS processes which are structurally congruent are encoded within the
\emph{same} \rcn{}. Subsequently, we demonstrate the equivalence between a
step taken in the process algebra and the firing of an appropriate transition
in the corresponding network, and vice versa.

\begin{lem}[Preservation]\label{lem:struct}
Let $R_1$ and $R_2$ be two coherent RCCS processes. If $R_1 \con R_2$ then
$\revnet{\ccstonet{R_1}}$ and  $\revnet{\ccstonet{R_2}}$ are isomorphic and have the same marking
up to places renaming.
\end{lem}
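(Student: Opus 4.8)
The plan is to proceed by induction on the derivation of $R_1 \con R_2$, using the fact that $\con$ is the smallest congruence containing the three axioms \textsc{split}, \textsc{res} and \textsc{$\alpha$} of \Cref{fig:struct}. For the congruence closure we need to show that the statement is preserved by the process constructors ($m\proc\_$, $\_\para S$, $S\para\_$, $\_\hide a$); for each axiom we must exhibit the isomorphism explicitly and check it respects the marking function $\mrkof{\cdot}$. Since both $R_1$ and $R_2$ are coherent, by \Cref{lem:ancestor} each has a unique ancestor, so the first thing I would establish is that $\con$-equivalent coherent processes have the \emph{same} ancestor up to renaming — for \textsc{res} this is immediate (the ancestor rules include a \textsc{Res} case that commutes restriction exactly as \textsc{res} does), for \textsc{$\alpha$} it follows because $\ancestor{\cdot}$ is defined up to $\aeq$, and for \textsc{split} it follows because the ancestor derivation uses the precongruence $\precon$ which already bakes \textsc{split} in (applied right-to-left). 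Once the ancestors coincide, \Cref{de:rccsasun} tells us $\revnet{\ccstonet{R_1}}$ and $\revnet{\ccstonet{R_2}}$ have the same underlying net $\langle S,T,F\rangle$ (up to the renaming induced on place/transition names), and the only thing left to verify is that the two markings $\mrkof{R_1}$ and $\mrkof{R_2}$ agree under that renaming.

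For the \textsc{split} axiom $m\proc(P\para Q)\con \lpar m\proc P\para\rpar m\proc Q$, the key computation is directly from the definition of $\mrkof{\cdot}$: on the left we get $\mrkof{m\proc(P\para Q)}$, which unfolds through the memory $m$ down to the base case $\mrkof{\emp\proc(P\para Q)}=\{P\para Q\}$; on the right we get $\mrkof{\lpar m\proc P}\fuse\mrkof{\rpar m\proc Q}$, where each $\mrkof{\entry i\cdot\dots}$ prepends $\pardec i$ and unfolds the same $m$. I would show by induction on $m$ (mirroring the induction in the proof of \Cref{prp:memory_dec}) that $\mrkof{m\proc(P\para Q)} = \tilde\phi_m\cdot(\mrkof{\emp\proc P}\fuse\mrkof{\emp\proc Q})$ where $\tilde\phi_m$ is the string of decorations determined by $m$, and that this equals $\mrkof{\lpar m\proc P}\fuse\mrkof{\rpar m\proc Q}$ — the point being that prepending $\pardec 0$ resp.\ $\pardec 1$ inside the recursion and then $\fuse$-ing is the same as $\fuse$-ing first and prepending afterward, which is exactly how the encoding $\parnetBC{\cdot}{\cdot}$ of \Cref{bc:parnet} is built, so the nets of the two sides are literally the same net. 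For \textsc{res} the marking computation is even simpler ($\restdec a$ commutes out of $\mrkof\cdot$ and $\fuse$), and for \textsc{$\alpha$} one observes that $\cammino\cdot$ and $\mrkof\cdot$ only depend on the memory structure and the decorated prefixes, which are stable under $\aeq$, so $R_1\aeq R_2$ already gives equal nets and equal markings (the bound-name renaming only renames $\restdec a$ decorations consistently on both net and marking).

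The congruence-closure step is routine but needs care: for $m\proc\_$, $\_\para S$, $\_\hide a$ I would invoke the inductive hypothesis on the subterm and observe that the encoding is \emph{compositional} — e.g.\ $\revnet{\ccstonet{R\para S}}$ is built from $\revnet{\ccstonet{R}}$ and $\revnet{\ccstonet{S}}$ via the same $\parnetBC\cdot\cdot$-style construction (through the shared ancestor), so an isomorphism of the pieces lifts to an isomorphism of the whole, and the marking clause $\mrkof{R\para S}=\mrkof R\fuse\mrkof S$ transports the marking equality componentwise; one must just check that the $\fuse$ on pairs (which forms synchronisation key-places $s_{\{t_1,t_2\}}$) is compatible with the renaming, which it is because the renaming acts on transition names $\cammino{m_i}\alpha_i$ and $s_{\{t_1,t_2\}}$ is defined purely from those. \textbf{The main obstacle} I anticipate is the bookkeeping around \textsc{split} when it occurs under a non-trivial memory or nested inside other constructors: one has to be scrupulous that the decoration strings $\pardec i$ introduced by $\mrkof{\entry i\cdot\emp\proc\_}$ line up exactly with the $\pardec i$ decorations that $\parnetBC{\cdot}{\cdot}$ puts on places and transitions of the ancestor's subnets — i.e.\ that "the position recorded in the memory" and "the position recorded in the net's place names" are the same string. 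Getting the base case and the $e=\entry i$ inductive case of this alignment right is the crux; once that is pinned down the rest is mechanical.
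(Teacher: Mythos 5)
Your plan matches the paper's proof essentially step for step: both first observe that only the $\alpha$ axiom can change the (unique) ancestor, so the underlying nets coincide up to renaming, and then reduce the lemma to an equality of markings checked by case analysis on the last applied axiom, with \textsc{split} as the crux handled by unfolding $\mrkof{\cdot}$ and aligning the $\pardec{i}$ decorations with those introduced by the parallel encoding. The only detail worth making explicit (which the paper does) is that after \textsc{split} the memory-derived part of the marking appears once on each side of $\fuse$ and collapses back to a single copy because $\fuse$ acts as plain set union there, a reachable process being unable to synchronise with itself.
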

\begin{proof}
  Since $\equiv$ is defined on monitored processes, then the only axiom which
  changes the structure of the ancestor process is the $\alpha$-renaming.
  Hence $R_1$ and $R_2$ have the same ancestor, say $P$, up to
  $\alpha$-renaming. It is easy to see that the two generated nets have the
  same places, transitions and flow relation up to renaming, hence they are
  isomorphic.
  We just have to check whether the initial markings are the same. The proof
  follows by induction and  case analysis on the last applied axiom of
  $\equiv$:

  \setlist[description]{font=\normalfont\scshape}
  \begin{description}
  \item[Split]   
If the last applied rule is (\textsc{Split}), w.l.o.g.~we can
    assume $R_1 = m\proc (P_1\parallel P_2)$ and
    $R_2 = \lpar{m}\proc P_1 \parallel \rpar{m}\proc P_2 $. We need to show
    that $\mrkof{R_1} = \mrkof{R_2}$. By looking at the definition of
    $\mrkof{\cdot}$ we have that
    $$\mrkof{R_1} = \marko{m} \text{ and }$$ and
    $$\mrkof{R_2} = \mrkof{ \lpar{m}\proc P_1} \fuse \mrkof{ \rpar{m}\proc P_2}$$
   Also, by definition of $\mrkof{\cdot}$ we have that $\marko{m} = \marko{m}_0 \cup \cammino{m}\marko{m}_{P_1 \parallel P_2}$, where $\marko{m}_{P_1 \parallel P_2}$ is the initial marking of the net encoding $(P_1 \parallel P_2)$.
   Hence, we can divide this marking into the marking of $P_1$ and the marking of $P_2$ as follows:
   $$\marko{m} = \marko{m}_0 \cup \cammino{m}\pardec{0}\marko{m}_{P_1} \cup \cammino{m}\pardec{1}\marko{m}_{P_2}$$
   Also, we have that:
   \begin{align*}
   &\mrkof{ \lpar{m}\proc P_1} = \marko{m}_a \cup \cammino{ \lpar{m}}\marko{m}_{P_1} =  \marko{m}^a_0 \cup \cammino{m}\pardec{0}\marko{m}_{P_1}\\
  & \mrkof{ \rpar{m}\proc P_2} = \marko{m}_b \cup \cammino{ \rpar{m}}\marko{m}_{P_2} = 
  \marko{m}^b_0 \cup \cammino{ m}\pardec{1}\marko{m}_{P_2}
   \end{align*}
Were $\marko{m}_a = \marko{m}_b = \marko{m}_0$, since it is the marking
derived from the information contained into the memory $m$. Now, it is simple
to conclude, since $\marko{m}_a \fuse \marko{m}_b = \marko{m}_0 \fuse \marko{m}_0 = \marko{m}_0$, as they are the same
marking and since we are considering reachable processes it is impossible for a process to synchronise in the future with itself, hence $\fuse$ acts as the normal set union.
  \item[Res] this case is a simplified version of the previous one.
  \item[$\alpha$] Suppose $\mrkof{R_1} = \marko{m} \cup \marko{m'}$ where
    $\marko{m'}$ is the markings containing the bound action which will be
    converted by the last application of $\equiv$. By inductive hypothesis we
    also have that $\mrkof{R_2} = \marko{m} \cup \alpha(\marko{m'})$ where the
    $\alpha$-conversion is applied only to those names which contains the
    bound action, that is $\marko{m'}$. We have that the two
    nets have the same marking up to some renaming, as desired.\qedhere
\end{description}
\end{proof}

\begin{lem}[Soundness]\label{lem:sound}
  Let $R_1$ be an RCCS coherent process and
  $\revnet{\ccstonet{R_1}}\!=\!\langle S, T, F, \mrkof{R_1}\rangle$ its
  corresponding \rcn{}. If $R_1 \rccst{\hat{m}:\alpha} R_2$ then
  \begin{itemize}

  \item  $\revnet{\ccstonet{R_2}}={\langle S, T, F, \mrkof{R_2}\rangle}$; and
  \item there exists $t \in T$ such that $\mrkof{R_1}\trans{t}\mrkof{R_2}$; and

  \item for some $d\in \setenum{\fe,\re}$ either
    \begin{itemize}
    \item
      $\hat{m} = m$ and $t = (\cammino{m}\alpha,d)$; or
    \item $\hat{m} =m_1,m_2$ and $\alpha = \tau$ and there exist two
      transitions $(t_1,d),(t_2,d) \in T$ with $\co{\lbl {t_1}} = \lbl {t_2}$ and
      $\cammino{m_i}\alpha_i = t_i$ with $\lbl {t_i} = \alpha_i$ for $i\in \{1,2\}$, and $t = (\setenum{t_1,t_2},d)$. %
    \end{itemize}
  \end{itemize}
\end{lem}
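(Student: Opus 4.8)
The plan is to argue by induction on the derivation of $R_1 \rccst{\hat m:\alpha} R_2$, i.e.\ on the height of the proof tree built from the rules of Figure~\ref{fig:rccs_sem}, distinguishing on the last rule applied. Throughout I use that $\revnet{\ccstonet{R}}$ has underlying net $\ccstonet{\ancestor R}$ and marking $\mrkof R$ (Definition~\ref{de:rccsasun}, Proposition~\ref{pr:rccsasun}); that $\con$-equivalent coherent processes have isomorphic nets with the same marking up to place renaming (Lemma~\ref{lem:struct}); the explicit description of the marking change along one \textsc{r-act} step (Property~\ref{prp:memory_dec} and Corollary~\ref{cor:mem_dec}); and that $R_1$ coherent together with $R_1\rccst{\,}R_2$ forces $R_2$ coherent, so that $\ancestor{R_2}$ is well defined by Lemma~\ref{lem:ancestor}. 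For the base case \textsc{r-act} we have $R_1 = m\proc\sum_{i\in I}\alpha_i.P_i$, $\hat m = m$, $\alpha = \alpha_z$ and $R_2 = \mem{*}{\alpha^z_z}{\sum_{i\in I\setminus\{z\}}\alpha_i.P_i}{m}\proc P_z$. One application of the \textsc{Act} rule of Figure~\ref{fig:ancestor} gives $R_2 \anc R_1$, hence $\ancestor{R_2} = \ancestor{R_1}$ and $\revnet{\ccstonet{R_2}} = \langle S,T,F,\mrkof{R_2}\rangle$ with the same $S,T,F$, which discharges the first bullet; and by Property~\ref{prp:memory_dec} (or Corollary~\ref{cor:mem_dec} if $I$ is a singleton) the right-hand side describing $\mrkof{R_2}$ is precisely $\mrkof{R_1} - \pre t + \post t$ for the decorated transition $t = (\cammino m\alpha,\fe)$, by the shape of the prefix and choice encodings (Definitions~\ref{bc:prefixnet}, \ref{bc:choicenet}); so $\mrkof{R_1}\trans t\mrkof{R_2}$ and we are in the first sub-case with $d = \fe$. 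The case \textsc{r-act}$\rev$ is the same with $R_1,R_2$ interchanged and $d = \re$, the reversing transition being the pre/postset-swap of its forward companion (Definition~\ref{def:rcn}).

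The inductive cases \textsc{r-par-l}, \textsc{r-par-r}, \textsc{r-res} and their reversed variants follow from compositionality of $\mrkof{\cdot}$, namely $\mrkof{R\parallel S} = \mrkof R\fuse\mrkof S$ and $\mrkof{R\hide a} = \restdec a\mrkof R$, together with the structure of the parallel and restriction encodings (Definitions~\ref{bc:parnet}, \ref{bc:restnet}): the transition delivered by the inductive hypothesis for the active component, renamed by $\pardec 0$/$\pardec 1$ (resp.\ $\restdec a$), is enabled at $\mrkof{R_1}$ and its firing yields $\mrkof{R_2}$, while $\ancestor{\cdot}$ is preserved because it commutes with the \textsc{Par}/\textsc{Res} ancestor rules. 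For \textsc{r-equiv}, if $R_1\con R_1'\ltsk m\alpha R_2'\con R_2$ then the inductive hypothesis on the strictly shorter derivation $R_1'\ltsk m\alpha R_2'$ gives $t$ with $\mrkof{R_1'}\trans t\mrkof{R_2'}$, and Lemma~\ref{lem:struct} transfers this firing (up to place renaming) to $\mrkof{R_1}\trans t\mrkof{R_2}$, with $\ancestor{R_1} = \ancestor{R_1'} = \ancestor{R_2'} = \ancestor{R_2}$; the case \textsc{r-equiv}$\rev$ is analogous. (Alternatively one may normalise derivations so that \textsc{r-equiv} is used at most once, at the root, making the case analysis on the remaining rules straightforward.)

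The delicate case is \textsc{r-syn}: here $R_1 = R\parallel S \ltsk{m_1,m_2}{\tau} R'_{m_2@m_1}\parallel S'_{m_1@m_2} = R_2$ with $R\ltsk{m_1}{\alpha}R'$ and $S\ltsk{m_2}{\bar\alpha}S'$. The inductive hypothesis, in its first sub-case since $\alpha,\bar\alpha\neq\tau$, gives $t_1 = (\cammino{m_1}\alpha,\fe)$ with $\mrkof R\trans{t_1}\mrkof{R'}$ and $t_2 = (\cammino{m_2}\bar\alpha,\fe)$ with $\mrkof S\trans{t_2}\mrkof{S'}$, and $\co{\lbl{t_1}} = \lbl{t_2}$; by Definition~\ref{bc:parnet} the net of $R\parallel S$ then contains the synchronisation transition $t = (\setenum{t_1,t_2},\fe)$, with preset $\pardec 0\pre{t_1}\cup\pardec 1\pre{t_2}$ and postset $\pardec 0(\post{t_1}$ minus the key-place of $t_1)\cup\pardec 1(\post{t_2}$ minus the key-place of $t_2)\cup\setenum{s_{\setenum{t_1,t_2}}}$. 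Three points remain. First, $\ancestor{R_1} = \ancestor{R_2}$: undoing a synchronisation amounts to two applications of the \textsc{Act} ancestor rule, one per monitored process, as observed after Lemma~\ref{lem:ancestor}, so $R_2\anc^* R_1$. Second, the substitutions $(\cdot)_{m_2@m_1}$ and $(\cdot)_{m_1@m_2}$ replace the partial events $\event{*}{\alpha}{Q}$ in $R'$ and $\event{*}{\bar\alpha}{Q}$ in $S'$ by the full events $\event{m_2}{\alpha}{Q}$ and $\event{m_1}{\bar\alpha}{Q}$; comparing the $\event{*}{\cdot}{\cdot}$ and $\event{m}{\cdot}{\cdot}$ clauses of $\mrkof{\cdot}$, this turns the key-place token produced by $t_1$ (resp.\ $t_2$) into the pair token $\setenum{\cammino{m_1}\alpha,m_2}$ (resp.\ $\setenum{\cammino{m_2}\bar\alpha,m_1}$). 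Third, $\fuse$ merges precisely these two pair tokens into the synchronisation key-place $s_{\setenum{t_1,t_2}}$, which is its very definition since $\lbl{t_1} = \co{\lbl{t_2}}$. Combining these, $\mrkof{R_2}$ is $\mrkof R\fuse\mrkof S = \mrkof{R_1}$ with $\pre t$ removed and $\post t$ added, i.e.\ $\mrkof{R_1}\trans t\mrkof{R_2}$, and $t = (\setenum{t_1,t_2},\fe)$ with $\cammino{m_i}\alpha_i = t_i$ and $\lbl{t_i} = \alpha_i$, which is exactly the second sub-case of the third bullet; \textsc{r-syn}$\rev$ is symmetric.

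Finally, all backward rules can instead be dispatched uniformly: if $R_1\rccst{\hat m:\alpha}R_2$ is a backward step, then by the Loop Lemma $R_2\ltsk{\hat m}{\alpha}R_1$ is a forward one, so the forward part just proved gives $t' = (\ldots,\fe)$ with $\mrkof{R_2}\trans{t'}\mrkof{R_1}$; the \rcn{} $\revnet{\ccstonet{R_1}}$ contains the reversing transition $t$ with $\pre t = \post{t'}$ and $\post t = \pre{t'}$ (Definition~\ref{def:rcn}(1)), hence $\mrkof{R_1}\trans t\mrkof{R_2}$, with $t$ of the shape required by the third bullet and $d = \re$, while $\ancestor{R_1} = \ancestor{R_2}$ makes the first bullet immediate. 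I expect the synchronisation case to be the main obstacle: one must reconcile simultaneously the memory-level operation $(\cdot)_{m_2@m_1}$, the marking-level merge $\fuse$ and the net-level synchronisation transition together with its dedicated key-place --- which is exactly what the a priori redundant key-places were introduced for --- and getting all the $\pardec i$/$\plusdec i$/$\pastdec\alpha$ bookkeeping of decorations and indices exactly right is the fiddly part.
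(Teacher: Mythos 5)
Your proposal is correct and follows essentially the same route as the paper's proof: induction on the derivation with a case analysis on the last rule, using Corollary~\ref{cor:mem_dec}/Property~\ref{prp:memory_dec} together with the prefix and choice encodings for the \textsc{r-act} base cases, Lemma~\ref{lem:struct} for \textsc{r-equiv}, and the interplay of Definition~\ref{bc:parnet}, the $m_i@m_j$ substitution and the $\fuse$ merge for the synchronisation case. The only (sound) variation is your optional uniform dispatch of the backward rules via the Loop Lemma and the pre/postset-swapped reversing transitions, where the paper instead treats each reversed rule as "analogous" to its forward counterpart.
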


\begin{proof}
  As $R_1$ is a coherent process then it has an ancestor $\ancestor{R_1}$, say
  $P$ which is unique (thanks to~\cref{lem:ancestor}), which is the same ancestor of $R_2$, as $R_2$ is reached by $R_1$ with
  one reduction step. Therefore $\revnet{\ccstonet{R_1}}$ and
  $\revnet{\ccstonet{R_2}}$ have the same places, transitions and flow
  relation, the only difference being the marking. We show that for each move in
  the process algebra a corresponding  firing of a transition $t\in T$ exists such
  that $(\mrkof{R_1} \setminus \pre{t}) \cup \post{t} = \mrkof{R_2}$.

  We have
  two cases: either the process synchronises with the context or it performs a
  $\tau$~(or a reversing of any of them). Both cases are similar, so we will
  focus on the first one.
  We proceed by induction on the derivation $R_1\ltsk{m}{\alpha} R_2$ with a
  case analysis on the last applied rule. The base cases correspond to the
  application of either $\textsc{r-act}$ or $\textsc{r-act}\rev$.
  \setlist[description]{font=\normalfont\scshape}
    \begin{description}
\item[r-act]
  Consider the application of the rule $\textsc{r-act}$. We have
  \[
    R_1 = m\proc \sum_{i\in I}\alpha_i.Q_i\ltsk{m}{\alpha^{}_z}
    \mem{*}{\alpha^{z}_z}{\sum_{i\in I\setminus\{z\}}\alpha_i.Q_i}{m}\proc P_z
    = R_2
  \]
  We first consider the case where $|I| = 1$. Hence we have
  \[R_1 = m\proc \alpha.Q\ltsk{m}{\alpha^{}} \mem{*}{\alpha}{\nil}{m}\proc Q =
  R_2\]
  The marking corresponding to $R_1$ in the net $\revnet{\ccstonet{P}} = {\langle S, T, F, \marko{m}\rangle}$ is
  $\mrkof{R_1} = \mrkof{ m\proc \alpha.Q}$
  and thanks to \cref{cor:mem_dec} the marking of $R_2$ is
  \begin{align*}
    \mrkof{R_2} & = \mrkof{\mem{*}{\alpha}{\nil}{m}\proc Q} \\
    & = \mrkof{m\proc
      \alpha.Q} \setminus \{\cammino{m}\alpha.Q\}\cup
    \{\cammino{m}\pastdec{\alpha}.\underline{\alpha}\}\cup
    \{\cammino{m}\pastdec{\alpha}.\marko{m}_Q\}
  \end{align*}
  By construction (see \cref{bc:prefixnet}), the net $\revnet{\ccstonet{P}}$
  contains a transition $t \in T$ such that $t = (\cammino{m}\alpha,\fe)$, with
  $\pre{t} = \{\cammino{m}\alpha.Q\}$ and
  $\post{t} = \{\cammino{m}\pastdec{\alpha}.\underline{\alpha}\}\cup
  \{\cammino{m}\pastdec{\alpha}.\marko{m}_Q\}$.
  The thesis follows by observing that such transition is enabled at $\mrkof{R_1}$ because
  $\{\cammino{m}\alpha.Q\}\in \mrkof{R_1}$ by
  definition of $\mrkof{\cdot}$, and $\mrkof{R_1}\trans{t}\mrkof{R_2}$.

  Consider now the case with $|I| > 1$.
  \[
    R_1 = m\proc \sum_{i\in I}\alpha_i.Q_i\ltsk{m}{\alpha^{}_z}
    \mem{*}{\alpha^{z}_z}{\sum_{i\in I\setminus\{z\}}\alpha_i.Q_i}{m}\proc P_z
    = R_2
  \]
  The marking corresponding to $R_1$ in the net $\revnet{\ccstonet{P}} = {\langle S, T, F, \marko{m}\rangle}$ is
  \begin{align*}
    \mrkof{R_1} & = \mrkof{ m\proc \sum_{i\in I}\alpha_i.Q_i} \\
                & = \bigcup_{i\in I} \mrkof{ m\proc \alpha_i.Q_i}
  \end{align*} 
  and it contains the marked places $\setcomp{\cammino{m}\plusdec{i}\alpha_i.Q_i}{i\in I}$.
  Again by construction, the net $\revnet{\ccstonet{P}}$
  contains a transition $t \in T$ such that $t = (\cammino{m}\plusdec{z}\alpha_z,\fe)$, with $z\in I$,
  $\pre{t} = \setcomp{\cammino{m}\plusdec{i}\alpha_i.Q_i}{i\in I}$ and
  $\post{t} = \{\cammino{m}\plusdec{z}\pastdec{\alpha_z}.\underline{\alpha_z}\}\cup
  \{\cammino{m}\plusdec{z}\pastdec{\alpha_z}.\marko{m}_{Q_z}\}$ and again
  $\mrkof{R_1}\trans{t}\mrkof{R_2}$ where
   $\mrkof{R_2}$ is the marking 
   \[\qquad\quad\;\mrkof{m\proc\!\sum_{i\in I}\!\alpha_i.Q_i}\!\setminus\!\setcomp{\cammino{m}\plusdec{i}\alpha_i.Q_i\!}{\!i\!\in\!I}
   \cup\{\cammino{m}\plusdec{z}\pastdec{\alpha_z}.\underline{\alpha_z}\}\cup
  \{\cammino{m}\plusdec{z}\pastdec{\alpha_z}.\marko{m}_{Q_z}\}
  \]

 \item[r-act$\rev$] The case in
  which (\textsc{r-act$\rev$}) is used is similar.
  Assume
  \[R_1 = \mem{*}{\alpha^{z}_z}{\sum_{i\in
        I\setminus\{z\}}\alpha_i.Q_i}{m}\proc Q_z \rltsk{m}{\alpha^{}_z}
    m\proc \sum_{i\in I}\alpha_i.Q_i = R_2
  \]
  and again take $|I|\!=\!1$.
  \!Then
  $\mrkof{R_1}\!=\!\mrkof{\mem{*}{\alpha}{\nil}\!{m}\!\proc\!Q}\!=\!\mrkof{m\proc
    \alpha.Q} \setminus \{\cammino{m}\alpha.Q\}\cup
  \{\cammino{m}\pastdec{\alpha}.\underline{\alpha}\}\cup
  \{\cammino{m}\pastdec{\alpha}.\marko{m}_Q\}$. The transition
  $t = (\cammino{m}a,\re)$ in $\revnet{\ccstonet{P}}$ is enabled at
  $\mrkof{R_1}$ as it is the reverse of $(\cammino{m}\alpha,\fe)$ and its execution
  leads to the marking $\mrkof{R_2} = \mrkof{ m\proc \alpha.Q}$ as required.

  The case with $|I| > 1$ follows the same argument of the forward one.
\end{description}
  In the inductive case we have to do a case analysis on the last applied
  rule. We have (\textsc{l-par}), (\textsc{r-sych}), (\textsc{r-res}) and
  (\textsc{r-equiv}) and their reversible variants. The most representative
  cases are (\textsc{r-sych}) and (\textsc{r-equiv}).

  \setlist[description]{font=\normalfont\scshape}
    \begin{description}
\item[r-equiv]
  Consider the application of the rule (\textsc{r-equiv}). It follows by induction and by applying \cref{lem:struct}.
  \item[r-equiv$\rev$]
 The application of the rule (\textsc{r-equiv$\rev$}) follows the same argument of the previous case.

    \item[r-synch]
For the (\textsc{r-sych})
  case, let us suppose $R_0= R^1_0 \parallel R^2_0$. We have that
  $R^1_0 \parallel R^2_0 \ltsk{m_1,m_2}{\tau} R^{1}_{1_{m_1@m_2}} \parallel
  R^{2}_{1_{m_2@m_1}}$ with $R^i_0 \ltsk{m_i}{\alpha_i} R^{i}_1$ and
  $\alpha_1 = \co{\alpha_2}$. By applying the inductive hypothesis on the
  derivations $R^i_0 \ltsk{m_i}{\alpha_i} R^i_1$ we have that there exists two
  transitions $t_1$ and $t_2$ such that
  $\marko{m^i_{r_0}}\trans{t_i}\marko{m_{r_1}^i}$,
  $(\cammino{m_i}\alpha_i,\fe) = t_i$, $\mrkof{R^i_0} = \marko{m^i_{r_0}}$ and
  $\mrkof{R^i_1} = \marko{m^i_{r_1}}$. We can desume that
  $\pre{t_1} \cap \pre{t_2} = \emptyset$, since they are enabled on different
  markings. Also, by definition we have that
  $\marko{m_{r_0}}=\mrkof{R_0} = \mrkof{R^1_0} \fuse \mrkof{R^2_0}$. Let us note
  that the operator $\fuse$ acts on places which corresponds to past
  synchronisations, hence it does not affect $\pre{t_i}$, that is
  $\pre{t_i} \in \marko{m_{r_0}}$. Since $\alpha_1 = \co{\alpha_2}$ then by
  \cref{bc:parnet} in the net there exists a transition
  $t_\tau = (\{\cammino{m_1}\alpha_1,\cammino{m_2}\alpha_2\},\fe)$ where the preset
  and postset are respectively
  \(\pre{t_\tau} = \pre{t_1}\cup \pre{t_2}\) and
  \(\post{t_\tau} =( \post{t_1} \setminus
  \{\cammino{m_1}\pastdec{\alpha_1}.\underline{\alpha_1}\}) \cup ( \post{t_2}
  \setminus \{\cammino{m_2}\pastdec{\alpha_2}.\underline{\alpha_2}\})\cup
  \{s_{\{\cammino{m_1}\alpha_1,\cammino{m_2}\alpha_2\}}\}\).
   Hence we have that
  $\marko{m_{r_0}}\trans{t_\tau} (\marko{m_r}\setminus \pre{t_\tau}) \cup
  \post{t_\tau}$. By definition we have that
  $\{\cammino{m_1}\alpha_1,m_2\}\in \mrkof{R^1_1}$ and
  $\{\cammino{m_2}\alpha_2,m_1\}\in \mrkof{R^2_1}$ and that
  $\{\cammino{m_1}\alpha_1,\cammino{m_2}\alpha_2\} \in
  \mrkof{R^1_1}\fuse\mrkof{R^2_1}$. Also let us note that the $m_i@m_j$ operation
  just replace the $*$ on top of the memory $m_i$ with $m_j$, which is similar
  to the $\fuse$ operator. Hence
  $\mrkof{R^1_1}\fuse\mrkof{R^2_1} = \mrkof{R^1_{1_{m_1@m_2}} \parallel R^2_{1_{m_2@m_1}}}
  = \marko{m_{r_2}}$, as desired.

  \item[r-sych$\rev$] this case is analogous to ($\textsc{r-act}\rev$). \qedhere
  \end{description}
 \end{proof}
\begin{lem}[Completeness]\label{lem:compl}
  Let $R_1$ be an RCCS coherent process and let
  $\revnet{\ccstonet{R_1}} = \macronetbis{}{}{\mrkof{R_1}}$ be the
  corresponding \rcn{}. If $\mrkof{R_1}\trans{t}\marko{m}'$, then there exists $R_2$ s.t.
  one of the following holds:
  \begin{itemize}
  \item $t = (\cammino{m}\alpha,d)$ and $R_1 \rccst{m:\alpha} R_2$ and
    $\revnet{\ccstonet{R_2}} = \macronetbis{}{}{\marko{m}'}$;
  \item $t = (\{t_1,t_2\},d)$ such that $\co{\lbl {t_1}} = \lbl {t_2}$, with
    $t_i = (\cammino{m_i}\alpha_{i},d)$,
    $\alpha_{i}\in\{\lbl {t_1},\lbl {t_2}\}$ for $i=1,2$ and
    $R_1 \rccst{m_1,m_2:\tau} R_2$ with
    $\revnet{\ccstonet{R_2}} = \macronetbis{}{}{\marko{m}'}$
  \end{itemize}
  with $d\in\{\fe,\re\}$.
\end{lem}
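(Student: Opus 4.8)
This is the converse of \cref{lem:sound}: the plan is to run the implications of that proof in the opposite direction, starting from a firing $\mrkof{R_1}\trans{t}\marko{m}'$ of the \rcn{} $\revnet{\ccstonet{R_1}}$ and reconstructing the matching RCCS step. Write $P = \ancestor{R_1}$, which exists and is unique by \cref{lem:ancestor}, and recall that by Proposition~\ref{pr:rccsasun} the net $\revnet{\ccstonet{R_1}}$ is a reversible unravel net whose places, transitions and flow relation are those of $\revnet{\ccstonet{P}}$, its marking being $\mrkof{R_1}$; in particular every transition of $\ccstonet{P}$ occurs both as a forward copy $(u,\fe)$ and as a reversing copy $(u,\re)$. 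Any $R_2$ obtained from $R_1$ by a single step of $\rccst{\,}$ is again coherent and has ancestor $P$ as well (forward steps preserve the ancestor by definition, backward steps by the Loop Lemma), so $\revnet{\ccstonet{R_2}}$ automatically shares the underlying net $\langle S,T,F\rangle$, and it only remains to verify that its marking equals $\marko{m}'$; this last verification is the same bookkeeping already carried out, in the forward direction, in \cref{lem:sound}, using Property~\ref{prp:memory_dec} and \cref{cor:mem_dec}.

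We then split on the shape of $t$. By the encodings of \cref{bc:prefixnet,bc:parnet,bc:choicenet,bc:restnet}, a transition of $\ccstonet{P}$ is named either by a (decorated) prefix $\cammino{m}\alpha$ — possibly carrying further $\plusdec{z}$, $\pardec{i}$, $\restdec{a}$ decorations — or by a pair $\setenum{t_1,t_2}$ with $\co{\lbl{t_1}} = \lbl{t_2}$; together with the choice $d\in\setenum{\fe,\re}$ this yields exactly the four sub-cases of the statement. If $t = (\cammino{m}\alpha,\fe)$, then $\pre{t}$ consists of the places $\cammino{m}\plusdec{i}\alpha_i.Q_i$ for $i$ ranging over the index set $I$ of the sum that contributes $\alpha = \alpha_z$ (or just of $\cammino{m}\alpha.Q$ when the prefix is not inside a sum), and by the definition of $\mrkof{\cdot}$ these are all contained in $\mrkof{R_1}$ precisely when $R_1\con E[m'\proc\sum_{i\in I}\alpha_i.Q_i]$ for some active context $E$ and some memory $m'$ with $\cammino{m'} = \cammino{m}$; rule \textsc{r-act}, propagated through $E$ by \textsc{r-par-l}, \textsc{r-par-r}, \textsc{r-res} and \textsc{r-equiv} (the last justified by \cref{lem:struct}), then gives $R_1\ltsk{m'}{\alpha_z}R_2$ with $\mrkof{R_2} = (\mrkof{R_1}\setminus\pre{t})\cup\post{t} = \marko{m}'$ by Property~\ref{prp:memory_dec}. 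The case $t = (\cammino{m}\alpha,\re)$ is dual: enabledness forces $\post{(\cammino{m}\alpha,\fe)}\subseteq\mrkof{R_1}$, so in particular the key-place $\cammino{m}\plusdec{z}\pastdec{\alpha}.\underline{\alpha}$ is marked, which by the definition of $\mrkof{\cdot}$ happens exactly when $R_1\con E[\mem{*}{\alpha^{z}_z}{\sum_{i\in I\setminus\setenum{z}}\alpha_i.Q_i}{m'}\proc Q_z]$ with $\cammino{m'} = \cammino{m}$, while the remaining places of $\post{(\cammino{m}\alpha,\fe)}$ being marked forces the continuation to be exactly $Q_z$; then \textsc{r-act$\rev$} applies and \cref{cor:mem_dec} read backwards gives $\mrkof{R_2} = \marko{m}'$. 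The two synchronising cases $t = (\setenum{t_1,t_2},d)$ are handled the same way applied to two parallel active subprocesses at paths $\cammino{m_1}$ and $\cammino{m_2}$: in the forward case $\pre{t} = \pre{t_1}\cup\pre{t_2}$ and $\fuse$ does not touch these presets, so \textsc{r-syn} gives $R_1\ltsk{m_1,m_2}{\tau}R_2$; in the reverse case the synchronisation key-place $s_{\setenum{t_1,t_2}}$ is marked exactly when $R_1$ exhibits two full-synchronisation events of the forms $R'_{m_2@m_1}$ and $R''_{m_1@m_2}$ (the shape required by \textsc{r-syn$\rev$}), and since $m_i@m_j$ mirrors $\fuse$, \textsc{r-syn$\rev$} gives $R_1\rltsk{m_1,m_2}{\tau}R_2$; in both subcases $\mrkof{R_2} = \marko{m}'$ by Property~\ref{prp:memory_dec} applied to each component together with the definition of $\fuse$, exactly as in \cref{lem:sound}.

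The technical heart of the argument — and the step I expect to be the main obstacle — is the ``marking-to-redex'' inversion used in every case: the claim that $\pre{t}\subseteq\mrkof{R_1}$ (respectively, in the reverse cases, that the relevant key-place is marked) forces $R_1$ to be structurally congruent to a process $E[\,\cdot\,]$ that syntactically exhibits the appropriate redex at the address prescribed by $\cammino{\cdot}$. This is established by induction on the structure of the coherent process $R_1$ — equivalently, by following how $\mrkof{\cdot}$ distributes over $\parallel$, $\hide a$ and the successive memory layers $\mem{m_1}{\alpha^{i}}{Q}{m}$, $\mem{*}{\alpha^{i}}{Q}{m}$, $\entry{i}$ — matching each decoration $\phi\in\setenum{\pardec{i},\plusdec{i},\pastdec{\alpha},\restdec{a}}$ occurring as a prefix of the transition/place name with the corresponding context former, and appealing to \cref{lem:struct} to absorb the uses of \textsc{split} and \textsc{res} needed to bring $R_1$ into the required shape. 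In the reverse cases one additionally has to argue that a marked key-place is accompanied by the \emph{initial} marking of the associated continuation subnet, so that the side conditions of \textsc{r-act$\rev$} and \textsc{r-syn$\rev$} are genuinely met; this is precisely where the safety of the \rcn{} and the key-place machinery of \cref{sec:nets} are used. Once this inversion is in place, the marking equalities reduce to the routine bookkeeping already performed, in the opposite direction, in \cref{lem:sound}.
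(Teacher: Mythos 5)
Your proposal is correct and follows essentially the same route as the paper's proof: both read the decorated transition name back into a path that locates the redex, use enabledness of $t$ at $\mrkof{R_1}$ to exhibit $R_1$ as an active context around the corresponding monitored subprocess, and then discharge the marking equality via Property~\ref{prp:memory_dec} and \cref{cor:mem_dec}. If anything, you spell out the reverse and synchronisation cases --- which the paper dismisses as ``similar'' --- and the marking-to-redex inversion in somewhat more detail than the published argument does.
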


\begin{proof}
  If $\mrkof{R_1}\trans{t}\marko{m}'$, then
  $\mrkof{R_1}= \marko{m}_0 \cup \pre{t}$ and
  $\marko{m}' = (\marko{m}_0 \setminus \pre{t}) \cup \post{t}$. The encoding
  of $\ccstonet\cdot$ is such that each transition or place name has a unique
  form, which corresponds to a path of a CCS term, and the transitions in
  $\revnet{\ccstonet\cdot}$ are of the form $(t,d)$, where $t$ is the
  transition name of the CCS term and $d\in \{\fe,\re\}$ is the
  \emph{direction}, either forward or reverse. That is from the transition
  name $(t,d)$ we can isolate the RCCS term which can mimic the action.

  If the transition $(t,d)$ is not a synchronisation, that is $(t,d)$ is not of
  the form $(\{t_1,t_2\},d)$, then we can assume w.l.o.g. that
  $t = (\tilde{\phi}\alpha,d)$ with $\tilde{\phi}$ being a sequence of
  $\phi \in \{\pardec{i},\plusdec{i},\pastdec{\alpha},\restdec{a}\}$. Suppose
  $d$ is $\fe$. If the last decoration in $\tilde{\phi}$ has the form
  $\plusdec{j}$, that is $\tilde{\phi}=\tilde{\phi}'\plusdec{j}$ this means
  there exists in the net a set of transition
  $T' = \{t_i = (\tilde{\phi}\beta_i,\fe) \st (t_i,\fe) \in T\}$. Now, let
  assume that the ancestor of $R_1$ is $P$, we have that
  $P = \ctx{C}{\sum_{i\in I} \beta_i.Q_i}$ where there exists an index
  $j\in I$ such that $\beta_j = \alpha$ and $\alpha$ is the action mimicked by
  the transition $(t,\fe)$ and the right position of the hole in the context
  is calculated using $\tilde{\phi}$.
  Also, since the transition is enabled in the net, then also
  $R_1 = \ctx{E}{(m\proc \sum_{i\in I} \beta_i.Q_i)\hide{A}}$ where
  $\ctx{E}{\cdot}$ is an active context. Hence, we have that
  $$\ctx{E}{(m\proc \sum_{i\in I} \beta_i.Q_i)\hide{A}} \ltsk{m}{\beta^j}
  \ctx{E}{(\mem{*}{\beta^j}{\sum_{i\in I\setminus \{j\} }}{m}\proc
    Q_j)\hide{A}} = R_2$$ By definition \ref{de:rccsasun} we have
  $\marko{m} = \mrkof{R_1}$, and by definition \ref{bc:choicenet}
  $\pre{t} = \{\tilde{\phi}\beta_i.Q_i \st i \in I\} $ and
  $\post{t} = \{\tilde{\phi}\pastdec{\beta_j}.\underline{\beta_j}\} \cup
  \tilde{\phi}.\{\beta_j.Q_j\}$. Also
  \begin{align*}
    \mrkof{R_1} &
      = \mrkof{\ctx{E}{\nil}} \fuse \mrkof{(m\proc \sum_{i\in I} \beta_i.Q_i)\hide{A}}
      = \marko{m} \cup \marko{m}_1
    \\
    \mrkof{R_2} &
      = \mrkof{\ctx{E}{\nil}} \fuse \mrkof{(\mem{*}{\beta^j}{\sum_{i\in I\setminus \{j\} }}{m}\proc Q_j)\hide{A}}
      = \marko{m} \cup \marko{m}_2
  \end{align*}
  where $\marko{m_1}$ and $\marko{m_2}$ are the results of applying the
  eventual synchronisation $\fuse$ respectively on
  $\mrkof{(m\proc \sum_{i\in I} \beta_i.Q_i)\hide{A}}$ and
  $\mrkof{(\mem{*}{\beta^j}{\sum_{i\in I\setminus \{j\} }}{m}\proc
    Q_j)\hide{A}}$. Moreover, we can separate from $\marko{m}_1$ and
  $\marko{m}_2$ the key places, that is the places whose name terminates with
  $\pastdec{\alpha}.\underline{\alpha}$ or with $s_{\{t_1,t_2\}}$. Be
  $\marko{m}^k_i$ such markings then we have:
  \begin{align*}
    &\mrkof{R_1}
      =  \marko{m} \cup \marko{m}_1
      = \marko{m} \cup \marko{m}^k_1 \cup \marko{m}'_1\\
    &\mrkof{R_2}
      = \marko{m} \cup \marko{m}_2
      = \marko{m} \cup \marko{m}^k_2 \cup \marko{m}'_2
  \end{align*}

  By definition of $\mrkof{\cdot}$ we have that

  \begin{align*}
    &\marko{m}'_1
      = \{\cammino{m}.\plusdec{i}\beta_i.Q_i \st i\in I\} \\
    &\marko{m}'_2
      = \{\cammino{(\mem{*}{\beta^j}{\sum_{i\in I\setminus \{j\} }}{m}}.\pastdec{\beta_j}.\underline{\beta_j}\} \cup \{\cammino{(\mem{*}{\beta^j}{\sum_{i\in I\setminus \{j\} }}{m}}.\pastdec{\beta_j}.Q_j\}
  \end{align*}
  It is easy to check that $\tilde{\phi} = \cammino{m}\plusdec{i}$ and
  $\tilde{\phi} = \cammino{(\mem{*}{\beta^j}{\sum_{i\in I\setminus \{j\}
      }}{m}}$. And we are done.

  The cases of synchonisation and backward transitions are similar.
\end{proof}
\noindent 
We can now state our main result in terms of bisimulation:

\begin{thm} Let $R$ be an RCCS process and let $P = \ancestor{R}$ be its
  ancestor, then \[\emp\proc P \sim_{FR} \revnet{\ccstonet{P}}\]
\end{thm}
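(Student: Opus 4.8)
The plan is to exhibit an explicit relation and show it is a forward reverse bisimulation in the sense of \cref{def:bisim}, thereby witnessing $\emp\proc P \sim_{FR} \revnet{\ccstonet{P}}$. Concretely, I would define
\[
\mathcal{R} = \setcomp{(R, \langle S, T, F, \mrkof{R}\rangle)}{R \text{ coherent with } \ancestor{R} = P \text{ and } {\ccstonet{P}} = \macronet{}{}},
\]
i.e.\ $\mathcal{R}$ pairs every coherent derivative $R$ of $\emp\proc P$ with the \rcn{} that has the places, transitions and flow relation of $\revnet{\ccstonet{P}}$ and initial marking $\mrkof{R}$. The pair $(\emp\proc P, \revnet{\ccstonet{P}})$ lies in $\mathcal{R}$ because $\ancestor{\emp\proc P} = P$ (by \cref{lbl:ancestor}) and $\mrkof{\emp\proc P} = \{P\} = \marko{m}_P$, the initial marking of $\ccstonet{P}$.

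It then remains to check the eight clauses of \cref{def:bisim}. Clauses 1--4 (process moves simulated by net firings) are exactly the content of \cref{lem:sound} (Soundness): if $R_1 \rccst{\hat m:\alpha} R_2$ then $\revnet{\ccstonet{R_2}} = \langle S,T,F,\mrkof{R_2}\rangle$, some $t\in T$ fires with $\mrkof{R_1}\trans{t}\mrkof{R_2}$, and $t$ has exactly the shape required (either $(\cammino{m}\alpha,d)$ or $(\setenum{t_1,t_2},d)$ with the stated side conditions); since $R_2$ is again coherent with ancestor $P$, the resulting pair is back in $\mathcal{R}$. Clauses 5--8 (net firings simulated by process moves) are exactly \cref{lem:compl} (Completeness): a firing $\mrkof{R_1}\trans{t}\marko{m}'$ yields a coherent $R_2$ with $\revnet{\ccstonet{R_2}} = \langle S,T,F,\marko{m}'\rangle$ — so $\marko{m}' = \mrkof{R_2}$ — together with a matching transition $R_1 \rccst{m:\alpha} R_2$ or $R_1 \rccst{m_1,m_2:\tau} R_2$ of the right form; again $R_2$ is coherent with ancestor $P$, so the pair is in $\mathcal{R}$. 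The one technical point worth spelling out is that in clauses 5--8 one must produce \emph{some} $R$ with $\mrkof{R}=\marko{m}$: since we only ever reach markings of the form $\mrkof{R_1}$ with $R_1$ coherent and, by \cref{pr:reachable-markings-of-rcn}, firing reversing transitions does not leave the reachable markings of $\subnet{N}{T\setminus\Er}{}$, every marking encountered along a run from $\mrkof{\emp\proc P}$ is indeed of the form $\mrkof{R}$ for a coherent $R$, so the hypotheses of \cref{lem:compl} are met and $\mathcal{R}$ is closed under the moves.

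Finally I would note that $\mathcal{R}$ is nonempty on the required pair and closed under all transitions on both sides, hence contained in $\sim_{FR}$, which gives $\emp\proc P \mathrel{\sim_{FR}} \revnet{\ccstonet{P}}$ as claimed; by \cref{pr:rccsasun} the right-hand net really is an \rcn{}, so the statement of \cref{def:bisim} applies. I expect the main obstacle to be purely bookkeeping rather than conceptual: verifying that the transition names and the side conditions ($t = (\cammino{m}\alpha,d)$, $\co{\lbl{t_1}} = \lbl{t_2}$, $\cammino{m_i}\alpha_i = t_i$, etc.) line up precisely between the two lemmas and the eight clauses — in particular that the $\fuse$-merging in $\mrkof{\cdot}$ and the $m_2@m_1$ memory update in rule \textsc{r-syn} correspond exactly, which is where \cref{prp:memory_dec} and \cref{cor:mem_dec} do the real work inside \cref{lem:sound} and \cref{lem:compl}. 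Since those lemmas are already established, the proof of the theorem is essentially the observation that they jointly furnish all eight bisimulation clauses for the single relation $\mathcal{R}$.
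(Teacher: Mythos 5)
Your proposal is correct and follows essentially the same route as the paper: you define the same relation $\mathcal{R}$ pairing each coherent derivative $R$ of $\emp\proc P$ with the net $\langle S,T,F,\mrkof{R}\rangle$, and discharge the eight clauses of the bisimulation definition via the Soundness and Completeness lemmas, exactly as the paper does. Your additional remarks (membership of the initial pair, closure of $\mathcal{R}$ under moves, and reachable markings being of the form $\mrkof{R}$) only make explicit what the paper leaves implicit.
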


\begin{proof} It is sufficient to show that
  \[ \mathcal{R} = \{(R, \langle S,T, F, \mrkof R\rangle) \ |\ \ancestor R =
    P, \ \revnet{\ccstonet{P}} = \langle S,T, F, \marko m\rangle \} \] is a
  forward and reverse bisimulation. It is easy to check that all the conditions of
  Definition~\ref{def:bisim} are matched by Lemmas~\ref{lem:sound} and~\ref{lem:compl}.
\end{proof}

\section{Conclusions and future works}\label{sec:concl}

On the line of previous research we have equipped a reversible process
calculus with a non sequential semantics by using one of the classical
encoding of process calculi into nets.
What comes out from the encoding is that the machinery to reverse a process
was already present in the encoding.
Other approaches to address true concurrency in reversible calculi have been
explored, for instance \cite{aubert22,aubert23}, where a proved semantics~\cite{prooved}
for CCSK is given. This requires to revisit the LTS of CCSK in order to add
extra information in the labels, about the process which contributed to an
action, and then to derive a true-concurrent notion. Our approach directly
compiles RCCS into a truly concurrent model, and hence we do not need to
modify the lts of RCCS. Hence we exploit the natively truly concurrent
semantics of Petri nets in order to retrieve a truly concurrent semantics of
RCCS. Also our approach accounts for infinite behaviours, while
\cite{aubert22,aubert23} do not. 

The current results applies to RCCS, but we do believe that the same encoding
could be used to model CCSK processes. As a matter of fact, in CCSK the
information is stored directly in the process and executed prefixes are marked
with communications keys and in our encoding it is signalled by a token in
\emph{key}-places. For example if we take the process $P = a.Q$ in CCSK the
process evolves in $a[i].Q$ where the forward behaviour of the process is $Q$
while the backward behaviour is represented by the marked prefix $a[i]$. The
same mechanisms applies to synchronisations. If we take the process
$a.b.\nil \parallel \co{a}.\nil$ the process can make a synchronisation
followed by the $b$ action and evolves to $a[i].b[j].\nil \parallel
\co{a}[i].\nil$. In this way, the synchronisation on $a$ cannot be undone if
first the action $b$ is undone. By looking on how history information is kept
into CCSK processes, it is clear that there is a tight correspondence between
the marked prefixes, the key-places and $\pastdec{\cdot}$ decorations we have
used in unravel nets. Also in CCSK the process structure does not change, and
the marking of the reversible net would correspond to the marked prefixes.
This seems to bring a more straightforward encoding of CCSK into Petri Nets,
where the marking can be easily retrieved from a CCSK term. Having the two encodings into Petri Net would
allow us for cross-fertilization results, in line with \cite{LaneseMM21}.
The
whole encoding and the machinery connected to it is left for future work.

Our result relies on unravel nets, that are able to represent
\emph{or}-causality. The consequence is that the same event may have different
pasts. Unravel nets are naturally related to \emph{bundle} event structures
\cite{LanForte92,LBKConcur97}, where the dependencies are represented using
\emph{bundles}, namely finite subsets of conflicting events, and the bundle
relation is usually written as $X \bundle e$.
Starting from an unravel net $\macronet{}{}$, and considering the transition
$t\in T$, the bundles representing the dependencies are $\pre{s}\bundle t$ for
each $s\in \pre{t}$, and the conflict relation can be easily inferred by the
semantic one definable on the unravel net. This result relies on the fact that
in any unravel net, for each place $s$, the transitions in $\pre{s}$ are
pairwise conflicting. The \emph{reversible} bundle structures add to the
bundle relation (defined also on the reversing events) a prevention relation,
and the intuition behind this relation is the usual one: some events, possibly
depending on the one to be reversed, are still present and they \emph{prevent}
that event to be reversed. The problem here is that in an unravel net,
differently from occurrence nets, is not so easy to determine which
transitions depend on the happening of a specific one, thus potentially
preventing it from being reversed. An idea would be to consider all the
transitions in $\post{s}$ for each $s\in \post{t}$, but it has to be carefully
checked if this is enough. Thus, which is the proper ``reversible bundle event
structure'' corresponding to the reversible unravel nets has to be answered,
though it is likely that the conditions to be posed on the prevention
relations will be similar to the ones considered in
\cite{GraversenPY18,GraversenPY2021}. Once that also this step is done, we
will have the full correspondence between reversible processes calculi and non
sequential models.

Another future works idea would be to move from reversible CCS to reversible
$\pi$-calculus \cite{LaneseMS16,CristescuKV13} by relying on the results of
\cite{BusiG09}. In \cite{BusiG09} a truly concurrent semantics of
$\pi$-calculus is given in form of Petri nets with inhibitor arcs. We could
exploit our previous results on reversibility and Petri nets with inhibitor
arcs \cite{MelgrattiMP21,MelgrattiMP23,tocl} to obtain a truly concurrent semantics
for reversible $\pi$-calculus in Petri nets with inhibitor arcs. Alternatively
we could exploit the encoding of reversible $\pi$-calculus into rigid families
(based on configuration structures), given in \cite{CristescuKV16}, and
bring it to Petri nets.

\bibliographystyle{alphaurl}
\bibliography{biblio}

\newcommand{\etalchar}[1]{$^{#1}$}
\begin{thebibliography}{MMP{\etalchar{+}}20}

\bibitem[A{\etalchar{+}}20]{wg1}
Bogdan Aman et~al.
\newblock Foundations of reversible computation.
\newblock In {\em Reversible Computation: Extending Horizons of Computing -
  Selected Results of the {COST} Action {IC1405}}, volume 12070 of {\em LNCS},
  pages 1--40. Springer, 2020.
\newblock \href {https://doi.org/10.1007/978-3-030-47361-7_1}
  {\path{doi:10.1007/978-3-030-47361-7_1}}.

\bibitem[AC20]{AubertC20}
Cl{\'{e}}ment Aubert and Ioana Cristescu.
\newblock How reversibility can solve traditional questions: The example of
  hereditary history-preserving bisimulation.
\newblock In {\em 31st International Conference on Concurrency Theory, {CONCUR}
  2020}, volume 171 of {\em LIPIcs}, pages 7:1--7:23. Schloss Dagstuhl -
  Leibniz-Zentrum f{\"{u}}r Informatik, 2020.
\newblock \href {https://doi.org/10.4230/LIPIcs.CONCUR.2020.7}
  {\path{doi:10.4230/LIPIcs.CONCUR.2020.7}}.

\bibitem[Aub22]{aubert22}
Cl{\'{e}}ment Aubert.
\newblock Concurrencies in reversible concurrent calculi.
\newblock In {\em Reversible Computation - 14th International Conference, {RC}
  2022, Urbino, Italy, July 5-6, 2022, Proceedings}, volume 13354 of {\em
  Lecture Notes in Computer Science}, pages 146--163. Springer, 2022.
\newblock \href {https://doi.org/10.1007/978-3-031-09005-9_10}
  {\path{doi:10.1007/978-3-031-09005-9_10}}.

\bibitem[Aub24]{aubert23}
Cl{\'{e}}ment Aubert.
\newblock The correctness of concurrencies in (reversible) concurrent calculi.
\newblock {\em J. Log. Algebraic Methods Program.}, 136:100924, 2024.
\newblock \href {https://doi.org/10.1016/J.JLAMP.2023.100924}
  {\path{doi:10.1016/J.JLAMP.2023.100924}}.

\bibitem[BC94]{BCIC94}
G{\'{e}}rard Boudol and Ilaria Castellani.
\newblock Flow models of distributed computations: Three equivalent semantics
  for {CCS}.
\newblock {\em Information and Computation}, 114(2):247--314, 1994.
\newblock \href {https://doi.org/10.1006/inco.1994.1088}
  {\path{doi:10.1006/inco.1994.1088}}.

\bibitem[BDd22]{BarbaneraDd22}
Franco Barbanera, Mariangiola Dezani{-}Ciancaglini, and Ugo de'Liguoro.
\newblock Open compliance in multiparty sessions.
\newblock In {\em Formal Aspects of Component Software - 18th International
  Conference, {FACS} 2022, Virtual Event, November 10-11, 2022, Proceedings},
  pages 222--243, 2022.
\newblock \href {https://doi.org/10.1007/978-3-031-20872-0_13}
  {\path{doi:10.1007/978-3-031-20872-0_13}}.

\bibitem[BG09]{BusiG09}
Nadia Busi and Roberto Gorrieri.
\newblock Distributed semantics for the pi-calculus based on {P}etri nets with
  inhibitor arcs.
\newblock {\em J. Log. Algebraic Methods Program.}, 78(3):138--162, 2009.
\newblock \href {https://doi.org/10.1016/j.jlap.2008.08.002}
  {\path{doi:10.1016/j.jlap.2008.08.002}}.

\bibitem[Bou90]{BouFN90}
G{\'{e}}rard Boudol.
\newblock Flow event structures and flow nets.
\newblock In {\em Semantics of Systems of Concurrent Processes, {LITP} Spring
  School on Theoretical Computer Science, La Roche Posay, France, April 23-27,
  1990, Proceedings}, volume 469 of {\em LNCS}, pages 62--95. Springer, 1990.
\newblock \href {https://doi.org/10.1007/3-540-53479-2_4}
  {\path{doi:10.1007/3-540-53479-2_4}}.

\bibitem[CGP09]{castagna2009theory}
Giuseppe Castagna, Nils Gesbert, and Luca Padovani.
\newblock A theory of contracts for web services.
\newblock {\em ACM Transactions on Programming Languages and Systems (TOPLAS)},
  31(5):1--61, 2009.

\bibitem[CKV13]{CristescuKV13}
Ioana Cristescu, Jean Krivine, and Daniele Varacca.
\newblock A compositional semantics for the reversible p-calculus.
\newblock In {\em 28th Annual {ACM/IEEE} Symposium on Logic in Computer
  Science, {LICS} 2013, New Orleans, LA, USA, June 25-28, 2013}, pages
  388--397. {IEEE} Computer Society, 2013.
\newblock \href {https://doi.org/10.1109/LICS.2013.45}
  {\path{doi:10.1109/LICS.2013.45}}.

\bibitem[CKV16]{CristescuKV16}
Ioana Cristescu, Jean Krivine, and Daniele Varacca.
\newblock Rigid families for the reversible {\(\pi\)}-calculus.
\newblock In {\em Reversible Computation - 8th International Conference, {RC}
  2016}, volume 9720, pages 3--19. Springer, 2016.
\newblock \href {https://doi.org/10.1007/978-3-319-40578-0_1}
  {\path{doi:10.1007/978-3-319-40578-0_1}}.

\bibitem[Cou83]{infiniteTree}
Bruno Courcelle.
\newblock Fundamental properties of infinite trees.
\newblock {\em Theoretical Computer Science}, 25(2):95--169, 1983.
\newblock \href {https://doi.org/10.1016/0304-3975(83)90059-2}
  {\path{doi:10.1016/0304-3975(83)90059-2}}.

\bibitem[DK04]{rccs}
Vincent Danos and Jean Krivine.
\newblock Reversible communicating systems.
\newblock In {\em {CONCUR} 2004 - Concurrency Theory}, volume 3170 of {\em
  LNCS}, pages 292--307. Springer, 2004.

\bibitem[DNM88]{DNMActa88}
Pierpaolo Degano, Rocco~De Nicola, and Ugo Montanari.
\newblock A distributed operational semantics for {CCS} based on
  condition/event systems.
\newblock {\em Acta Informatica}, 26(1/2):59--91, 1988.
\newblock \href {https://doi.org/10.1007/BF02915446}
  {\path{doi:10.1007/BF02915446}}.

\bibitem[DP92]{prooved}
Pierpaolo Degano and Corrado Priami.
\newblock Proved trees.
\newblock In {\em Automata, Languages and Programming, 19th International
  Colloquium, ICALP92}, volume 623 of {\em Lecture Notes in Computer Science},
  pages 629--640. Springer, 1992.
\newblock \href {https://doi.org/10.1007/3-540-55719-9\_110}
  {\path{doi:10.1007/3-540-55719-9\_110}}.

\bibitem[Gol90]{Goltz90}
Ursula Goltz.
\newblock {CCS} and {{P}etri Nets}.
\newblock In {\em Semantics of Systems of Concurrent Processes, {LITP} Spring
  School on Theoretical Computer Science, La Roche Posay, France, April 23-27,
  1990, Proceedings}, volume 469 of {\em LNCS}, pages 334--357. Springer, 1990.
\newblock \href {https://doi.org/10.1007/3-540-53479-2_14}
  {\path{doi:10.1007/3-540-53479-2_14}}.

\bibitem[GPY18]{GraversenPY18}
Eva Graversen, Iain Phillips, and Nobuko Yoshida.
\newblock Event structure semantics of (controlled) reversible {CCS}.
\newblock In {\em Reversible Computation {RC} 2018}, volume 11106 of {\em
  LNCS}, pages 102--122. Springer, 2018.
\newblock \href {https://doi.org/10.1007/978-3-319-99498-7_7}
  {\path{doi:10.1007/978-3-319-99498-7_7}}.

\bibitem[GPY21]{GraversenPY2021}
Eva Graversen, Iain Phillips, and Nobuko Yoshida.
\newblock Event structure semantics of (controlled) reversible {CCS}.
\newblock {\em Journal of Logical and Algebraic Methods in Programming}, 2021.
\newblock \href {https://doi.org/10.1016/j.jlamp.2021.100686}
  {\path{doi:10.1016/j.jlamp.2021.100686}}.

\bibitem[Kri13]{rccsnew}
Jean Krivine.
\newblock A verification technique for reversible process algebra.
\newblock In {\em Reversible Computation {RC} 2012. Revised Papers}, volume
  7581 of {\em LNCS}. Springer, 2013.
\newblock \href {https://doi.org/10.1007/978-3-642-36315-3}
  {\path{doi:10.1007/978-3-642-36315-3}}.

\bibitem[Lan92]{LanForte92}
Rom Langerak.
\newblock Bundle event structures: a non-interleaving semantics for {LOTOS}.
\newblock In {\em Formal Description Techniques, V, Proceedings of the {IFIP}
  {TC6/WG6.1} {FORTE} 92}, volume {C-10} of {\em {IFIP} Transactions}, pages
  331--346. North-Holland, 1992.

\bibitem[LBK97]{LBKConcur97}
Rom Langerak, Ed~Brinksma, and Joost{-}Pieter Katoen.
\newblock Causal ambiguity and partial orders in event structures.
\newblock In {\em {CONCUR} '97: Concurrency Theory}, volume 1243 of {\em LNCS},
  pages 317--331. Springer, 1997.
\newblock \href {https://doi.org/10.1007/3-540-63141-0_22}
  {\path{doi:10.1007/3-540-63141-0_22}}.

\bibitem[LMM21]{LaneseMM21}
Ivan Lanese, Doriana Medic, and Claudio~Antares Mezzina.
\newblock Static versus dynamic reversibility in {CCS}.
\newblock {\em Acta Informatica}, 58(1):1--34, 2021.
\newblock \href {https://doi.org/10.1007/s00236-019-00346-6}
  {\path{doi:10.1007/s00236-019-00346-6}}.

\bibitem[LMS16]{LaneseMS16}
Ivan Lanese, Claudio~Antares Mezzina, and Jean{-}Bernard Stefani.
\newblock Reversibility in the higher-order {\(\pi\)}-calculus.
\newblock {\em Theor. Comput. Sci.}, 625:25--84, 2016.
\newblock \href {https://doi.org/10.1016/j.tcs.2016.02.019}
  {\path{doi:10.1016/j.tcs.2016.02.019}}.

\bibitem[M{\etalchar{+}}20]{wg2}
Claudio~Antares Mezzina et~al.
\newblock Software and reversible systems: {A} survey of recent activities.
\newblock In {\em Reversible Computation: Extending Horizons of Computing -
  Selected Results of the {COST} Action {IC1405}}, volume 12070 of {\em LNCS},
  pages 41--59. Springer, 2020.
\newblock \href {https://doi.org/10.1007/978-3-030-47361-7_2}
  {\path{doi:10.1007/978-3-030-47361-7_2}}.

\bibitem[Mil80]{Milbook}
Robin Milner.
\newblock {\em A Calculus of Communicating Systems}, volume~92 of {\em LNCS}.
\newblock Springer, 1980.
\newblock \href {https://doi.org/10.1007/3-540-10235-3}
  {\path{doi:10.1007/3-540-10235-3}}.

\bibitem[MMP{\etalchar{+}}20]{MelgrattiM0PU20}
Hern{\'{a}}n Melgratti, Claudio~Antares Mezzina, Iain Phillips, G.~Michele
  Pinna, and Irek Ulidowski.
\newblock Reversible occurrence nets and causal reversible prime event
  structures.
\newblock In {\em Reversible Computation {RC} 2020}, volume 12227 of {\em
  LNCS}, pages 35--53. Springer, 2020.
\newblock \href {https://doi.org/10.1007/978-3-030-52482-1_2}
  {\path{doi:10.1007/978-3-030-52482-1_2}}.

\bibitem[MMP21a]{MelgrattiMP21}
Hern{\'{a}}n Melgratti, Claudio~Antares Mezzina, and G.~Michele Pinna.
\newblock A distributed operational view of reversible prime event structures.
\newblock In {\em Proceedings of the 36rd Annual {ACM/IEEE} Symposium on Logic
  in Computer Science, {LICS} 2021}, pages 1--13. {IEEE}, 2021.
\newblock \href {https://doi.org/10.1109/LICS52264.2021.9470623}
  {\path{doi:10.1109/LICS52264.2021.9470623}}.

\bibitem[MMP21b]{MelgrattiMP21b}
Hern{\'{a}}n Melgratti, Claudio~Antares Mezzina, and G.~Michele Pinna.
\newblock Towards a truly concurrent semantics for reversible {CCS}.
\newblock In {\em Reversible Computation - 13th International Conference, {RC}
  2021}, volume 12805 of {\em Lecture Notes in Computer Science}, pages
  109--125. Springer, 2021.
\newblock \href {https://doi.org/10.1007/978-3-030-79837-6_7}
  {\path{doi:10.1007/978-3-030-79837-6_7}}.

\bibitem[MMP23]{MelgrattiMP23}
Hern{\'{a}}n Melgratti, Claudio~Antares Mezzina, and G.~Michele Pinna.
\newblock Relating reversible {P}etri nets and reversible event structures,
  categorically.
\newblock In {\em Formal Techniques for Distributed Objects, Components, and
  Systems - 43rd {IFIP} {WG} 6.1 International Conference, {FORTE} 2023, Held
  as Part of the 18th International Federated Conference on Distributed
  Computing Techniques, DisCoTec 2023}, volume 13910 of {\em Lecture Notes in
  Computer Science}, pages 206--223. Springer, 2023.
\newblock \href {https://doi.org/10.1007/978-3-031-35355-0_13}
  {\path{doi:10.1007/978-3-031-35355-0_13}}.

\bibitem[MMP24]{tocl}
Hern\'{a}n Melgratti, Claudio~Antares Mezzina, and G.~Michele Pinna.
\newblock A reversible perspective on petri nets and event structures.
\newblock {\em ACM Trans. Comput. Logic}, aug 2024.
\newblock Just Accepted.
\newblock \href {https://doi.org/10.1145/3686154} {\path{doi:10.1145/3686154}}.

\bibitem[MMU20]{MelgrattiMU20}
Hern{\'{a}}n Melgratti, Claudio~Antares Mezzina, and Irek Ulidowski.
\newblock Reversing place transition nets.
\newblock {\em Log. Methods Comput. Sci.}, 16(4), 2020.
\newblock URL: \url{https://lmcs.episciences.org/6843}.

\bibitem[NPW81]{NPW:PNES}
Mogens Nielsen, Gordon Plotkin, and Glynn Winskel.
\newblock {P}etri {N}ets, {E}vent {S}tructures and {D}omains, {P}art 1.
\newblock {\em Theoretical Computer Science}, 13:85--108, 1981.
\newblock \href {https://doi.org/10.1016/0304-3975(81)90112-2}
  {\path{doi:10.1016/0304-3975(81)90112-2}}.

\bibitem[Pie02]{Pierce02}
Benjamin~C. Pierce.
\newblock {\em Types and programming languages}.
\newblock {MIT} Press, 2002.

\bibitem[PU07]{ccsk}
Iain Phillips and Irek Ulidowski.
\newblock Reversing algebraic process calculi.
\newblock {\em J. Log. Algebraic Methods Program.}, 73(1-2):70--96, 2007.
\newblock \href {https://doi.org/10.1016/j.jlap.2006.11.002}
  {\path{doi:10.1016/j.jlap.2006.11.002}}.

\bibitem[Rei85]{ReiBook}
Wolfgang Reisig.
\newblock {\em {P}etri Nets: An Introduction}, volume~4 of {\em {EATCS}
  Monographs on Theoretical Computer Science}.
\newblock Springer, 1985.
\newblock \href {https://doi.org/10.1007/978-3-642-69968-9}
  {\path{doi:10.1007/978-3-642-69968-9}}.

\bibitem[Rei99]{Reinke99}
Claus Reinke.
\newblock Haskell-{C}oloured {P}etri {N}ets.
\newblock In {\em Implementation of Functional Languages, 11th International
  Workshop, IFL'99}, volume 1868 of {\em Lecture Notes in Computer Science},
  pages 165--180. Springer, 1999.
\newblock \href {https://doi.org/10.1007/10722298\_10}
  {\path{doi:10.1007/10722298\_10}}.

\bibitem[Win86]{Win:ES}
Glynn Winskel.
\newblock Event {S}tructures.
\newblock In {\em {P}etri Nets: Applications and Relationships to Other Models
  of Concurrency}, volume 255 of {\em LNCS}, pages 325--392. Springer, 1986.
\newblock \href {https://doi.org/10.1007/3-540-17906-2_31}
  {\path{doi:10.1007/3-540-17906-2_31}}.

\bibitem[Win88]{Win88}
Glynn Winskel.
\newblock An introduction to event structures.
\newblock In {\em Linear Time, Branching Time and Partial Order in Logics and
  Models for Concurrency, School/Workshop, Noordwijkerhout, The Netherlands,
  May 30 - June 3, 1988, Proceedings}, volume 354 of {\em Lecture Notes in
  Computer Science}, pages 364--397. Springer, 1988.

\end{thebibliography}

\appendix

\section{Implementation}\label{sec:impl}

We describe an effective implementation of the proposed
encoding in \Haskell\footnote{The code can be accessed at \url{https://github.com/hmelgra/reversible-ccs-as-nets}.}.
The intent of this section is to provide a practical evidence that the coinductive approach to describe infinite behaviours
is effective, rather than providing a fully fledged tool. We are aware there exist other Petri net implementations in Haskel (see for example~\cite{Reinke99}), but a comparison with such tools is out of the scope of this section.

\subsection{Representation of infinite nets}
When working with an infinite data structure, a pivotal aspect is devising an
efficient strategy to traverse the pertinent section of the structure. In our
specific scenario, we prioritise the capability to identify and execute
enabled transitions within a (potentially infinite) net.
Therefore, our main objective is to identify those transitions that are
enabled at a given marking. %
For this purpose, we adopt a representation of infinite nets that facilitates
obtaining a truncated version of the net that contains all the enabled
transitions in given marking.
To maintain simplicity, we avoid explicitly representing the flow relation as
a set of pairs. Instead, we associate each transition with its preset (input
places) and postset (output places). Consequently, we rely on the following
instrumental datatype to represent transitions.
\InputHaskellD{27}{32}{Net.hs}
\noindent
The parameters \HI{t} and \HI{s} represent the types of the names of
transitions and places, respectively.
In this representation, a transition is defined by its name and two lists of
places, corresponding to its pre and postset.

Then, the datatype for nets is as follows:
\InputHaskellD{39}{44}{Net.hs}
The components of a net include a marking, denoted as \HI{netMarking}, which
is essentially a set of places. Additionally, there are two functions,
\HI{netPlaces} and \HI{netTransitions}, which map every marking to a set of
places and transitions, respectively, of a truncated, finite version of the
net. This truncated net includes all the transitions from the potentially
infinite net that are enabled in the given marking.

\begin{exa} Consider the infinite net $N$ depicted in Figure~\ref{fig:f-net-impl}.
  One potential \Haskell definition for $N$ could be \HI{nAt [0]}, utilising
  the function \HI{nAt} given in Figure~\ref{fig:f-net-impl-hs}. This function takes
  a marking of type \HI{[Int]} and returns a net with place names represented
  as integers and transitions as strings, i.e., of type \HI{Net Int String}.
  The net's definition relies on the functions \HI{p} and \HI{t}, which
  determine the truncation of the net corresponding to a given marking. It is
  important to note that, for a specific marking \HI{m}, any enabled
  transition $t$ in \HI{m} should satisfy the conditions $\pre t = \{i - 1\}$
  and $\post t = \{i\}$, where $0 < i \leq m$, and $m$ is the maximum integer in
  \HI{m}. Therefore, the function \HI{p}, which maps markings to sets of
  places, is defined as follows:

  \begin{itemize}
  \item For the empty marking, it returns an empty set of places since no
    transitions are enabled in the empty marking.
  \item For a non-empty marking \HI{m}, it generates a list containing all
    integers in the range from $0$ to the maximum value in \HI{m} plus one.
  \end{itemize}
  Similarly, the function \HI{t} creates a list of transitions, encompassing
  all those among the places in \texttt{p m}. These transitions are defined in
  such a way that  \HI{[i - 1]} represents its preset, and  \HI{[i]}
  represents its postset. The name of the $i$-th transition is denoted by $i$
  occurrences of \texttt{'a'}.

  \begin{figure}[thb]
    \centering
    \begin{subfigure}[b]{0.1\textwidth}
      \centering
      \scalebox{0.8}{{\footnotesize
\begin{tikzpicture}[scale=.7]
\tikzstyle{inhibitorred}=[o-,draw=red,thick]
\tikzstyle{inhibitorblu}=[o-,draw=blue,thick]
\tikzstyle{pre}=[<-,thick]
\tikzstyle{post}=[->,thick]
\tikzstyle{prered}=[<-,thick,draw=red]
\tikzstyle{postred}=[->,thick,draw=red]
\tikzstyle{readblue}=[-, draw=blue,thick]
\tikzstyle{transition}=[rectangle, draw=black,thick,minimum size=5mm]
\tikzstyle{revtransition}=[rectangle, draw=red!80,fill=red!30,thick,minimum size=5mm]
\tikzstyle{place}=[circle, draw=black,thick,minimum size=5mm]
\tikzstyle{placeblu}=[circle, draw=blue!80,fill=blue!20,thick,minimum size=5mm]
\node[place,tokens=1, label=left:$0$] (p1) at (1.5,5) {};
\node[place,label=left:$1$]          (p4) at (1.5,2.5) {};
\node[place,label=left:$2$]          (p5) at (1.5,0) {};
\node[transition] (a) at (1.5,3.75)  {$a$}
edge[pre] (p1)
edge[post](p4);
\node[transition] (c) at (1.5,1.25) {$aa$}
edge[pre] (p4)
edge[post] (p5);
\node[] (p6) at (1.5,-1.5)  {...}
edge[pre] (p5);
\end{tikzpicture}
}}
      \caption{$N$}
      \label{fig:f-net-impl}
    \end{subfigure}
    \hspace{.7cm}
    \begin{subfigure}[b]{0.7\textwidth}
      \centering
      \InputHaskellD{25}{33}{../test/TestPn.hs}
      \caption{\Haskell\ code}
      \label{fig:f-net-impl-hs}
    \end{subfigure}
    \caption{The \Haskell representation of a simple infinite net $N$}
    \label{fig:ex-impl}
  \end{figure}
\end{exa}

The auxiliary functions \HI{places} and \HI{transitions}, defined below, allow
us to respectively retrieve the set of places and transitions from the
truncation of net for its marking.
For instance, \HI{places (nAt [0])} returns \HI{[0,1]}, and
\HI{places (nAt [1,3])} gives \HI{[0,1,2,3,4]}.

\InputHaskellD{49}{53}{Net.hs}
Analogously, we rely on \HI{isTransition :: Eq t => t -> Net s t -> Bool} to
check that a given transition appears in the truncation of the net
\HI{n}.
Then, the following predicate \HI{isEnabled} allows to check if a given transition
is enabled on a net.

\InputHaskellD{70}{73}{Net.hs}
\noindent 
The guard \HI{isTransition t n} simply checks that \HI{t} appears in the
truncation of the net \HI{n}. In such case, a transition \HI{t} is enabled if
all elements in its preset appear in the marking of the net. Otherwise, the
transition is not enabled.

The firing of a transition straightforwardly changes the marking of the net
as expected, i.e., by removing the preset of the transition and by adding the
postset of the transition

\InputHaskellD{81}{84}{Net.hs}
\noindent 
Note that the firing generates an error if the transition is not enabled.

\subsection{Representing \CCS processes}
The datatype for representing \CCS actions is straightforwardly defined as
follows:
\InputHaskellD{16}{19}{Ccs.hs}
\noindent 
Note that the datatype is parametric with respect to the type \HI{a} of action
names.
The binary predicate \HI{dual} (shown below) tests whether a two actions
are dual, i.e., one is an input and the other is an output performed over the
same channel.
\InputHaskellD{28}{31}{Ccs.hs}

The datatype for representing \CCS processes is as follow.
\InputHaskellD{53}{61}{Ccs.hs}
\noindent 
The constructors are straightforward. For instance, \HI{CCS Char} stands for
the type of \CCS processes whose channel names are characters.
Then, the process $a.a \parallel \co{a} + b$ in Figure~\ref{fig:complex} is
defined as
\InputHaskellD{25}{26}{../test/TestCcs.hs}
\noindent 
We highlight that the datatype \HI{CCS} includes constructors for the finite
definition of infinite processes, i.e., \HI{Var} for a process variable and
\HI{Rec} for a recursive definition.
This choice is down to the facts that (i) our encoding uses \CCS processes as
the names of the elements of the generated nets; and (ii) the operational
semantics of nets is defined under the assumption that names can be
effectively compared (see details below). In order to have an equality test
for infinite terms, we opted for a finite representation.
Hence, the infinite \CCS process consisting of an infinite sequence
of inputs over the channel $a$ can be defined as follows

\InputHaskellD{28}{29}{../test/TestCcs.hs}

Process variables in CCS are now represented using the following type:

\InputHaskellD{42}{42}{../src/Ccs.hs}
\noindent 
This new type aims to enhance the parsing of strings into \HI{Ccs} instances
by implementing the \HI{Read} class (Details are omitted as they are
non-essential for the translation process).

This new type has been introduced in order to facilitate the parsing of
strings to CCS instances, by providing an instances of the class \HI{Read}.

When dealing with the finite representation of infinite processes, we need
the usual unfolding operation, which is defined in terms of the
substitution of a process variable by a process.
Substitution is given by the following
function
\InputHaskellD{74}{77}{Ccs.hs}
\noindent
whose defining equations are standard and therefore omitted.

The unfold function is as follows.
\InputHaskellD{91}{93}{Ccs.hs}
The function \HI{unfold} will be used in the definition of the encoding.

Despite we rely on the finite representation of \CCS{} processes, we remark
that the implementation of the encoding associates \textbf{infinite} nets to
recursive \CCS{} processes.

\subsection{Implementation of the encoding}

According to the encoding introduced in \cref{sec:coding}, the names of the places
and transitions of the obtained nets are (possibly) decorated \CCS{} processes.
We rely on the following datatypes introducing constructors for the names of
places and transitions.

\InputHaskellD{19}{30}{Encoding.hs}

\InputHaskellD{81}{91}{Encoding.hs}
\noindent 
The above definitions are in one-to-one correspondence with the names introduced
by the encoding of the previous Section, and self-explanatory.

We will use the predicate \HI{isKey} on place's names that determines if a
place name is a key, i.e., either \HI{PKey} or \HI{PSync} (its omitted
definition is straightforward).

\InputHaskellD{43}{43}{Encoding.hs}

The following function
\InputHaskellD{103}{103}{Encoding.hs}
\noindent
allows us to recover the label associated with a transition.

Then, the encoding function is given by
\InputHaskellD{143}{143}{Encoding.hs}

We now illustrate some of its representative defining equations.
According to \cref{bc:zeronet}, the encoding of the process $\zero$ (here
represented by \HI{Nil}) produces a net consisting of just one marked place.
We name that place \HI{Proc Nil}, i.e., the \CCS process 0.

\InputHaskellD{145}{145}{Encoding.hs}

The fact that the net is defined in terms of the constant functions
\HI{const [Proc Nil]} and \HI{const []} reflect that every finite truncation,
independently from the given marking, consists of just one place \HI{Proc Nil}
and none transition. The marking \HI{[Proc Nil]} assigns one token to the
unique place.

The encoding of a prefixed process follows \cref{bc:prefixnet}. Hence, the
encoding of $\alpha.P$ (written \HI{a :.p} in the implementation) is built on
top of the encoding of $P$, i.e., the names of the places and the transitions
appearing in the encoding of $P$ are decorated with the prefix
$\pastdec{\alpha}$. We use \HI{PPref a} for decorating a place name with the
past of action \HI{a} and similarly \HI{TPref a} for a transition name.
The following function (whose defining equations are omitted because are
uninteresting) is in charge of applying renamings to a net.

\InputHaskellD{117}{117}{Encoding.hs}

The first and second parameter correspond respectively to the renaming of
places and transitions. The third one is instrumental for mapping a marking on
the decorated names to a marking of the encoding of $P$, which is needed for
computing a truncation.
Then, the equation for the encoding of \HI{a :.p} is as follows.

\InputHaskellD[numbers=left]{147}{152}{Encoding.hs}
Note that line 3 introduces the net corresponding to the encoding of \HI{p},
with its element suitable renamed. Then, the places and transitions of the
(truncations of the) net are given by the defining equations of \HI{s} and
\HI{t}. Besides the fact that they are empty for empty markings, their
definitions mimic \cref{bc:prefixnet}. The encoding of \HI{p} is extended with
two places, one for the process (i.e., \HI{Proc (a :. p)}) and one for the key
(i.e, \HI{PKey a}), and one transition of name \HI{Act a}, whose preset is
\HI{Proc (a :. p)} and whose poset corresponds to the initial marking of the
encoding of \HI{p}, i.e., \HI{amp}, and  the new key \HI{PKey a}.

As for the illustrated cases, the remaining equations follow the corresponding
definitions in \Cref{sec:coding}.

\subsection{Reversing nets}

Reversible nets, are implemented as nets with tagged transitions: the tag
\HI{Fwd} stands for forward transitions and \HI{Bwd} are for reversing
transitions. The corresponding data type is as follows.

\InputHaskellD[numbers=left]{10}{12}{ReversibleNet.hs}
\noindent 
Then, the following function \HI{rev} takes a net and generates its reversible version.

\InputHaskellD[numbers=left]{18}{24}{ReversibleNet.hs}

Consider the network, denoted as \HI{Net s t m}, which is translated into a new net with the same sets of places and markings, represented as \HI{Net s t' m}. The set of transitions \HI{t'} in the new net is obtained by applying the following transformations to each transition \HI{Transition x y z} from the original set \HI{t}:

\begin{itemize}
\item Add a forward transition, denoted as \HI{Transition (Fwd x) y z}, to tag
  each transition in \HI{t} as forward.
\item Add the corresponding reversing transition, denoted as
  \HI{Transition (Rev x) y z}, to maintain the bidirectional nature of the net.
\end{itemize}

\subsection{Simulation}
The concepts introduced in the previous sections can now be effectively
utilised to simulate the behavior of reversible \CCS processes. To illustrate
this, let us consider the definition of the infinite \CCS process
\HI{ccs} below.
\InputHaskellD[numbers=left]{35}{42}{../test/TestCcs.hs}
This process is defined as the parallel composition of two infinite
processes, where the shared name \HI{1} is restricted.

To obtain the corresponding reversible net, we apply the encoding
followed by the reversing function, represented as \HI{rev(enc ccs)}.

Using the functions that determine the enabled transitions of net and
compute the firing of transitions, we can seamlessly implement a
simulation function to replicate the behavior of the process.

\InputHaskellD[numbers=left]{24}{24}{../app/Main.hs}

Then, the evaluation of
\InputHaskellD[numbers=left]{21}{21}{../app/Main.hs}
shows the set of enabled transitions of the obtained net, which are as follows.

\begin{HaskellD}
Enabled transitions:
1)  ->(|r:+l:2!)\1
2)  ->(|l:1?*|r:+r:1!)\1
\end{HaskellD}
The name \HI{(|r:+l:2!)\1} of the first transition indicates that it
corresponds to the output performed on channel \HI{2} by the left branch
(i.e., \HI{+l:}) of the right hand of the parallel composition (i.e.,
\HI{|r:}). Similarly, the symbol \HI{*} in the name \HI{|l:1?*|r:+r:1!)\1}
indicates that the transition corresponds to a synchronisation between the
input performed on channel \HI{1} by the left hand side of the parallel
composition (i.e., \HI{|l:}) and the output on channel \HI{1} performed by the
right branch of the right hand side of the parallel composition (i.e.,
\HI{|r:+r:}).

At this point, any of the two transitions can be fired. After firing the first one,
the obtained set of enabled transitions is the following.
\begin{HaskellD}
Enabled transitions:
1) ->(|r:+l:^2!.+l:2!)\1
2) ->(|l:1?*|r:+l:^2!.+r:1!)\1
3) <-(|r:+l:2!)\1
\end{HaskellD}
\noindent 
The first two transitions mirror the ones originally enabled; however, their names indicate that actions on the right-hand side of the parallel composition  causally depend on the preceding performed action (the prefix \HI{+l:^2!}).

In addition to these two forward transitions, there is one reversing transition that undoes  the previously executed action.


\end{document}